\documentclass[11pt,twoside]{article}
\usepackage{fancyhdr}
\usepackage[colorlinks,citecolor=blue,urlcolor=blue,linkcolor=blue,bookmarks=false]{hyperref}
\usepackage{amsfonts,epsfig,graphicx}
\usepackage{afterpage}
\usepackage{comment}
\usepackage{amsmath,amssymb,amsthm} 
\usepackage{fullpage}
\usepackage[T1]{fontenc} 
\usepackage{epsf} 
\usepackage{graphics} 
\usepackage{amsfonts,amsmath}
\usepackage[sort]{natbib} 
\usepackage{psfrag,xspace}
\usepackage{color,etoolbox}

\setlength{\textwidth}{\paperwidth}
\addtolength{\textwidth}{-6cm}
\setlength{\textheight}{\paperheight}
\addtolength{\textheight}{-4cm}
\addtolength{\textheight}{-1.1\headheight}
\addtolength{\textheight}{-\headsep}
\addtolength{\textheight}{-\footskip}
\setlength{\oddsidemargin}{0.5cm}
\setlength{\evensidemargin}{0.5cm}

\def\inprob{\stackrel{p}{\rightarrow}}

\def\ind{\perp\!\!\!\perp}
\def\T{{ \mathrm{\scriptscriptstyle T} }}

\newcommand{\var}{\text{var}}
\newcommand{\cov}{\text{cov}}

\newcommand{\Pb}{\mathbb{P}}
\newcommand{\Pn}{\mathbb{P}_n}

\newcommand{\E}{\mathbb{E}}
\newcommand{\R}{\mathbb{R}}

\DeclareMathOperator*{\argmin}{arg\,min}

\DeclareMathOperator{\sgn}{sgn}
\DeclareSymbolFont{bbold}{U}{bbold}{m}{n}
\DeclareSymbolFontAlphabet{\mathbbold}{bbold}
\newcommand{\one}{\mathbbold{1}}
\newtheorem{theorem}{Theorem}
\newtheorem{lemma}{Lemma}
\newtheorem{corollary}{Corollary}
\newtheorem{proposition}{Proposition}

\theoremstyle{definition}

\newtheorem{example}{Example}
\theoremstyle{remark}
\newtheorem{assumption}{Assumption}

\newtheorem{remark}{Remark}

\usepackage{chngcntr}
\newcounter{preexample}
\counterwithin{example}{preexample}

\newcommand{\examplegroup}{\refstepcounter{preexample}}

\let\hat\widehat

\begin{document}

\def\spacingset#1{\renewcommand{\baselinestretch}%
{#1}\small\normalsize} \spacingset{1}

\raggedbottom
\allowdisplaybreaks[1]


  \title{\vspace*{-.45in} {Semiparametric Counterfactual Density Estimation}}
  \author{\\ Edward H. Kennedy, Sivaraman Balakrishnan, Larry Wasserman  \\ \\
    Department of Statistics \& Data Science \\
    Carnegie Mellon University \\ \\ 
    \texttt{\{edward, siva, larry\} @ stat.cmu.edu} \\
\date{}
    }

  \maketitle
  \thispagestyle{empty}

\begin{abstract}
Causal effects are often characterized with averages, which can give an incomplete picture of the underlying counterfactual distributions. Here we consider estimating the entire counterfactual density and generic functionals thereof.  We focus on two kinds of target parameters.  The first is a density approximation, defined by a projection onto a finite-dimensional model using a generalized distance metric, which includes $f$-divergences as well as $L_p$ norms. The second is the distance between counterfactual densities, which can be used as a more nuanced effect measure than the mean difference, and as a tool for model selection. We study nonparametric efficiency bounds for these targets, giving results for smooth but otherwise generic models and distances. Importantly, we show how these bounds connect to means of particular non-trivial functions of counterfactuals, linking the problems of density and mean estimation. We go on to propose doubly robust-style estimators for the density approximations and distances, and study their rates of convergence, showing they can be optimally efficient in large nonparametric models. We also give analogous methods for model selection and aggregation, when many models may be available and of interest.  Our results all hold for generic models and distances, but throughout we highlight what happens for particular choices, such as $L_2$ projections on linear models, and KL projections on exponential families. Finally we illustrate by estimating the density of CD4 count among patients with HIV, had all been treated with combination therapy versus zidovudine alone, as well as a density effect.  
Our results suggest combination therapy may have increased CD4 count most for high-risk patients.
Our methods are implemented in the freely available R package \emph{npcausal} on GitHub. 
\end{abstract}

\medskip

\noindent
{\it Keywords: causal inference, density estimation, influence function, model misspecification, semiparametric theory.}  \\

\section{Introduction}

It is very common in causal inference to quantify causal effects with means. The classic average treatment effect (ATE) parameter, for instance, measures the difference in mean outcome had all versus none in a population been treated. This can certainly be a useful summary, but it can also miss potentially important differences in the distributions of the counterfactual outcomes, beyond a simple mean shift. To illustrate, consider the densities in Figure \ref{fig:densities}, which all have exactly the same mean and variance. These would be indistinguishable with the ATE,  or any other measure that did not look past the first two moments. \\

\begin{figure}[h!]
\begin{center}
{\includegraphics[width=.67\textwidth]{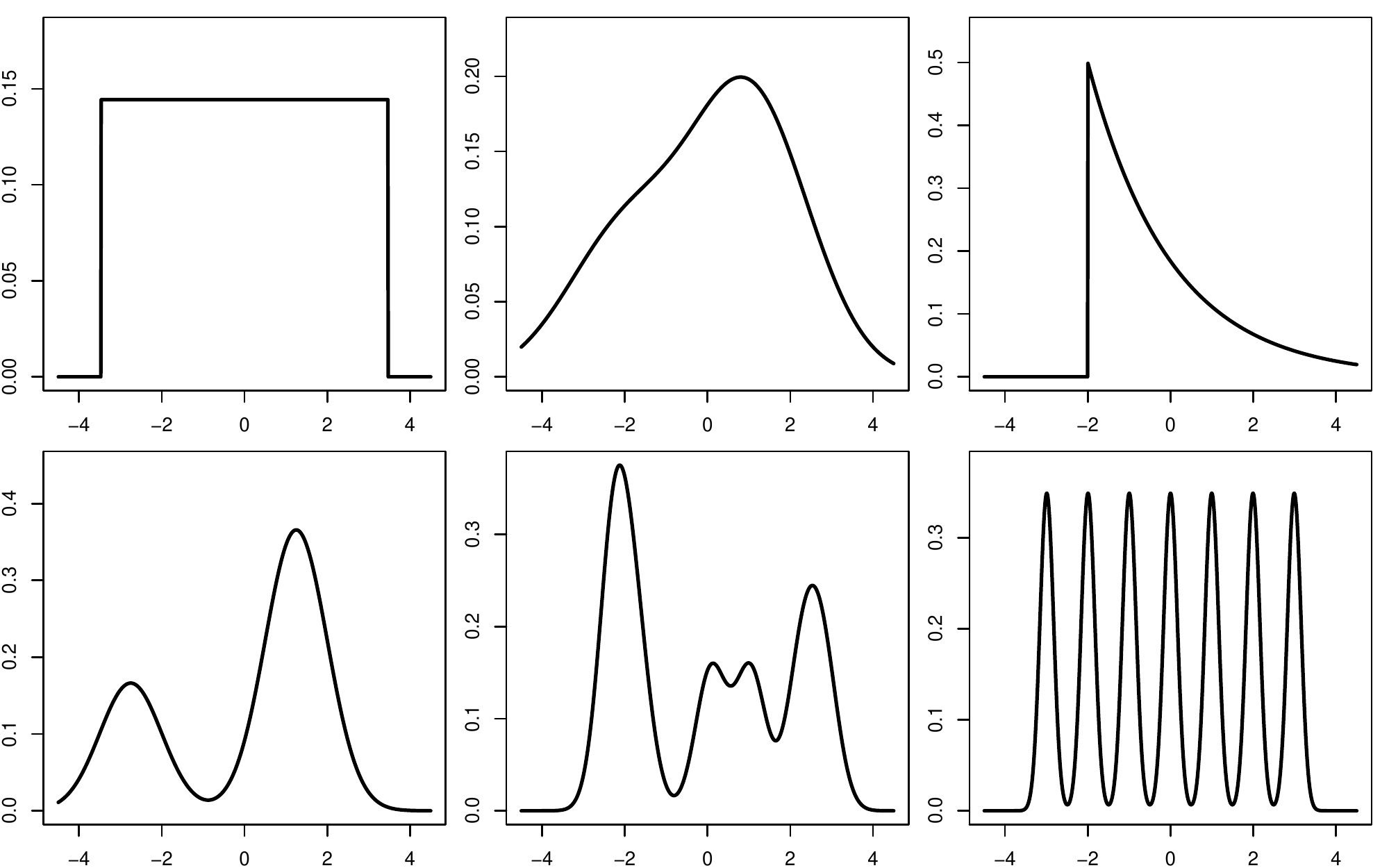}}
\caption{\em Densities for six distributions, all with mean zero and variance four.} \label{fig:densities}
\end{center}
\end{figure}

In general, it can be very practically useful to know the shape of the counterfactual density. If the counterfactual density differed at all under treatment versus control (or under any other generic interventions), this would imply treatment had \emph{some} effect, even if the ATE were zero. The presence of skew would indicate that some subjects have relatively extreme responses to treatment; next steps could include trying to understand who these subjects are, and why their responses are unusual. Similarly, multimodal structure could point to  the existence of underlying subgroups with differential responses to treatment, which could be important for optimizing treatment policies. Contrasting the shape of the density under different interventions could inform hypotheses about how treatment works, e.g., perhaps it works by reducing variance, or driving up negative outcomes. This could help enhance future versions of treatment, or motivate the development of new treatments altogether.   \\

There is a large literature on distributional treatment effects defined in terms of quantiles, or cumulative distribution functions (CDFs), with a similar goal of moving beyond simple mean summaries to study the entire counterfactual distribution \citep{abadie2002bootstrap, melly2005decomposition, chernozhukov2005iv, machado2005counterfactual, firpo2007efficient, rothe2010nonparametric, wang2010empirical, fortin2011decomposition, zhang2012causal, chernozhukov2013inference, frolich2013unconditional, diaz2017efficient, wang2018quantile}. However, the challenges and methods are substantially different for density estimation. This is largely a result of the fact that the CDF at $y$ is the mean of the thresholded outcome $\one(Y \leq y)$, so that counterfactual CDF estimation mostly reduces to counterfactual mean estimation, after replacing the outcome with an indicator. A related difference is that the CDF is pathwise differentiable in a nonparametric model, whereas the density function is not \citep{bickel1993efficient,van2003unified}. This is also true in the standard observational setup, where CDFs can be estimated at $n^{-1/2}$ rates with sample averages, while density estimation requires more careful balancing of bias and variance, with slower rates arising depending on underlying smoothness \citep{wasserman2006all, tsybakov2009introduction}. Beyond this issue of statistical complexity, there are other trade-offs in targeting CDFs versus densities. One is that, although CDFs are easier to estimate nonparametrically, densities are arguably more visually appealing and interpretable to practitioners. We view CDFs and densities as complementary pieces of the distributional puzzle. \\

Unlike distribution function estimation, the literature on counterfactual density estimation appears much more sparse. In what was perhaps the first study of the problem, \citet{dinardo1996labor} used a reweighted kernel estimator to estimate effects of US labor market factors on wages.  However, the statistical properties of this proposed approach were not examined. 
\citet{robins2001inference} proposed a doubly robust version of the reweighted kernel estimator, and conjectured it would achieve usual density estimation rates under smoothness and other conditions. \citet{van2003cross} and \citet{rubin2006extending} studied general cross-validation-based approaches for model selection in the presence of nuisance functions, and suggested minimizing  counterfactual KL or $L_2$ loss for density estimation, but did not detail the statistical properties. More recently, \citet{westling2020unified} tackled the related problem of density estimation for right-censored outcomes, 
proposing new estimators that can attain $n^{-1/3}$ rates, but under an assumption that the density is monotone. \citet{kim202xcausal} analyzed  a version of the doubly robust estimator from \citet{robins2001inference}, showing its conjectured oracle properties, and used it to estimate the (nonsmooth) $L_1$ distance between counterfactual densities.  \\

Somewhat surprisingly, none of the known work above on counterfactual density estimation considers a semiparametric approach, where the density is approximated with a finite-dimensional model. Our work aims to fill this gap in the literature, while also providing data-driven model selection and aggregation tools. A separate contribution is our study of generic density-based effects, which characterize the distance between counterfactual densities, using a generalized notion of distance that includes $f$-divergences as well as $L_p$ norms. \\

The structure of our paper is as follows. After introducing some basics and causal assumptions in Section \ref{sec:setup}, in Section \ref{sec:targetparams} we detail the different kinds of target parameters we consider. The first (described in Section \ref{sec:densityfns}) is an approximation of the density itself, defined by a projection onto a finite-dimensional model (\ref{sec:densityfns_models}) using a generalized distance metric (\ref{sec:densityfns_distances}), which includes $f$-divergences as well as $L_p$ norms. Importantly, we show in Section \ref{sec:densityfns_momcon} that projection parameters for smooth models and distances can be framed as solutions to moment conditions, providing a link between counterfactual densities and means (of functions of counterfactuals). The second parameter we consider (described in Section \ref{sec:denseff}) is the distance between counterfactual densities, which can be used as a new more nuanced effect measure, or as a tool for model selection (as in Section \ref{sec:modsel}). In Section \ref{sec:efficiency} we study nonparametric efficiency bounds, by characterizing the efficient influence functions of  approximated density functions in Section \ref{sec:densityfnseif}, and density effects in Section \ref{sec:denseffeif}. These follow from a master lemma in Section \ref{sec:efficiency}, which gives a von Mises expansion for generic integral functionals of the counterfactual density, and so may be of independent interest. In Section \ref{sec:estimation} we propose doubly robust-style estimators for the density approximations and distances, and study their rates of convergence, showing for example that they can be $n^{-1/2}$ consistent, asymptotically normal, and optimally efficient under weak high-level conditions on nuisance estimation error. All our results hold for smooth but otherwise arbitrary models and distances. However, in various corollaries, we also highlight specific expressions for typical choices of models and distances, such as $L_2$ projections on linear models, and KL projections on exponential families.   Finally in Section \ref{sec:illustration} we use our proposed methods to estimate counterfactual densities and density effects of combination therapy (versus zidovudine alone) on CD4 count, among patients with HIV. Our results show treatment effects beyond a mean shift, suggesting that combination therapy may have increased CD4 count most for high-risk patients. \\

\section{Setup}
\label{sec:setup}

We assume access to  an iid sample $(Z_1,...,Z_n)$ of $Z=(X,A,Y) \sim \Pb$
where $X \in \R^d$ are covariates, $A \in \R$ is a treatment or exposure, and $Y \in \R$ is a continuous outcome. We let 
\begin{align}
\pi_a(x) &= \Pb(A=a\mid X=x) \label{eq:ps} \\
\int_{\mathcal{B}} \eta_a(y \mid x) \ dy &= \Pb(Y \in \mathcal{B} \mid X=x,A=a) \ \text{ for measurable $\mathcal{B}$}
\end{align}
denote the propensity score (i.e., chance of being treated at level
$A=a$ given covariates) and conditional outcome density,
respectively. In this work we focus on discrete treatments, but in a companion paper we consider the continuous case.  \\

We study ``semiparametric'' estimation of the covariate-adjusted marginal density 
\begin{equation}
p_a(y) = \int \eta_a(y \mid x) \ d\Pb(x) \label{eq:dens}
\end{equation}
i.e., the conditional outcome density averaged over the covariates, as well as functionals thereof. \\

\begin{remark} \label{rem:nonpar}
Although we refer to our work in this paper as semiparametric, in
reality it is all done within a fully nonparametric model. As
described in more detail starting in Section
\ref{sec:densityfns}, the models we consider are only ever used as
tools for defining nonparametric approximations, and corresponding
projection parameters, and are never assumed to be correct
descriptions of the underlying true data-generating process. Further,
our results on estimating counterfactual density \emph{functionals}
(e.g., Sections \ref{sec:denseff} and \ref{sec:denseffeif}) do not
require any approximating models, and so are nonparametric in the
usual sense. \\
\end{remark}

\bigskip

We note that the density \eqref{eq:dens} is
different from the marginal density of $Y$, since the treatment is fixed at $A=a$
in the conditioning; it is also not equal to the unadjusted conditional density $p(y \mid a)$. 
Instead, \eqref{eq:dens} is the density $p(y^a)$ of the counterfactual variable $Y^a$ (i.e., 
the outcome that \emph{would have been observed} if treatment were set to $A=a$),
 if the following assumptions hold:
\begin{assumption}[Positivity]
$\Pb\{ \pi_a(X) \geq \epsilon \}=1$ for some $\epsilon>0$. 
\end{assumption}
\begin{assumption}[Consistency]
$Y=Y^a$ if $A=a$.
\end{assumption}
\begin{assumption}[Exchangeability]
$A \ind Y^a \mid X$. 
\end{assumption} 
Positivity ensures all subjects have some chance at receiving
treatment level $A=a$. Consistency can be viewed as ruling out
interference, for example, where a subject's counterfactual can depend not only on
how they were treated, but how other subjects were treated as
well. Exchangeability says the treatment is as good as randomized
within levels of the observed covariates, and requires that
sufficiently many relevant confounders are collected. Each of these
assumptions can be weakened in various ways, at the expense of losing
point identification of the marginal counterfactual distribution.
  Nonetheless, under only the positivity
assumption, all our statistical results will hold relative to the
observational quantity in \eqref{eq:dens}, regardless of whether the
causal Assumptions 2--3 are violated or not. \\

\section{Target Parameters}
\label{sec:targetparams}

In this section we detail the two kinds of quantities we consider
estimating. The first is an approximation of the counterfactual
density itself, defined via a projection in some distributional distance. The
second is a distance measure, e.g., a density-based causal effect measuring the difference
between counterfactual densities in terms of general
$f$- or other divergences. The latter gives a more nuanced picture of how the
counterfactual densities differ, compared to the usual ATE, for example. 
Finally in Section \ref{sec:modsel} we describe how these two kinds of target quantities 
can be adapted for the purposes of model selection and aggregation.  

\subsection{Density Functions}
\label{sec:densityfns}

\subsubsection{Models}
\label{sec:densityfns_models}

First we consider approximations of the counterfactual density $p_a(y)$ based on some
specified model $\{g(y;\beta) : \beta \in \R^d \}$. 
We mostly focus on the finite-dimensional parametric case with $\beta \in \R^d$, 
but more generally one could take $\beta$ to be infinite-dimensional in some $L^p$ space, 
or to belong to a subset of $\R^d$ such as the standard simplex. 
Note that $\beta(a)$ depends on $a$
but for now we suppress this dependence in the notation
and simply write $\beta$.
Here are some examples. \\

\examplegroup

\begin{example}[Exponential family] \label{ex:expfam}
Let $b(y)=\{b_1(y),...,b_d(y)\}^\T$ denote a vector of known basis functions. Then we can project onto the exponential family
\begin{equation}
g(y;\beta) =  \exp\Big\{ \beta^\T b(y) - C(\beta) \Big\} \label{eq:expfam}
\end{equation}
where $C(\beta) = \log \int \exp\{ \beta^\T b(y) \} \ dy$ so that
$\int g(y;\beta) \ dy = 1$. Typical exponential family notation takes
$\beta_1=1$ and sets $b_1(y)=\log h(y)$ for some known base
measure $h$. 
\end{example}

Although we refer to Example \ref{ex:expfam}  as an exponential family, it can just as well be viewed as a truncated series expansion used together with a log link function. In the next example we consider a truncated series with an identity link function. 

\begin{example}[Truncated series] \label{ex:series}
Let $b(y)=\{b_1(y),...,b_d(y)\}^\T$ denote a vector of known basis functions, and $q(y)$ a known base density (e.g., uniform). Then we can project onto the linear basis expansion
\begin{equation*}
g(y;\beta) = q(y) + \sum_{j=1}^d \beta_j b_j(y) 
\end{equation*}
where we can take $\int b_j(y) \ dy = 0$ so that the projection integrates
to one.  A natural choice when $Y \in [0,1]$ would be to take $q(y)=1$ and $b(y)$ the cosine basis
\begin{equation}
b_j(y) = \sqrt{2} \cos(\pi j y) \label{eq:cosbasis}
\end{equation}
which satisfies $\int b_j(y) \ dy=0$ and $\int b_j(y) b_k(y) \ dy = \one(j=k)$ on the unit interval. 
One could alternatively take $q(y)=0$ and let $b_j(y)$ be (the linear span of) a collection of $d$
candidate densities, in which case the above could be viewed as a linear
aggregation \citep{rigollet2007linear}.
Another related option would be to use a linear approximation for the square root
of the density $\sqrt{g(y;\beta)}=\sum_j\beta_j b_j(y)$, so that
$g(y;\beta) = \sum_j \sum_k \beta_j \beta_k b_j(y) b_k(y)$
\citep{pinheiro1997estimating, chen2002monte}. Then the model would
integrate to one if the basis functions were orthonormal ($\int b_j(y)
b_k(y) \ dy=0$ and $\int b_j(y)^2 \ dy = 1$) and $\sum_j \beta_j^2=1$. 
\end{example}

\begin{example}[Gaussian mixture model]
Let $(\mu_1,...,\mu_k)$ denote a vector of means, $(\sigma_1,...,\sigma_k)$ a vector of positive standard deviations, $(\varpi_1,...,\varpi_k)$ positive mixing proportions with $\sum_j \varpi_j =1$, and $\phi$ the standard normal density. Then the standard Gaussian mixture model is
\begin{equation*}
g(y;\beta) = \sum_{j=1}^k  \varpi_j \left( \frac{1}{\sigma_j} \right) \phi \left( \frac{y-\mu_j}{\sigma_j} \right)
\end{equation*}
where $\beta = \{(\varpi_1,\mu_1,\sigma_1^2), ...., (\varpi_k,\mu_k,\sigma_k^2)\}$. \\
\end{example}

\bigskip

Now, based on the above approximations, a primary goal is to estimate the projection parameter
\begin{equation}
\beta_0  = \argmin_{\beta \in \R^p} \ D_f\Big(p_a(y) , g(y; \beta)\Big)  \label{eq:target}
\end{equation}
where $D_f$ is a distributional distance measure of the form
\begin{equation}
D_f(p,q)  = \int f( p , q ) q(y) \ dy \label{eq:distance}
\end{equation}
for some given discrepancy function $f: \R^2 \rightarrow \R$.  \\

\begin{remark}
In contrast to typical $f$-divergences \citep{renyi1961measures,
sason2016f,ali1966general,csiszar1967information}, we allow the
function $f$ to have two arguments, one for each distribution; this
allows us to capture not only $f$-divergences but also other distances
such as those based on $L_p$ norms. (The usual $f$-divergence takes
$f(p,q)=f(p/q)$ for some single-argument function, and so only depends
on the density ratio). In a slight abuse of terminology we sometimes refer to
 \eqref{eq:distance} as  a distance, even though for some of our choices of $f$ it will be an asymmetric divergence not satisfying the triangle inequality. \\
\end{remark}

Before giving examples of distances, we first discuss the
interpretation of our projection parameter \eqref{eq:target}. Mathematically, $\beta_0$ is the parameter of the best-fitting model of the
form $g(y;\beta)$, i.e., the parameter value that makes $g(y;\beta)$
closest (in corresponding distance) to the true density $p_a(y)$. If
the model $g$ is correctly specified, then $D_f( p_a(y) , g(y;\beta_0))=0$ and so 
$g(y;\beta_0)=p_a(y)$ is simply the true counterfactual density; however, the projection
\eqref{eq:target} remains well-defined even under model
misspecification. This is akin to the well-known concept of a \emph{best
linear predictor} in standard linear regression \citep{white1980using}. The
projection approach, where a model is not assumed correct but instead
only used for defining approximations, has been used widely throughout
statistics \citep{huber1967behavior, beran1977minimum, buja2019models1, buja2019models2,
white1982maximum, white1996estimation, tsybakov2003optimal,
wasserman2006all, rinaldo2010generalized, rakhlin2017empirical} 
as
well as in causal inference \citep{van2006statistical, neugebauer2007nonparametric,
chernozhukov2018generic, kennedy2019robust, cuellar2020non,
semenova2020debiased}, though not in the counterfactual density estimation
context. \\

\begin{remark} \label{rem:models}
Since we only use models as tools to define
approximations, all our results are formally nonparametric, as
mentioned in Remark \ref{rem:nonpar} and illustrated in subsequent
theorems. This raises some interesting philosophical issues about the
role of assumptions and corresponding bias-variance trade-offs. In
particular, we can imagine a rough taxonomy of stances one might take
in estimation problems like this one:
\begin{enumerate}
 \setlength\itemsep{0em}
\item[(i)] model-ist: My finite-dimensional/parametric representation is \emph{the} correct one.
\item[(ii)] model-agnostic: I may \emph{use} a finite-dimensional model, but I do not know or require that it is a perfectly accurate picture of the truth.
\item[(iii)] anti-model-ist: No parametric model I can imagine contains the truth, and I do not care about approximations.
\end{enumerate}
The model agnostic view
is often captured
by the famous quotes
``All models are wrong but some are useful'' (George Box)
and 
``Use models but don't believe them''
(possibly due to John Tukey).
Of course, in practice, how much one relies on models is a continuum,
and so any particular approach may not fall entirely in one of the
three camps above. Similarly, our taxonomy uses parametric models as a
benchmark, but one could just as well replace with a different
assumption set (e.g., H{\"o}lder-smooth with index $s\geq 4$ versus $s
< 4$). Nevertheless we find the above framing useful if imperfect.
In this paper, we mostly take the stance of the model-agnostic, though
we flirt with anti-model-ism in the data-driven model selection
approaches of Sections \ref{sec:modsel} and \ref{sec:modselest} (and
we are fully anti-model-ist in a companion paper). We also accept that
each approach has advantages and disadvantages. The model-ist will do
 well when the model is correct, but could unknowingly suffer
large bias otherwise. The anti-model-ist is most free from the
constraints of human imagination (as they do not need to posit a parametric model), but with a more ambitious target can
also suffer larger errors. The model-agnostic has a bit of
the best of both worlds: when the model is correct, they may hope to do nearly as
well as the model-ist, and when the model is wrong, their inference can 
still be valid for a still well-defined approximation. Of course, if the
model is \emph{very} wrong, the approximation may not be practically
useful, no matter how well-defined it is; thus there can be important challenges in defining a useful approximating model and distance.  \\
\end{remark}

\subsubsection{Distances}
\label{sec:densityfns_distances}

Now we give some examples of the distances we focus on in this paper: \\

\examplegroup

\begin{example}[$L_2^2$]
If $f(p,q)=(p-q)^2/q$ then $D_f(p,q)=\|p-q\|_2^2$ is the squared $L_2$ distance
\begin{equation*}
\|p_a(y) - g(y;\beta) \|_2^2 = \int  \Big( p_a(y) - g(y;\beta) \Big)^2 \ dy . 
\end{equation*}
\end{example}

\begin{example}[Kullback-Leibler]
If $f(p,q)=(p/q) \log(p/q)$ then $D_f(p , q)=\text{KL}(p,q)$ is the Kullback-Leibler divergence 
\begin{equation*}
\text{KL}\Big(p_a(y) , g(y;\beta)\Big) = \int  \log\left( \frac{p_a(y)}{g(y;\beta)} \right)  p_a(y) \ dy . 
\end{equation*}
\end{example}

\begin{example}[$\chi^2$]
If $f(p,q)=(p/q-1)^2$ then $D_f(p , q)=\chi^2(p,q)$ is the $\chi^2$ divergence 
\begin{equation*}
\chi^2\Big(p_a(y) , g(y;\beta)\Big) = \int \frac{ \{ p_a(y) - g(y;\beta) \}^2 } {g(y;\beta)}  \ dy  . 
\end{equation*}
\end{example}

\begin{example}[Hellinger]
If $f(p,q)=(\sqrt{p/q}-1)^2$ then $D_f(p , q)=H^2(p,q)$ is the squared Hellinger divergence 
\begin{equation*}
H^2\Big(p_a(y) , g(y;\beta)\Big) = \int \left( \sqrt{p_a(y)} - \sqrt{g(y;\beta)}  \right)^2 \ dy . 
\end{equation*}
\end{example}

\begin{example}[Smoothed Total Variation]
If $f(p,q) = \frac{1}{2q} | p-q|= \frac{(p-q)}{2q} \sgn(p-q)$ then $D_f(p , q) = \text{TV}(p,q)=\frac{1}{2} \| p - q \|_1 $ is the total variation distance (and half the $L_1$ distance). Note $f(p,q)$ is not differentiable at $p/q=1$. Smooth versions can be obtained by approximating the absolute value or sign functions in $f$. 
For example, let $\nu_t(y)$  be an approximation of the absolute value function $|y|$, with parameter $t$ controlling the approximation error. For example one could use $\nu_t(y)=y \tanh(t y)$ or $\nu_t(y)=y \text{erf}(t y)$ or a best polynomial approximation of degree $t$. Then taking $f(p,q)=f_t(p,q) = \frac{1}{2q}  \nu_t(p-q)$ gives a smoothed total variation $D_f(p,q) = \text{TV}^*(p,q)$ with
\begin{equation*}
\text{TV}^*\Big(p_a(y) , g(y;\beta)\Big) = \frac{1}{2} \int  \nu_t \Big\{ p_a(y) - g(y;\beta) \Big\} \ dy .
\end{equation*} 
There exist polynomial and rational approximations $\nu_t(y)$ of degree $t$ ensuring that $| \text{TV}(p,q) - \text{TV}^*(p,q)|$ is of order $t^{-1}$ and $\exp(-t)$, respectively \citep{newman1964rational}.  
We also note that the Hellinger divergence is closely related to total variation in the sense that $H^2(p,q)/2  \leq  \text{TV}(p,q) \leq   H(p,q)$ for any densities $p,q$. \\
\end{example}

Figure \ref{fig:projex} shows a few projections of a true density onto
a truncated trigonometric series with six terms, using four different distances ($L_2$, Kullback-Leibler, $\chi^2$, and Hellinger). The projections are all very similar in both cases. However, we note that, as discussed for example in \citet{beran1977minimum}, Hellinger projections should be more stable and robust to outliers or contamination, compared to for example KL. The projections are closer to the true density for the first simpler Gaussian mixture, and are more of a rough approximation for the second more complex mixture.  \\

\begin{figure}[h!]
\begin{center}
{\includegraphics[width=.49\textwidth]{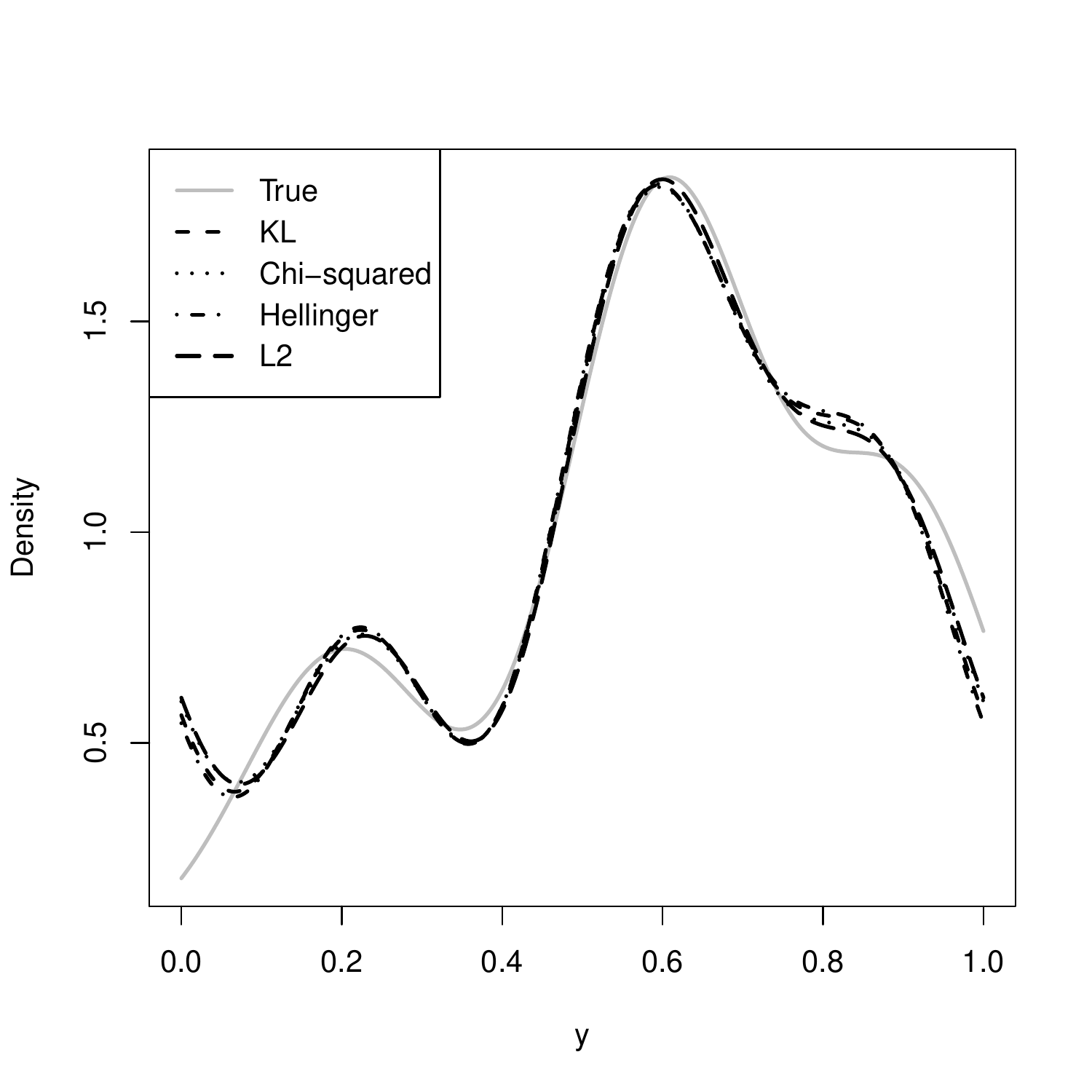}}
{\includegraphics[width=.49\textwidth]{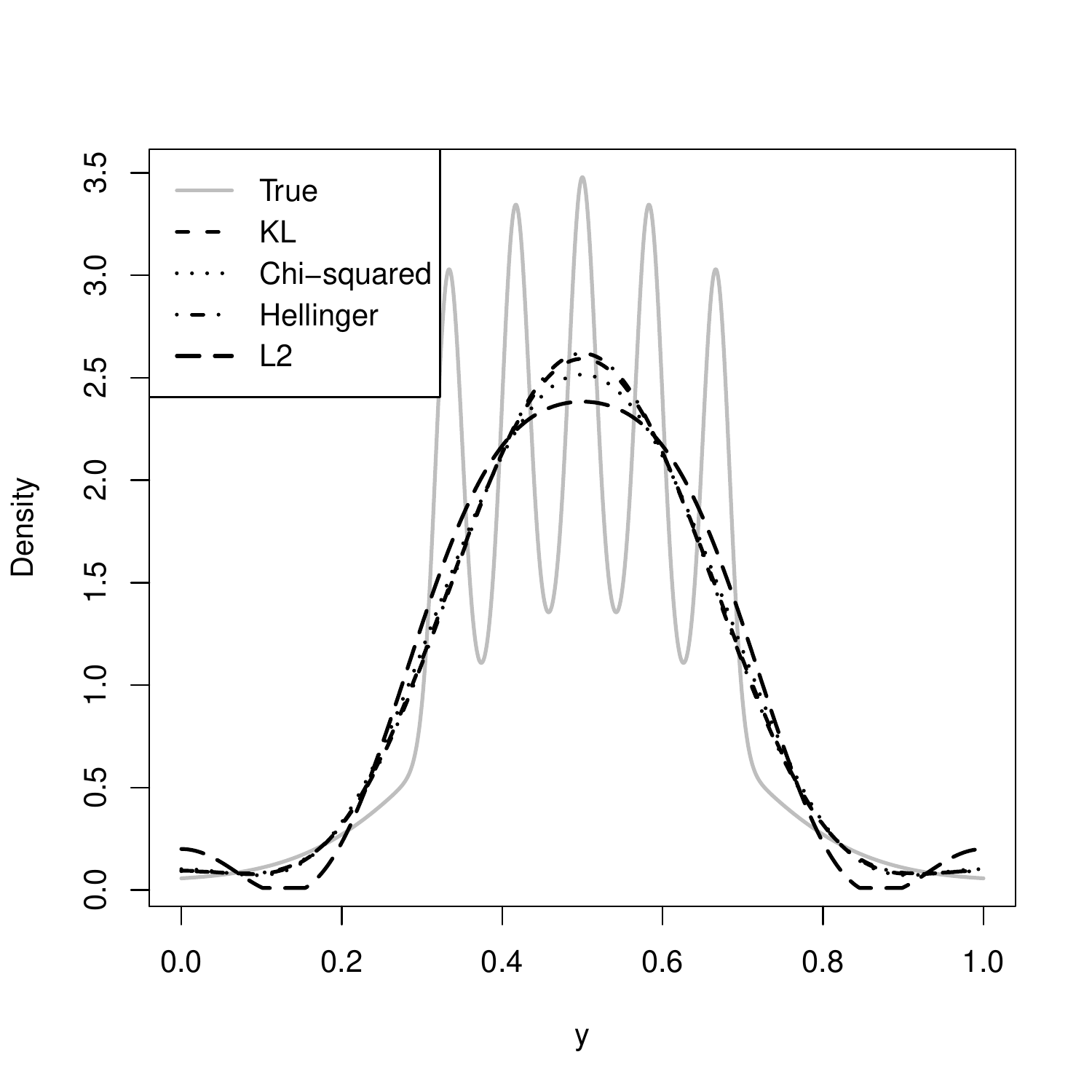}}
\caption{\em Projections of a truncated Gaussian mixture (left) and the
Bart Simpson density (right) onto a trigonometric basis with six
terms, using $L_2$ distance, along with Kullback-Leibler, $\chi^2$, and Hellinger
divergences. } \label{fig:projex}
\end{center}
\end{figure}

\subsubsection{Moment Conditions}
\label{sec:densityfns_momcon}

The next proposition shows how, for smooth distances, the projection parameter $\beta_0$ 
can be defined more explicitly than in equation \eqref{eq:target}, as a solution to a population moment condition, 
involving derivatives of the model $g(y;\beta)$ and the
function $f$. This links projection parameters to integral functionals of the counterfactual density (i.e., moments of transformations of counterfactuals), which is why our efficiency bounds and estimators in the next section resemble those for means of particular non-trivial functions of counterfactuals.  \\

\begin{proposition} \label{prop:mb}
Assume $g$ is differentiable in $\beta$, $f$ is differentiable in its second argument with derivative $f'_2(q_1,q_2)=\frac{\partial}{\partial q_2} f(q_1,q_2)$, and that the minimizer in
\eqref{eq:target} is unique. Then the projection parameter 
$$\beta_0 = \argmin_{\beta \in \R^p} \ D_f\Big( p_a(y) , g(y;\beta) \Big)  $$ 
can be expressed as a solution to the moment condition $m(\beta)=0$, where 
\begin{equation}\label{eq:mb}
m(\beta) \equiv \int \frac{\partial g(y;\beta)}{\partial \beta} 
\left\{ f\Big(p_a(y) , g(y;\beta) \Big) + g(y;\beta) f_2'\Big(p_a(y) , g(y;\beta) \Big) \right\} \ dy . 
\end{equation}
\end{proposition}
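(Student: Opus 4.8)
The plan is to recognize the stated moment map $m(\beta)$ as the gradient $\nabla_\beta D_f\big(p_a(y), g(y;\beta)\big)$ of the objective in \eqref{eq:target}, and then invoke first-order optimality at the minimizer. To begin I would substitute the definition \eqref{eq:distance} to write the objective explicitly as
\begin{equation*}
D_f\big(p_a(y), g(y;\beta)\big) = \int f\big(p_a(y), g(y;\beta)\big)\, g(y;\beta)\, dy ,
\end{equation*}
so that \eqref{eq:target} is the minimization of this integral over $\beta \in \R^p$.

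The core of the argument is a direct differentiation of the integrand. Abbreviating $g = g(y;\beta)$ and treating $p_a(y)$ as fixed (it carries no $\beta$-dependence), the product rule applied to $f(p_a(y), g)\, g$ gives
\begin{equation*}
\frac{\partial}{\partial \beta}\Big\{ f\big(p_a(y), g\big)\, g \Big\} = \frac{\partial f\big(p_a(y), g\big)}{\partial \beta}\, g + f\big(p_a(y), g\big)\, \frac{\partial g}{\partial \beta} .
\end{equation*}
Because only the second argument of $f$ depends on $\beta$, the chain rule yields $\partial_\beta f(p_a(y), g) = f_2'(p_a(y), g)\, \partial_\beta g$. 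Factoring out the common $\partial g/\partial\beta = \partial g(y;\beta)/\partial\beta$ collapses the two terms into the bracketed quantity $\{ f(p_a(y), g) + g\, f_2'(p_a(y), g) \}$, which is precisely the integrand of \eqref{eq:mb}. Interchanging differentiation and integration then identifies $\nabla_\beta D_f = m(\beta)$ exactly.

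To finish, I would note that the minimization in \eqref{eq:target} is over the open set $\R^p$, so the assumed unique minimizer $\beta_0$ is necessarily interior. Together with the differentiability hypotheses, this forces the gradient to vanish there, yielding the necessary first-order condition $m(\beta_0) = \nabla_\beta D_f\big(p_a(y), g(y;\beta)\big)\big|_{\beta = \beta_0} = 0$, which is the claimed characterization. Since the proposition asserts only that $\beta_0$ solves $m(\beta)=0$, the necessity direction established here suffices.

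The main obstacle is justifying the interchange of differentiation and integration: pointwise differentiability of $f$ and $g$ is not by itself enough to guarantee $\partial_\beta \int = \int \partial_\beta$. What is additionally needed is a locally integrable dominating envelope for $\big| \partial_\beta\{ f(p_a(y), g)\, g \} \big|$, uniform for $\beta$ in a neighborhood of $\beta_0$, so that dominated convergence licenses passing the difference-quotient limit through the integral. I would record this as a mild implicit regularity condition accompanying the stated smoothness; for the smooth models and distances of Examples \ref{ex:expfam}--\ref{ex:series}, constructing such an envelope is routine.
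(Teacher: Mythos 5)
Your proposal is correct and matches the paper's argument, which is exactly the product-rule/chain-rule differentiation of $D_f\big(p_a(y),g(y;\beta)\big)=\int f\big(p_a(y),g(y;\beta)\big)\,g(y;\beta)\,dy$ followed by the first-order condition at the interior minimizer (the paper simply remarks that the proof ``follows from the chain rule''). Your added observation about needing a dominating envelope to justify $\partial_\beta\int=\int\partial_\beta$ is a reasonable explicit acknowledgment of a regularity condition the paper leaves implicit.
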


\bigskip

The proof of Proposition \ref{prop:mb} follows from the chain rule; all subsequent proofs are given in Appendix \ref{sec:proofs}. Throughout we assume there is a unique solution to $m(\beta)=0$.  Next we show how the moment condition defining $\beta_0$ simplifies for particular distances. \\

\begin{corollary} \label{cor:mb}
The quantity $f\Big(p_a(y) , g(y;\beta) \Big) + g(y;\beta) f_2'\Big(p_a(y) , g(y;\beta) \Big) $ in the integrand of the moment \eqref{eq:mb} equals  
\begin{equation*}
 \begin{cases} 
 2 \Big\{ g(y;\beta) - p_a(y) \Big\} & \text{ if } D_f=L_2^2 \\[10pt]
\displaystyle 1 - \frac{p_a(y)}{g(y;\beta)} & \text{ if } D_f=\text{KL} \\[15pt]
\displaystyle 1 - \left\{ \frac{p_a(y)}{g(y;\beta)} \right\}^2  & \text{ if } D_f=\chi^2 \\[15pt]
\displaystyle 1 - \sqrt{\frac{p_a(y)}{g(y;\beta)} }  & \text{ if } D_f=H^2 \\[15pt]
\displaystyle -\nu_t'\Big\{ p_a(y) - g(y;\beta) \Big\}/2 &  \text{ if } D_f=\text{TV}^* .
\end{cases}
\end{equation*}
\end{corollary}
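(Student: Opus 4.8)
The plan is to reduce all five cases to a single differentiation identity. Writing $p = p_a(y)$ and $q = g(y;\beta)$ for brevity, the first step is to observe that the bracketed integrand in \eqref{eq:mb} is an exact derivative: by the product rule,
$$ f(p,q) + q\, f'_2(p,q) = \frac{\partial}{\partial q}\Big\{ q\, f(p,q) \Big\}. $$
The point of this rewriting is that $q\, f(p,q)$ is precisely the integrand appearing in $D_f(p,q) = \int f(p,q)\, q\, dy$, and those integrands are already recorded in closed form in the distance examples of Section \ref{sec:densityfns_distances}. So the entire computation collapses to differentiating each known divergence integrand in its second argument $q$, with $p$ held fixed.

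Second, I would simply carry out this one-variable differentiation case by case. For $L_2^2$ the integrand is $q\,f = (p-q)^2$, whose $q$-derivative is $2(q-p)$. For $\chi^2$ it is $q\,f = (p-q)^2/q = p^2/q - 2p + q$, giving $1 - (p/q)^2$. For the Hellinger case $q\,f = (\sqrt{p} - \sqrt{q})^2$, giving $1 - \sqrt{p/q}$. For smoothed total variation $q\,f = \tfrac12 \nu_t(p-q)$, and the chain rule gives $-\tfrac12 \nu_t'(p-q)$. Each of these lands exactly on the stated expression, so these four entries are routine.

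The one case that needs care is Kullback-Leibler. There the integrand is $q\,f = p \log(p/q) = p\log p - p \log q$, whose $q$-derivative is $-p/q$, not the displayed $1 - p/q$. The only non-mechanical part of the argument is therefore reconciling this constant offset. The resolution is that an additive constant in the bracketed integrand does not change the moment condition $m(\beta)=0$: since each approximating model satisfies $\int g(y;\beta)\, dy = 1$ for all $\beta$, we have $\int \frac{\partial g(y;\beta)}{\partial \beta}\, dy = \frac{\partial}{\partial \beta} \int g(y;\beta)\, dy = 0$, so multiplying any constant by $\partial g/\partial \beta$ and integrating contributes nothing to $m(\beta)$. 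Hence $-p/q$ and $1 - p/q$ induce the same moment, and the displayed form is chosen only to parallel the $\chi^2$ and Hellinger expressions $1 - (p/q)^2$ and $1 - \sqrt{p/q}$.

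I would state this normalization-up-to-constants convention once, so that it simultaneously covers the KL entry (and licenses writing any other entry in the $1 - (\cdot)$ form), after which the proof is complete. The expected obstacle is thus not any hard calculation but rather flagging clearly that the corollary reports representatives of an equivalence class modulo constants killed by $\int \partial_\beta g\, dy = 0$, so that the apparent mismatch in the KL row is understood as a deliberate, moment-preserving choice rather than an error.
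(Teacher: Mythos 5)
Your computation is correct, and for four of the five rows it lands exactly on the displayed expressions. The paper's own proof is nothing more than a table of the distance-specific $f$, $f_1'$, $f_2'$, $f_{21}''$ followed by direct substitution of $f$ and $f_2'$ into $f+qf_2'$, so your product-rule repackaging $f(p,q)+q\,f_2'(p,q)=\frac{\partial}{\partial q}\{q f(p,q)\}$ is a mild but genuine streamlining: it lets you differentiate the divergence integrands $q f(p,q)$ already written in closed form in Section \ref{sec:densityfns_distances}, rather than computing $f_2'$ separately for each case and then adding. Your handling of the Kullback--Leibler row is also correct and in fact goes beyond the paper: the literal value of $f+qf_2'$ there is $-p_a(y)/g(y;\beta)$ (as one also gets by substituting the paper's own listed $f_2'$), and the displayed $1-p_a(y)/g(y;\beta)$ differs from it by the constant $1$. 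That offset is harmless for exactly the reason you give --- for the normalized models $\int g(y;\beta)\,dy=1$ identically in $\beta$, so $\int \frac{\partial g(y;\beta)}{\partial\beta}\,dy=0$ and any additive constant in the bracket is annihilated in $m(\beta)$; this is also what keeps the display consistent with Corollary \ref{cor:klmbeta}. The paper's appendix does not comment on this offset at all, so stating the ``equal modulo constants killed by $\int \partial_\beta g\,dy$'' convention explicitly, as you do, is the right call.
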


\bigskip

Corollary \ref{cor:mb} shows how the moment $m(\beta)$
essentially reduces to functionals of the counterfactual density for
particular distances: simple means for $L_2^2$ and KL, a quadratic
functional for $\chi^2$, and a square root functional for $H^2$. For the smoothed TV distance, it depends on the form of the absolute value approximation (e.g., for $\nu_t$ a $t$ degree polynomial approximation, the moment $m(\beta)$ would be an integral of a $t-1$ degree polynomial in the countef).  \\

In the following corollaries we show how the form of the moment condition is particularly straightforward when based on $L_2$ or KL divergence with series models and exponential families, respectively. \\

\begin{corollary} \label{cor:l2mbeta}
If $D_f=L_2^2$ then 
$$ m(\beta) = 2 \int \frac{\partial g(y;\beta)}{\partial \beta} \Big\{ g(y;\beta) - p_a(y) \Big\} \ dy . $$
Therefore if the support of $Y$ is $[0,1]$, and $g(y;\beta) = 1 + \beta^\T b(y)$ is the truncated series in Example \ref{ex:series} then
\begin{equation}
\beta = \E\Big\{ b(Y^a) \Big\} \label{eq:l2beta}
\end{equation}
when $b(\cdot)$ is an orthogonal series with $\int b_j(y) \ dy = 0$ and $\int b_j(y) b_k(y) \ dy = \one(j=k)$.
\end{corollary}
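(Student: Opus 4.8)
The plan is to combine Proposition \ref{prop:mb} with the $L_2^2$ row of Corollary \ref{cor:mb}, and then specialize the resulting moment condition to the linear series model of Example \ref{ex:series}. First I would establish the general $L_2^2$ form of the moment. Corollary \ref{cor:mb} already records that, when $D_f = L_2^2$, the bracketed term $f\big(p_a(y), g(y;\beta)\big) + g(y;\beta) f_2'\big(p_a(y), g(y;\beta)\big)$ appearing in the integrand of \eqref{eq:mb} equals $2\{g(y;\beta) - p_a(y)\}$. Substituting this directly into \eqref{eq:mb} yields
\[
m(\beta) = 2\int \frac{\partial g(y;\beta)}{\partial \beta}\Big\{ g(y;\beta) - p_a(y) \Big\}\ dy,
\]
which is the first claim.

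Next I would plug in the specific model. For the truncated series $g(y;\beta) = 1 + \beta^\T b(y)$, the gradient with respect to $\beta$ is simply $\partial g(y;\beta)/\partial \beta = b(y)$, the vector of basis functions, since each coordinate $\beta_j$ enters only through the term $\beta_j b_j(y)$. Setting $m(\beta) = 0$ and expanding the integrand into its three additive pieces gives
\[
0 = \int b(y)\ dy + \left\{ \int b(y) b(y)^\T\ dy \right\}\beta - \int b(y)\, p_a(y)\ dy.
\]
I would then evaluate each term using the stated orthogonality properties of the cosine-type basis on $[0,1]$: the first integral vanishes because $\int b_j(y)\ dy = 0$; the middle integral is the identity matrix because $\int b_j(y) b_k(y)\ dy = \one(j=k)$, so it contributes exactly $\beta$; and the last integral is $\E\{b(Y^a)\}$, since integrating the basis against the counterfactual density $p_a(y)$ is by definition the expectation of $b(Y^a)$. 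Rearranging gives $\beta = \E\{b(Y^a)\}$.

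The computation is essentially mechanical once the two basis identities are invoked, so there is no serious obstacle; the only point needing a little care is confirming that $\int b(y) b(y)^\T\ dy$ equals the identity and not merely a diagonal matrix, which is precisely what the orthonormality condition $\int b_j(y) b_k(y)\ dy = \one(j=k)$ guarantees. Finally, since Proposition \ref{prop:mb} posits a unique root of $m(\beta)=0$, the value $\E\{b(Y^a)\}$ we have produced must coincide with the projection parameter $\beta_0$, completing the argument.
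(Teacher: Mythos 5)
Your proposal is correct and follows essentially the same route as the paper: the paper's appendix simply records $f(p,q)=(p-q)^2/q$ and $f_2'(p,q)=1-(p/q)^2$ for $L_2^2$, from which $f + q f_2' = 2(q-p)$ gives the first display via Proposition \ref{prop:mb}, and the second claim follows by the same substitution of $\partial g/\partial\beta = b(y)$ and the orthonormality identities you use. Your write-up just makes explicit the mechanical steps the paper leaves implicit.
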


\bigskip

Corollary \ref{cor:l2mbeta} shows that when using orthogonal series with $L_2^2$ projections, there is a closed form for $\beta$, given by a simple mean of a known function of the counterfactual outcome. Estimation and inference for parameters like \eqref{eq:l2beta} is relatively well-understood \citep{robins2009semiparametric, robins2017minimax}, which allows exploiting existing theory and methods in the density estimation context. \\

\begin{corollary} \label{cor:klmbeta}
If $D_f=\text{KL}$ then
\begin{equation*}
m(\beta) = - \E\left\{ \frac{\partial}{\partial \beta} \log g(Y^a;\beta) \right\} ,
\end{equation*}
and so if $g(y;\beta)= \exp\{ \beta^\T b(y) - C(\beta) \}$ is the exponential family in Example \ref{ex:expfam} then
\begin{equation}
m(\beta) =   \frac{\partial}{\partial \beta}C(\beta) - \E\Big\{ b(Y^a) \Big\} = \int b(y) \Big[ \exp\{ \beta^\T b(y) - C(\beta) \} - p_a(y) \Big] \ dy.
\end{equation}
\end{corollary}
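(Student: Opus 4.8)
The plan is to combine Proposition \ref{prop:mb} with the KL-specific simplification already recorded in Corollary \ref{cor:mb}, and then exploit the score-function structure of $\log g$. First I would recall that for $D_f=\text{KL}$ the bracketed integrand in \eqref{eq:mb} equals $1 - p_a(y)/g(y;\beta)$, so that
\begin{equation*}
m(\beta) = \int \frac{\partial g(y;\beta)}{\partial \beta} \left\{ 1 - \frac{p_a(y)}{g(y;\beta)} \right\} \ dy .
\end{equation*}
The key observation is the score identity $\frac{\partial}{\partial \beta}\log g(y;\beta) = \frac{1}{g(y;\beta)}\frac{\partial g(y;\beta)}{\partial \beta}$, which lets me rewrite the cross term $\frac{\partial_\beta g(y;\beta)}{g(y;\beta)}\, p_a(y)$ as $\big\{\partial_\beta \log g(y;\beta)\big\}\, p_a(y)$.

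Splitting the integral into two pieces, the second piece integrates to $\E\{\partial_\beta \log g(Y^a;\beta)\}$ because $p_a$ is precisely the density of $Y^a$, while the first piece $\int \partial_\beta g(y;\beta)\, dy$ should vanish. Establishing this last fact is the one nontrivial step: I would interchange differentiation and integration to write $\int \partial_\beta g(y;\beta)\, dy = \partial_\beta \int g(y;\beta)\, dy = \partial_\beta 1 = 0$, using that $g(\cdot;\beta)$ is a probability density for every $\beta$. This interchange is where the real (if mild) work lies, as it needs a dominated-convergence-type regularity condition on $g$ and its $\beta$-derivative; everything else is bookkeeping. Combining the two pieces then yields $m(\beta) = -\E\{\partial_\beta \log g(Y^a;\beta)\}$, the first claim.

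For the exponential-family specialization I would simply substitute $g(y;\beta)=\exp\{\beta^\T b(y) - C(\beta)\}$, so that $\log g(y;\beta) = \beta^\T b(y) - C(\beta)$ and hence $\partial_\beta \log g(y;\beta) = b(y) - \partial_\beta C(\beta)$. Taking the expectation over $Y^a$ and noting that $\partial_\beta C(\beta)$ is nonrandom gives $m(\beta) = \partial_\beta C(\beta) - \E\{b(Y^a)\}$. Finally, to reach the stated integral form I would differentiate the log-partition function $C(\beta) = \log\int \exp\{\beta^\T b(y)\}\, dy$ directly, obtaining the familiar identity $\partial_\beta C(\beta) = \int b(y)\exp\{\beta^\T b(y) - C(\beta)\}\, dy = \int b(y)\, g(y;\beta)\, dy$; writing $\E\{b(Y^a)\} = \int b(y)\, p_a(y)\, dy$ and merging the two integrals produces the claimed expression $\int b(y)\big[\exp\{\beta^\T b(y) - C(\beta)\} - p_a(y)\big]\, dy$.
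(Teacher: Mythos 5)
Your proposal is correct and follows essentially the same route as the paper, which simply records the KL-specific derivatives of $f$ and leaves the substitution into Proposition \ref{prop:mb} implicit; your explicit use of the score identity, the normalization $\int \partial_\beta g(y;\beta)\,dy = 0$ (which also reconciles the constant $1$ appearing in Corollary \ref{cor:mb} with the final expression), and the standard log-partition identity $\partial_\beta C(\beta) = \int b(y) g(y;\beta)\,dy$ fills in exactly the intended bookkeeping.
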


\bigskip

Similarly, for KL divergence, the moment $m(\beta)$ is simply the expected score under counterfactual density $g(y;\beta)$. Therefore, just as in the non-counterfactual setting, the parameter values that maximize a posited likelihood are also those that minimize KL divergence \citep{huber1967behavior, white1982maximum}. When one also uses an exponential family, the solution to $m(\beta)=0$ corresponds to an intuitive ``moment matching'', i.e., finding the value of $\beta$ that equates  expectations of $b(\cdot)$ under $g$ to those under the distribution of $Y^a$. \\

\subsection{Distances \& Density Effects}
\label{sec:denseff}

In addition to estimating projections of the counterfactual density
onto a finite-dimensional model, in this section we also consider
estimation of distributional distances themselves.
The main focus is on density-based effects measuring the distance
between counterfactual densities in terms of $L_p^p$ and $f$-divergences. These
effects can detect more nuanced disinctions between the distributions
of $Y^1$ and $Y^0$, beyond simple differences-in-means captured by
standard average treatment effects. \\

More specifically, we consider the distance between $p_1$ and $p_0$
given by
\begin{equation}
\psi_f = D_f \Big(p_1(y) , p_0(y)\Big) = \int f\Big( p_1(y) , p_0(y) \Big) p_0(y) \ dy  \label{eq:denseff}
\end{equation}
for discrepancy functions $f$ as discussed in the previous subsection. 
In this setup we do not require approximating the densities $p_a(y)$
with finite-dimensional models, and instead consider estimating
$\psi_f$ in a fully nonparametric model. \\

\subsection{Model Selection \& Aggregation} 
\label{sec:modsel}

In practice one may not have an approximating model such as \eqref{eq:expfam} available \emph{a priori}. In these cases it would be natural to instead set up a sequence of models, and use the data to choose among them. In standard regression and density estimation problems, simple cross-validation procedures are available for this task; however, because our goal is estimation of a more nuanced counterfactual density, these require some refinement, in the same spirit as \citet{van2003cross}. Thus in this section we describe how the target quantities of Sections \ref{sec:densityfns} and \ref{sec:denseff} can be adapted for the purposes of model selection and aggregation. \\

Specifically, for a set of estimators $\{ \widehat{g}_k(y) : k=1,...,K \}$ of $p_a(y)$ (e.g., estimated from some initial training sample, with each projected onto the space of valid densities), we can define the risk for a given estimator as 
\begin{equation}
R(\widehat{g}_k) = D_f\Big( p_a(y) , \widehat{g}_k(y) \Big)
\end{equation}
The minimum risk oracle estimator $\widehat{g}_{k_0}(y)$ can then be defined via
\begin{equation}
k_0 = \argmin_k R(\widehat{g}_k) = \argmin_k D_f\Big( p_a(y) , \widehat{g}_k(y)\Big) . \label{eq:modelselection}
\end{equation}
A model aggregation oracle can be defined more generally as $\widetilde{g}(y) = \sum_k \beta_{0k} \widehat{g}_k(y)$ where
\begin{equation} \label{eq:aggregation}
\beta_0 = \argmin_{\beta \in B} D_f \left( p_a(y) ,  \sum_{k=1}^K \beta_k \widehat{g}_k(y)  \right) .
\end{equation}
for some appropriate selection set, e.g., the standard simplex $B=\{(\beta_1,...,\beta_K) \in \R^K : \beta_k \geq 0, \sum_k \beta_k=1\}$ for convex aggregation \citep{tsybakov2003optimal, rigollet2007linear}. If one takes $B=\R^K$ for linear aggregation, then $f$-divergences may not be well-defined, so this might naturally only be used in the $D_f=L_2^2$ setting. \\

Note that the proposed target parameters in Section \ref{sec:densityfns} correspond to the aggregation target in \eqref{eq:aggregation} if we replace $\R^d$ with the relevant space $B$. However, since  model selection as defined in Equation \eqref{eq:modelselection} does not satisfy the smoothness assumptions we relied on in Section \ref{sec:densityfns_momcon}, it can be useful in practice to estimate the risk separately for all $K$ candidates; this is more akin to the effect estimation problem in Section \ref{sec:denseff}, except where the density $p_0(y)$ in \eqref{eq:denseff} is replaced with a candidate estimator $\widehat{g}_k(y)$ (e.g., which may be estimated on a separate independent sample/fold and conditioned upon, and so treated as fixed).  \\

\section{Efficiency Theory}
\label{sec:efficiency}

In this section we present a crucial von Mises expansion (i.e., distributional Taylor expansion) for generic density functionals, which yields efficient influence functions for the projection parameters and density effects of interest, and thus nonparametric efficiency bounds \citep{bickel1993efficient, van2003unified}. The latter can be further formalized as local minimax lower bounds \citep{van2002semiparametric}. \\

Throughout we make reference to the linear map $T \mapsto \phi_a(T;\Pb)$ defined as
\begin{equation} \label{eq:eifmap}
\phi_a(T;\Pb) = \frac{\one(A=a)}{\pi_a(X)} \Big\{ T - \E(T \mid X, A=a) \Big\} +  \E(T \mid X, A=a) - \E\{ \E(T \mid X, A=a) \}
\end{equation}
which takes a random variable $T$ (and distribution $\Pb$) and outputs the  efficient influence function for the functional $ \E\{ \E(T \mid X, A=a) \}$. Note we drop the dependence of $\phi_a(T;\Pb)$  on $(X,A)$ for simplicity; at times we also drop the dependence on $\Pb$ if the context is clear. In all our examples, $T=h(Y)$ will be a known or $\Pb$-dependent function of $Y$; the functionals we consider all have influence functions consisting of terms of the above form, but with different and non-standard choices of $T=h(Y)$, depending on the model and distance being used. \\

Recall that in Corollary \ref{cor:mb} we showed the relevant moment $m(\beta)$ reduces to a functional of the counterfactual density for particular distances. Therefore our first result  gives a von Mises-style expansion for generic smooth integral functionals of the counterfactual density. This result paves the way for later expansions and efficiency bounds, and may be of independent interest in other problems involving different counterfactual density functionals. \\

\begin{lemma} \label{lem:densfunc}
Let $\psi = \psi(\Pb) = \int h(p_a(y)) \ dy$ for 
some twice continuously differentiable function $h$. Then $\psi$ satisfies the von Mises expansion
\begin{equation}
\psi(\overline\Pb) - \psi(\Pb) = \int \phi_a\left(h'\Big(p_a(Y)\Big);\overline\Pb \right) \ d(\overline\Pb-\Pb) + R_2(\overline\Pb,\Pb)
\end{equation}
where
\begin{align*}
R_2(\overline\Pb,\Pb) &= \int \int h'(\overline{p}_a(y)) \left\{ \frac{\pi_a(x) }{\overline\pi_a(x)} - 1 \right\}  \Big\{ \eta_a(y \mid x) - \overline\eta_a(y \mid x) \Big\} \ dy \ d\Pb(x) \\
& \hspace{.4in} + \frac{1}{2} \int h''(p^*_a(y)) \Big\{ \overline{p}_a(y) - p_a(y) \Big\}^2 \ dy , 
\end{align*}
where $p^*_a(y)$ lies between $p_a(y)$ and $\overline{p}_a(y)$. \\
\end{lemma}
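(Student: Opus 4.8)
The plan is to combine a second-order Taylor expansion of the scalar function $h$ with the known (doubly robust) von Mises expansion for a counterfactual mean, exploiting the fact that $p_a(y)=\int\eta_a(y\mid x)\,d\Pb(x)$ is \emph{linear} in $\Pb$ to reduce the density functional to a mean of the fixed transformation $h'(p_a(Y))$. First I would write $\psi(\overline\Pb)-\psi(\Pb)=\int\{h(\overline p_a(y))-h(p_a(y))\}\,dy$ and Taylor-expand $h$ to second order about $p_a(y)$: for each $y$ there is a point $p_a^*(y)$ between $p_a(y)$ and $\overline p_a(y)$ with $h(\overline p_a)-h(p_a)=h'(p_a)(\overline p_a-p_a)+\tfrac12 h''(p_a^*)(\overline p_a-p_a)^2$. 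Integrating in $y$ immediately peels off the quadratic piece, which is exactly the second summand of $R_2$, and leaves the linear term $\mathrm{(I)}=\int h'(p_a(y))\{\overline p_a(y)-p_a(y)\}\,dy$ to be analyzed.

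The crux is to recognize $\mathrm{(I)}$ as the increment of a counterfactual-mean functional. Freezing $T=h'(p_a(Y))$ as a fixed function of $y$ (its dependence on $\Pb$ has already been charged to the quadratic Taylor term) and applying Fubini together with $p_a(y)=\int\eta_a(y\mid x)\,d\Pb(x)$, I would write $\int h'(p_a(y))\,p_a(y)\,dy=\E_\Pb\{\E_\Pb(T\mid X,A=a)\}$ and likewise $\int h'(p_a(y))\,\overline p_a(y)\,dy=\E_{\overline\Pb}\{\E_{\overline\Pb}(T\mid X,A=a)\}$. Hence $\mathrm{(I)}=\chi(\overline\Pb)-\chi(\Pb)$ where $\chi(\Qb)=\E_\Qb\{\E_\Qb(T\mid X,A=a)\}$ is precisely the functional whose efficient influence function is the map $\phi_a(T;\cdot)$ of \eqref{eq:eifmap}. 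This is exactly the link between density and mean estimation advertised just before the lemma.

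It then remains to invoke the standard von Mises expansion for $\chi$. Writing $\mu_a(x)=\E_\Pb(T\mid X=x,A=a)$ and $\overline\mu_a(x)=\E_{\overline\Pb}(T\mid X=x,A=a)$, a one-line iterated-expectation calculation (using $\E_\Pb[\one(A=a)/\overline\pi_a(X)\,(T-\overline\mu_a(X))]=\E_\Pb[(\pi_a/\overline\pi_a)(\mu_a-\overline\mu_a)]$) gives $\chi(\overline\Pb)-\chi(\Pb)+\int\phi_a(T;\overline\Pb)\,d\Pb=\E_\Pb[\{\mu_a(X)-\overline\mu_a(X)\}\{\pi_a(X)/\overline\pi_a(X)-1\}]$. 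Since $\phi_a(T;\overline\Pb)$ is centered under $\overline\Pb$, so that $\int\phi_a(T;\overline\Pb)\,d\Pb=-\int\phi_a(T;\overline\Pb)\,d(\overline\Pb-\Pb)$, this rearranges to $\mathrm{(I)}=\int\phi_a(h'(p_a(Y));\overline\Pb)\,d(\overline\Pb-\Pb)+\int\int h'(p_a(y))\{\pi_a(x)/\overline\pi_a(x)-1\}\{\eta_a(y\mid x)-\overline\eta_a(y\mid x)\}\,dy\,d\Pb(x)$, where the product form follows from $\mu_a(x)-\overline\mu_a(x)=\int h'(p_a(y))\{\eta_a-\overline\eta_a\}\,dy$ and one more use of Fubini. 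Replacing $h'(p_a)$ by $h'(\overline p_a)$ in this mixed-bias term is a third-order change (the increment $h'(\overline p_a)-h'(p_a)$ multiplies the two already-small factors $\pi_a/\overline\pi_a-1$ and $\eta_a-\overline\eta_a$), which puts the remainder in the exact stated form; adding back the quadratic Taylor piece then gives $R_2$.

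The main obstacle is the bookkeeping in this final step: one must correctly allocate the higher-order pieces into $R_2$, verify the characteristic product (double-robustness) structure in which the propensity-ratio error multiplies the conditional-density error, and track which nuisances are evaluated under $\Pb$ versus $\overline\Pb$. The remaining work is routine given twice-continuous differentiability of $h$ and the positivity bound $\overline\pi_a\geq\epsilon$: these justify the Fubini interchanges and the integrability needed to expand $h$ and to apply the mean value theorem for the intermediate point $p_a^*(y)$.
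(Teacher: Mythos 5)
Your overall strategy is essentially the one the paper uses: exploit the linearity of $p_a(y)=\int\eta_a(y\mid x)\,d\Pb(x)$ in $\Pb$ to reduce the functional to a counterfactual mean of the transformation $h'$, apply the standard doubly robust expansion for that mean (iterated expectation producing the product of the propensity-ratio error and the conditional-density error), and charge the curvature of $h$ to a quadratic Taylor remainder. Your verification of the mean-functional expansion is correct. The substantive difference is where you center the expansion, and that is where a genuine gap appears. The paper posits the influence function with \emph{every} nuisance evaluated under $\overline\Pb$ --- in particular $h'(\overline p_a(Y))$, not $h'(p_a(Y))$ --- computes $\int\varphi(z;\overline\Pb)\,d\Pb$ directly, and Taylor-expands $h$ around $\overline p_a$; this lands exactly on the stated $R_2$, whose mixed-bias term carries $h'(\overline p_a)$. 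You instead Taylor-expand around $p_a$ and freeze $T=h'(p_a(Y))$, which yields an exact identity whose mixed-bias term carries $h'(p_a)$. These are two different (both valid, both second-order) expansions. Your closing move --- that replacing $h'(p_a)$ by $h'(\overline p_a)$ in the mixed-bias term is a third-order change ``which puts the remainder in the exact stated form'' --- does not prove the stated identity: the replacement changes the value of the remainder by a nonzero quantity, so after the swap the identity holds only up to an error that the stated $R_2$ does not contain. The same issue touches the first-order term: as the paper's proof makes explicit, $\phi_a(h'(p_a(Y));\overline\Pb)$ is intended with $h'(\overline p_a(Y))$ inside, whereas your version keeps the fixed function $h'(p_a(Y))$, and the two integrals against $d(\overline\Pb-\Pb)$ differ by a second-order quantity you do not track.

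To be clear about what survives: your decomposition is a perfectly good von Mises expansion with a second-order remainder, and it serves every downstream purpose in the paper, since Lemma~\ref{lem:exptoeif} only requires $\frac{d}{d\epsilon}R_2(\Pb,\Pb_\epsilon)\bigm|_{\epsilon=0}=0$, which your remainder satisfies. But the lemma asserts a specific exact expression for $R_2$, and your argument as written does not deliver it. The clean fix is to take $\overline\Pb$ as the base point throughout, as the paper does: compute $\int\phi_a\bigl(h'(\overline p_a(Y));\overline\Pb\bigr)\,d\Pb$ by iterated expectation and expand $h(p_a)$ around $\overline p_a$, so the stated $R_2$ emerges in one pass with no replacement step.
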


Lemma \ref{lem:densfunc} has several important consequences. First, it
indicates how one can correct the first-order bias of a plug-in
estimator $\psi(\widehat\Pb)$ of counterfactual density functionals:
by estimating the first term in the expansion and subtracting it
off. This is how standard semiparametric estimators (particularly of
the one-step variety) based on influence functions are constructed
\citep{bickel1993efficient, van2003unified, chernozhukov2018double}, and our proposed
estimators in the next section do precisely this.  Second, 
since the remainder term is quadratic
in the nuisance functions,
it implies that $\psi(\Pb)$ is pathwise differentiable 
with efficient influence function $\phi_a(h'(p_a(Y))$;
for this fact we refer to Lemma \ref{lem:exptoeif} in the Appendix. \\

\subsection{Density Functions}
\label{sec:densityfnseif}

In this subsection we use Lemma \ref{lem:densfunc} to detail the efficient influence function for the moment $m(\beta)$ at a fixed $\beta$, as well as the projection parameter $\beta_0$ and projected density $g(y;\beta_0)$. These efficient influence functions yield local minimax lower bounds, as well as estimators that can attain the nonparametric efficiency bounds under generic high-level rate conditions on nuisance estimators, which will be proved in Section \ref{sec:estimation}. \\

\begin{theorem} \label{lem:eifmb}
Assume $f$ is twice differentiable and denote partial derivatives as $f'_j(q_1,q_2)=\frac{\partial}{\partial q_j} f(q_1,q_2)$ and similarly $f''_{jk}(q_1,q_2)=\frac{\partial^2}{\partial q_j \partial q_k} f(q_1,q_2)$. 
Then, under an unrestricted nonparametric model, the efficient influence function for $m(\beta)$ is given by
\begin{align*}
\phi_a\Big(\gamma_f(Y;\beta)\Big)
\end{align*}
where
\begin{align*}
\gamma_f(y;\beta) & \equiv  \gamma_f(y;\beta,p_a) =
\frac{\partial g(y;\beta)}{\partial \beta} \left\{ f'_1\Big( p_a(y), g(y;\beta) \Big) + g(y;\beta) f''_{21}\Big( p_a(y) , g(y;\beta) \Big)   \right\} .
\end{align*}
The efficient influence functions for $\beta_0$ and $g(y;\beta_0)$ are similarly given by
\begin{equation}\label{eq::eff}
- \frac{\partial m(\beta)}{\partial \beta}^{-1} \phi_a\Big(\gamma_f(Y;\beta)\Big) \Bigm|_{\beta=\beta_0}
\ \text{ and } \ 
-\frac{\partial g(y;\beta)}{\partial \beta^\T} \frac{\partial m(\beta)}{\partial \beta}^{-1}\ \phi_a\Big(\gamma_f(Y;\beta)\Big) \Bigm|_{\beta=\beta_0}
\end{equation}
respectively. \\
\end{theorem}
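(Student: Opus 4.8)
The plan is to recognize each coordinate of $m(\beta)$, \emph{at a fixed $\beta$}, as a smooth integral functional of the counterfactual density of precisely the type handled by Lemma \ref{lem:densfunc}, read off its efficient influence function from that lemma, and then pass from the influence function of $m(\beta)$ to those of $\beta_0$ and $g(y;\beta_0)$ by the chain rule for pathwise derivatives (the functional delta method together with implicit differentiation). Throughout, $m(\beta)$ is vector-valued and the map $T \mapsto \phi_a(T)$ is applied coordinatewise.

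First I would fix $\beta$ and write the $j$-th coordinate as $m_j(\beta) = \int h_j(y, p_a(y)) \, dy$, where
\[
h_j(y, p) = \frac{\partial g(y;\beta)}{\partial \beta_j} \left\{ f\big(p, g(y;\beta)\big) + g(y;\beta) f_2'\big(p, g(y;\beta)\big) \right\}.
\]
The only wrinkle relative to Lemma \ref{lem:densfunc} is that $h_j$ depends on $y$ explicitly (through $g(y;\beta)$ and its gradient), whereas the lemma is stated for a univariate $h$. Since the von Mises expansion there is applied pointwise in $y$ and then integrated, it extends verbatim to integrands of the form $h(y, p_a(y))$, with $h'$ replaced by the partial derivative $\partial_p h(y, p)$; I would state this mild generalization and invoke it. Because the factor $\partial g(y;\beta)/\partial \beta_j$ is free of $p$, the product rule gives
\[
\partial_p h_j(y,p) = \frac{\partial g(y;\beta)}{\partial \beta_j}\left\{ f_1'\big(p,g(y;\beta)\big) + g(y;\beta) f_{21}''\big(p, g(y;\beta)\big)\right\},
\]
which, evaluated at $p = p_a(y)$, is exactly the $j$-th coordinate of $\gamma_f(y;\beta)$. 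Lemma \ref{lem:densfunc} (together with Lemma \ref{lem:exptoeif}, which upgrades the quadratic-remainder expansion to genuine pathwise differentiability) then delivers $\phi_a\Big(\gamma_f(Y;\beta)\Big)$ as the efficient influence function of $m(\beta)$.

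Second, to obtain the influence function of $\beta_0$, I would treat $\beta_0 = \beta_0(\Pb)$ as implicitly defined by the estimating equation $m(\beta_0(\Pb); \Pb) = 0$. Differentiating along a smooth submodel $\Pb_t$ and using the invertibility of $\partial m(\beta)/\partial \beta$ at $\beta_0$ (guaranteed by the assumed uniqueness of the minimizer), the implicit function theorem gives
\[
\frac{d}{dt}\beta_0(\Pb_t)\Big|_{t=0} = -\left(\frac{\partial m(\beta)}{\partial \beta}\right)^{-1} \frac{d}{dt} m(\beta;\Pb_t)\Big|_{t=0}\,,
\]
all evaluated at $\beta = \beta_0$, where the right-hand derivative is the pathwise derivative of the moment at fixed $\beta$. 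Since influence functions transform linearly under this chain rule, the efficient influence function of $\beta_0$ is $-\left(\partial m(\beta)/\partial \beta\right)^{-1}\phi_a\Big(\gamma_f(Y;\beta)\Big)\big|_{\beta=\beta_0}$, matching \eqref{eq::eff}. The influence function of $g(y;\beta_0)$ at a fixed $y$ then follows from one more application of the delta method, left-multiplying by the gradient $\partial g(y;\beta)/\partial \beta^\T$.

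The step I expect to be the main obstacle is justifying the generalization of Lemma \ref{lem:densfunc} to $y$-dependent (vector-valued) integrands and, relatedly, the interchange of differentiation and integration needed to differentiate $m(\beta)$ under the integral sign; both require the twice-differentiability of $f$ and enough regularity on $g(\cdot;\beta)$ and $p_a$ to dominate the relevant derivatives. The passage from $m(\beta)$ to $\beta_0$ is standard Z-estimator theory, but it does hinge on confirming that the \emph{nonparametric efficient} influence function (not merely some influence function) is preserved under the delta method; I would note this holds because the map is a composition of a smooth finite-dimensional transformation with the efficient influence function of $m(\beta)$.
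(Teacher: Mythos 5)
Your proposal is correct and follows essentially the same route as the paper: fix $\beta$, view each coordinate of $m(\beta)$ as an integral functional $\int h_j(p_a(y))\,dy$ (with $g(y;\beta)$ and its gradient treated as known, $\Pb$-free quantities), apply Lemma \ref{lem:densfunc} coordinatewise to get $\gamma_f$ as the derivative of the integrand, upgrade to the efficient influence function via Lemma \ref{lem:exptoeif}, and then pass to $\beta_0$ and $g(y;\beta_0)$ by the chain rule. Your explicit handling of the $y$-dependence of the integrand and of the implicit-function-theorem step for $\beta_0$ simply spells out details the paper leaves implicit.
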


The efficient influence functions given in Theorem \ref{lem:eifmb} are analogous to those of usual ATE-type parameters, but with the crucial difference that they correspond  to means of $\gamma_f(Y^a;\beta)$, not $Y^a$ itself. This is what we should expect based on the result in Lemma \ref{lem:densfunc}, since the $\gamma_f$ transformation is the derivative of the integrand in the moment condition \eqref{eq:mb} given in Proposition \ref{prop:mb}. Note also that the form of $\gamma_f$ indicates that the efficiency bound for $\beta_0$ (i.e., the variance of the efficient influence function) will be adversely affected when the model $g$ is sensitive to small changes in $\beta$, or when the distance is sensitive to small changes in its arguments, since then the derivatives in $\gamma_f$ will be large.  \\

In the next corollary, we give the particular form of the efficient influence functions when $D_f$ is the $L_2^2$ and KL divergence, and the approximating models are a linear series and exponential family. \\

 \begin{corollary} \label{cor:eif_l2kl}
 For $L_2^2$ and KL divergence the quantity $\gamma_f$ from Theorem \ref{lem:eifmb} reduces to
 $$ \gamma_f(y;\beta) = \begin{cases}
 - 2 \frac{\partial g(y;\beta)}{\partial \beta}  & \ \text{ if } D_f = L_2^2 \\
  -  \frac{\partial \log g(y;\beta)}{\partial \beta}  & \ \text{ if } D_f = \text{KL} .
 \end{cases} $$
Further, if either
 \begin{enumerate}
 \item $D_f=L_2^2$ and $g(y;\beta) = q(y) + \beta^\T b(y)$ is the truncated series in Example \ref{ex:series}, or
 \item $D_f=\text{KL}$ and $g(y;\beta)= \exp\{ \beta^\T b(y) - C(\beta) \}$ is the exponential family in Example \ref{ex:expfam}
 \end{enumerate}
 then the efficient influence function for $m(\beta)$ is proportional to
 \begin{align*}
\phi_a \Big(b(Y) \Big) .
\end{align*}
The proportionality constant is $-2$ for $D_f=L_2^2$, and $-1$ for $D_f=\text{KL}$. 
 \end{corollary}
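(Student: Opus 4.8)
The plan is to verify both claims by direct substitution into the expression for $\gamma_f$ from Theorem \ref{lem:eifmb}, exploiting the explicit forms of the discrepancy functions $f$ and of the two models, together with two structural properties of the map $\phi_a$ in \eqref{eq:eifmap}: its linearity in $T$ (already noted in the text) and the fact that it annihilates constants. Recall that the bracketed factor in $\gamma_f$ is $f'_1 + q_2 f''_{21}$ evaluated at $(q_1,q_2)=(p_a(y),g(y;\beta))$, so the whole computation reduces to differentiating each $f$ and checking that the density-ratio terms cancel.

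First I would establish the reduction of $\gamma_f$. For $L_2^2$, writing $f(q_1,q_2)=(q_1-q_2)^2/q_2 = q_1^2/q_2 - 2q_1 + q_2$, I would compute $f'_1 = 2q_1/q_2 - 2$ and $f''_{21} = -2q_1/q_2^2$, so that $f'_1 + q_2 f''_{21} = (2q_1/q_2 - 2) - 2q_1/q_2 = -2$: the ratio terms cancel and the bracket collapses to the constant $-2$, giving $\gamma_f = -2\,\partial g/\partial\beta$. For KL, writing $f(q_1,q_2)=(q_1/q_2)\log(q_1/q_2)$, I would compute $f'_1 = q_2^{-1}\{\log(q_1/q_2)+1\}$ and $f''_{21} = -q_2^{-2}\{\log(q_1/q_2)+2\}$, so that $f'_1 + q_2 f''_{21} = q_2^{-1}\{\log(q_1/q_2)+1\} - q_2^{-1}\{\log(q_1/q_2)+2\} = -q_2^{-1}$; hence $\gamma_f = -g^{-1}\,\partial g/\partial\beta = -\partial\log g/\partial\beta$, matching the claimed forms.

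Next I would specialize to the two models and read off the efficient influence function for $m(\beta)$, which by Theorem \ref{lem:eifmb} is $\phi_a(\gamma_f(Y;\beta))$. For $L_2^2$ with the truncated series of Example \ref{ex:series}, $g(y;\beta)=q(y)+\beta^\T b(y)$ gives $\partial g/\partial\beta = b(y)$, so $\gamma_f(y;\beta) = -2b(y)$ and linearity of $\phi_a$ yields $\phi_a(\gamma_f(Y;\beta)) = -2\,\phi_a(b(Y))$. For KL with the exponential family of Example \ref{ex:expfam}, $\log g(y;\beta) = \beta^\T b(y) - C(\beta)$ gives $\partial\log g/\partial\beta = b(y) - C'(\beta)$, hence $\gamma_f(y;\beta) = -b(y) + C'(\beta)$; applying $\phi_a$ and using linearity splits this as $-\phi_a(b(Y)) + \phi_a(C'(\beta))$. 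Since $C'(\beta)$ is constant in $Y$ with $\beta$ held fixed, inspecting \eqref{eq:eifmap} shows $\phi_a$ sends it to zero (the conditional mean terms cancel the leading term and each other), so the second piece vanishes and the influence function is $-\phi_a(b(Y))$.

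The computations are elementary, so there is no genuine analytic obstacle; the only points requiring care are the bookkeeping in the mixed partial $f''_{21}$ for KL, where it is easy to drop a term in the chain rule, and, for the exponential-family case, the observation that $C'(\beta)$ is killed by $\phi_a$ rather than surviving into the influence function. This last step is precisely what produces the clean proportionality constant $-1$, and is the detail I would be most careful to state explicitly.
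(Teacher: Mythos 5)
Your computation is correct and matches the paper's own (very terse) proof, which likewise just tabulates $f_1'$ and $f_{21}''$ for each distance and substitutes into $\gamma_f = \tfrac{\partial g}{\partial\beta}\{f_1' + g\,f_{21}''\}$; your derivatives agree exactly with the paper's table. The only place you add detail beyond the paper is the explicit observation that $\phi_a$ annihilates the constant $C'(\beta)$ in the exponential-family case, which the paper leaves implicit but which is indeed needed to get the clean constant $-1$.
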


Corollary \ref{cor:eif_l2kl} shows that the efficient influence functions are proportional for linear projections using $L_2^2$ distance, and for projections onto an exponential family using the KL divergence. Further, this efficient influence function simply corresponds to that of the counterfactual mean vector $\E\{b(Y^a)\}$, for $b$ a known basis function vector. Thus the influence function conveniently reduces to that of the mean of a transformed version of the counterfactual outcome, with no dependence on $\beta$. As mentioned after Corollary \ref{cor:l2mbeta}, this allows for adapting existing theory and methods for average treatment effects to the density estimation context. \\

The following theorem summarizes the local minimax lower bound implied by the form of the efficient influence function in Theorem \ref{lem:eifmb}, as in Corollary 2.6 of  \citet{van2002semiparametric}.

\begin{corollary} \label{cor:minimax}
Let $\sigma^2=\sigma_\Pb^2$ denote the variance of the efficient influence
function from (\ref{eq::eff}).
The local minimax risk for $\beta_0$ is lower bounded
as
$$ \inf_{\delta>0} \ \liminf_{n \rightarrow \infty} \ 
\sup_{\text{TV}(\overline\Pb,\Pb)<\delta} \ 
\E_{\overline\Pb} \left[ \ell \left\{ \sqrt{n} \Big( \widehat\beta - \beta_0(\overline\Pb) \Big) \right\} \right] \geq 
\E \Big\{ \ell(\sigma Z) \Big\}  $$
for any estimator $\widehat\beta$, where $\ell: \R^p \mapsto [0,\infty)$ is any subconvex loss function. 
\end{corollary}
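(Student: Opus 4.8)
The plan is to recognize Corollary~\ref{cor:minimax} as a direct instance of the local asymptotic minimax (LAM) theorem of H\'ajek and Le Cam, in the form of Corollary~2.6 of \citet{van2002semiparametric}. That theorem delivers exactly the stated lower bound once two ingredients are in place: (i) the target $\beta_0 = \beta_0(\Pb)$ is a pathwise differentiable functional of $\Pb$ in the nonparametric model, with efficient influence function equal to the expression in \eqref{eq::eff} and covariance $\sigma^2$; and (ii) the loss $\ell$ is subconvex (bowl-shaped), i.e., its sublevel sets are convex and symmetric about the origin. Ingredient (ii) is assumed in the statement, so the work is to verify (i) and then check that the hypotheses of the cited theorem apply in the shrinking total-variation neighborhood formulation used here.

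For pathwise differentiability I would chain together results already established. Lemma~\ref{lem:densfunc} supplies a von Mises expansion with a second-order remainder for the integral functionals $\int h(p_a(y))\,dy$, and by Lemma~\ref{lem:exptoeif} this quadratic remainder certifies that such functionals are pathwise differentiable with the stated gradient. Theorem~\ref{lem:eifmb} then transfers this to the moment $m(\beta)$ at each fixed $\beta$, yielding efficient influence function $\phi_a(\gamma_f(Y;\beta))$. Differentiating the defining relation $m(\beta_0(\Pb);\Pb)=0$ along smooth submodels and inverting $\partial m/\partial\beta$ (finite and nonsingular by the uniqueness assumption on the minimizer) produces, via the functional delta method, the efficient influence function for $\beta_0$ displayed in \eqref{eq::eff}; its covariance is by definition $\sigma^2$. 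This establishes (i).

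Next I would confirm the regularity required by the LAM theorem. Because the model is unrestricted nonparametric, the tangent space at $\Pb$ is the entire space $L_2^0(\Pb)$ of mean-zero square-integrable functions, so the gradient computed above is the \emph{canonical} gradient and agrees with the derivative of $\beta_0$ along every quadratic-mean-differentiable submodel $t \mapsto \Pb_t$. Along any such submodel the likelihood ratios are locally asymptotically normal, and $\sqrt{n}\big(\beta_0(\Pb_{h/\sqrt n}) - \beta_0(\Pb)\big)$ converges to the corresponding linear shift in the Gaussian limit experiment. The shrinking total-variation balls $\{\text{TV}(\overline\Pb,\Pb)<\delta\}$ are compatible with this local-parameter formulation: for any fixed local direction $h$ one has $\text{TV}(\Pb_{h/\sqrt n},\Pb)\to 0$, so after $\sqrt n$ rescaling the TV neighborhoods of radius $\delta$ recover all bounded local perturbations, and taking $\inf_{\delta>0}$ isolates the worst case over arbitrarily small neighborhoods, matching the $\sup$ over finite sets of local directions in the classical statement.

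With (i) and (ii) verified, invoking Corollary~2.6 of \citet{van2002semiparametric} yields the bound: the limit experiment is a Gaussian shift with covariance $\sigma^2$, Anderson's lemma shows the centered Gaussian is minimax for any subconvex loss, and the asymptotic representation theorem reduces an arbitrary estimator sequence $\widehat\beta$ to a procedure in this limit, giving $\E\{\ell(\sigma Z)\}$ as the lower bound, with $\sigma Z$ denoting a mean-zero Gaussian of covariance $\sigma^2$. The main obstacle I anticipate is not the Gaussian limit-experiment mechanics (which are packaged inside the cited corollary) but the careful handling of the \emph{moving centering} $\beta_0(\overline\Pb)$ inside the supremum: one must ensure $\beta_0$ is differentiable uniformly enough over the shrinking neighborhood that the random centering can be replaced, up to $o(1)$, by its first-order local approximation, so that the minimax comparison is genuinely against the Gaussian-shift risk rather than an artifact of re-centering.
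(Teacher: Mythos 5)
Your proposal is correct and follows essentially the same route as the paper, which proves this corollary simply by invoking Corollary~2.6 of \citet{van2002semiparametric} once pathwise differentiability of $\beta_0$ with the efficient influence function in \eqref{eq::eff} is established via Lemma~\ref{lem:densfunc}, Lemma~\ref{lem:exptoeif}, and Theorem~\ref{lem:eifmb}. You supply considerably more detail (tangent space, LAN, the moving centering) than the paper's one-line attribution, but the underlying argument is identical.
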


Corollary \ref{cor:minimax} follows from Corollary 2.6 of \citet{van2002semiparametric}. It shows that the worst-case mean squared error of any estimator, locally near the true $\Pb$, cannot be smaller than the efficiency bound, asymptotically and after scaling by $\sqrt{n}$.   This   gives an important benchmark for efficient estimation of projection parameters of the counterfactual density: no estimator can have mean squared error uniformly better than the variance of the efficient influence function (divided by $n$), without adding extra assumptions to the nonparametric model we consider. 

\bigskip

\subsection{Density Effects}
\label{sec:denseffeif}

Now we give the efficient influence function for the density effect parameters in \eqref{eq:denseff}. Unlike the projected densities in the previous subsection, the density effect parameters depend on both counterfactual densities of interest for comparison. \\

\begin{theorem} \label{thm:eifdenseff}
In an unrestricted nonparametric model, the efficient influence function for the density effect $\psi_f=\int f\left( {p_1(y)} , {p_0(y)} \right) p_0(y) \ dy $ is given by
\begin{align*}
\phi_1 \Big( \lambda_{1}(Y) \Big)  + \phi_0\Big( \lambda_0(Y) \Big) 
\end{align*}
where
\begin{align*}
\lambda_1(y) &= p_0(y) f'_1\Big(p_1(y),p_0(y) \Big) \\
\lambda_0(y) &= f\Big(p_1(y) , p_0(y) \Big) + p_0(y) f_2'\Big(p_1(y) , p_0(y) \Big) .
\end{align*}
\end{theorem}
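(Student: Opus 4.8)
The plan is to reduce the density effect $\psi_f = \int f(p_1(y),p_0(y)) \, p_0(y) \, dy$ to the generic integral-functional setting of Lemma \ref{lem:densfunc} by treating it as a functional of the \emph{pair} $(p_1, p_0)$, and then exploiting linearity of the von Mises expansion across the two counterfactual arguments. First I would write the integrand as $F(p_1(y), p_0(y)) \equiv f(p_1(y), p_0(y)) \, p_0(y)$, a twice-differentiable function of its two density arguments, so that $\psi_f = \int F(p_1(y), p_0(y)) \, dy$. The efficient influence function should then be obtained by differentiating $F$ with respect to each density and applying the single-argument machinery of Lemma \ref{lem:densfunc} once for each of $p_1$ and $p_0$, since the map $\Pb \mapsto p_a$ enters only through the treatment level $a$. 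This is why the final answer is a \emph{sum} $\phi_1(\lambda_1(Y)) + \phi_0(\lambda_0(Y))$: the $\phi_1$ term captures the pathwise derivative through $p_1$, and the $\phi_0$ term through $p_0$.

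The key steps, in order, are as follows. I would compute the two partial derivatives of $F$ with respect to each density argument. For the $p_1$ argument, only the $f$ factor depends on $p_1$, so $\partial F / \partial p_1 = p_0(y) \, f'_1(p_1(y), p_0(y)) = \lambda_1(y)$, exactly the stated $\lambda_1$. For the $p_0$ argument, both factors depend on $p_0$, so by the product rule $\partial F / \partial p_0 = f(p_1(y), p_0(y)) + p_0(y) \, f'_2(p_1(y), p_0(y)) = \lambda_0(y)$, matching the stated $\lambda_0$. Next, I would form a von Mises expansion for $\psi_f$ by perturbing $p_1$ and $p_0$ separately and summing; Lemma \ref{lem:densfunc} tells us that perturbing a density $p_a$ in a direction given by an integrand $h$ contributes a linear term $\int \phi_a(h'(p_a(Y))) \, d(\overline\Pb - \Pb)$, where here the role of $h'(p_a)$ is played by $\lambda_1$ for $a=1$ and by $\lambda_0$ for $a=0$. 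The first-order term of the combined expansion then reads $\int \phi_1(\lambda_1(Y)) \, d(\overline\Pb - \Pb) + \int \phi_0(\lambda_0(Y)) \, d(\overline\Pb - \Pb)$, whose integrand $\phi_1(\lambda_1(Y)) + \phi_0(\lambda_0(Y))$ is the claimed influence function. Finally, I would check that the cross and second-order remainder terms are products of differences in nuisance functions, hence second order, so that the functional is pathwise differentiable and the linear term identifies the \emph{efficient} influence function via Lemma \ref{lem:exptoeif} in the Appendix.

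The main obstacle I anticipate is bookkeeping in the remainder, rather than the derivation of $\lambda_1$ and $\lambda_0$, which is a mechanical application of the product and chain rules. Specifically, because $\psi_f$ depends on both $p_1$ and $p_0$ simultaneously, perturbing them jointly produces a genuine \emph{cross term} $\int f''_{12}(p_1, p_0) \, (\overline{p}_1 - p_1)(\overline{p}_0 - p_0) \, dy$ (together with the single-density Hessian terms already present in $R_2$ of Lemma \ref{lem:densfunc}) that has no analogue in the single-functional case. The work will be to verify that this cross term, like the diagonal second-order terms, is a product of estimation errors in the two counterfactual densities and therefore quadratic, so it does not contribute to the first-order influence function. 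I expect this to go through because $f''_{12}$ is bounded under the twice-differentiability assumption and each difference $\overline{p}_a - p_a$ is controlled by the corresponding nuisance errors, but it requires care to confirm that no first-order piece leaks out of the cross term. Once that remainder is confirmed second order, applying Lemma \ref{lem:exptoeif} to pass from the von Mises expansion to the efficient influence function completes the proof.
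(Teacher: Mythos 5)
Your proposal is correct and follows essentially the same route as the paper: the paper also derives $\lambda_1$ and $\lambda_0$ by applying Lemma \ref{lem:densfunc} to $\psi_f$ viewed as a functional of $p_1$ with $p_0$ held fixed (and vice versa), obtaining exactly your two partial derivatives, and then concludes via Lemma \ref{lem:exptoeif} that the efficient influence function with both densities unknown is the sum of the two partial influence functions. The only difference is one of explicitness: the paper simply asserts this summation principle, whereas you additionally plan to verify that the cross term $\int f''_{12}(p_1,p_0)\,(\overline{p}_1-p_1)(\overline{p}_0-p_0)\,dy$ in the joint von Mises remainder is second order --- a check the paper leaves implicit but that is indeed the content behind the asserted fact.
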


\bigskip

As with the result for $\beta_0$ in Theorem \ref{lem:eifmb}, the efficient influence function for $\psi_f$ in Lemma \ref{thm:eifdenseff} consists of inverse probability weighted residuals, plus a ``plug-in''-type term, similar to ATE parameters. However, again this corresponds to the influence function for a transformed version of the outcome, depending on the counterfactual densities and choice of distance $f$. The efficient influence function simplifies somewhat for $L_2^2$ and KL divergence, as indicated in the following corollary. Expressions for other $f$-divergences are in Section \ref{sec:fexamples} in the Appendix.  \\

 \begin{corollary} \label{thm:eifdenseff_kl}
  If $D_f=L_2^2$, then the efficient influence function for $\psi_f$ is 
$$ 2 (\phi_1 - \phi_0) \Big( p_1(Y) - p_0(Y) \Big)  . $$
 If $D_f=\text{KL}$, then the efficient influence function for $\psi_f$ is 
 $$ \phi_1\left( \log \left( \frac{p_1(Y)}{p_0(Y)} \right) \right) - \phi_0 \left( \frac{p_1(Y)}{p_0(Y)}  \right) . $$
 \end{corollary}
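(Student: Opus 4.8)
The plan is to specialize the general efficient influence function for $\psi_f$ from Theorem \ref{thm:eifdenseff} to the two specific choices $D_f=L_2^2$ and $D_f=\text{KL}$, by computing the relevant partial derivatives of the discrepancy function $f$ and substituting into the expressions for $\lambda_1$ and $\lambda_0$. Recall from Theorem \ref{thm:eifdenseff} that the efficient influence function is $\phi_1(\lambda_1(Y)) + \phi_0(\lambda_0(Y))$ with
\begin{align*}
\lambda_1(y) &= p_0(y)\, f'_1\big(p_1(y),p_0(y)\big), \\
\lambda_0(y) &= f\big(p_1(y),p_0(y)\big) + p_0(y)\, f'_2\big(p_1(y),p_0(y)\big).
\end{align*}
So the entire task reduces to evaluating $f$, $f'_1$, and $f'_2$ at the two counterfactual densities for each divergence.

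First I would treat the $L_2^2$ case, where $f(p,q)=(p-q)^2/q$. Here I compute $f'_1(p,q)=2(p-q)/q$, so that $\lambda_1(y)=p_0(y)\cdot 2\{p_1(y)-p_0(y)\}/p_0(y)=2\{p_1(y)-p_0(y)\}$. For $\lambda_0$ I differentiate $f$ in its second argument to get $f'_2(p,q)=-2(p-q)/q - (p-q)^2/q^2$, and after substituting and simplifying the additive combination $f+q f'_2$ collapses to $-2\{p_1(y)-p_0(y)\}$, giving $\lambda_0(y)=-\lambda_1(y)$. Because $\lambda_0=-\lambda_1$, the two terms combine as $2(\phi_1-\phi_0)(p_1(Y)-p_0(Y))$ by linearity of the map $T\mapsto\phi_a(T)$ in \eqref{eq:eifmap}, which is exactly the claimed expression.

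Next I would handle the KL case, where $f(p,q)=(p/q)\log(p/q)$. Differentiating in the first argument gives $f'_1(p,q)=\{\log(p/q)+1\}/q$, so $\lambda_1(y)=p_0(y)f'_1=\log(p_1(y)/p_0(y))+1$. For $\lambda_0$ I differentiate $f$ in its second argument, $f'_2(p,q)=-(p/q^2)\log(p/q)-p/q^2$, and then form $f+qf'_2$; the logarithmic terms cancel and the combination simplifies to $-p_1(y)/p_0(y)$, so $\lambda_0(y)=-p_1(y)/p_0(y)$. Again invoking linearity of $\phi_a$, the additive constant $+1$ inside $\lambda_1$ contributes $\phi_1(1)$, which vanishes because the plug-in and centering terms in \eqref{eq:eifmap} cancel for a constant $T$; this leaves $\phi_1(\log(p_1(Y)/p_0(Y))) - \phi_0(p_1(Y)/p_0(Y))$, matching the stated form.

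The calculations themselves are routine chain-rule differentiations, so the only point demanding care is the bookkeeping of the cancellations in $\lambda_0=f+qf'_2$, where several terms involving the density ratio must be tracked to confirm they collapse to the clean expressions above; this is where an error is most likely to creep in. A useful sanity check, which I would carry out, is that these specializations are consistent with Corollary \ref{cor:mb}: the quantity $f+qf'_2$ appearing in $\lambda_0$ is precisely the integrand combination computed there (evaluated at $p_1,p_0$ rather than $p_a,g$), so the KL and $L_2^2$ entries of that corollary immediately pin down $\lambda_0$ and guard against sign or factor mistakes.
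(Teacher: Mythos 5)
Your proposal is correct and follows the same route as the paper: the paper's proof of this corollary is simply the tabulation of $f$, $f_1'$, and $f_2'$ for each distance, substituted into the $\lambda_1,\lambda_0$ formulas of Theorem \ref{thm:eifdenseff}, with the constant $+1$ in the KL case annihilated by linearity of $\phi_a$ exactly as you note. Your computations of the derivatives and the cancellations in $f+qf_2'$ all check out against the paper's listed expressions.
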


\bigskip

The fact that $\lambda_1=-\lambda_0$ for $L_2^2$ projections simplifies the form of our proposed estimators, as we will detail further in the next section. We also note that the influence function  reduces to zero when $p_1=p_0$, which presents some complications for inference; this will be discussed in the next section as well. \\ 

As mentioned in Section \ref{sec:modsel}, for the purposes of model selection and aggregation it is also useful to consider the distance between $p_a$ and a fixed candidate $g$; we give the corresponding efficient influence function here. \\

\begin{proposition} \label{prop:eifmodsel}
In an unrestricted nonparametric model, the efficient influence function for 
$\Delta_f(g)=\int f\left( {p_a(y)} , {g(y)} \right) g(y) \ dy$ for $g$  fixed and known is given by
$$ \phi_a\left( g(Y) f_1'\Big( p_a(Y), g(Y) \Big) \right).  $$
If $D_f=L_2^2$ then this influence function reduces to 
$$ 2 \phi_a\Big( p_a(Y) - g(Y) \Big) . $$
\end{proposition}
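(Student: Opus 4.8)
The plan is to recognize $\Delta_f(g)$ as an instance of the integral functional treated in Lemma \ref{lem:densfunc}, the only new wrinkle being that the integrand now carries an explicit dependence on $y$ through the fixed, known density $g$. Concretely, define $H(y,u) = f(u,g(y))\, g(y)$, so that
\begin{equation*}
\Delta_f(g) = \int f\big(p_a(y),g(y)\big)\, g(y)\ dy = \int H\big(y,p_a(y)\big)\ dy .
\end{equation*}
Since $g$ is fixed and known, it is not a function of $\Pb$; the only dependence of $\Delta_f(g)$ on $\Pb$ is through $p_a(y) = \int \eta_a(y\mid x)\ d\Pb(x)$, exactly as in Lemma \ref{lem:densfunc}.

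First I would observe that the derivation underlying Lemma \ref{lem:densfunc} goes through verbatim when the single-argument map $h(\cdot)$ is replaced by a map $H(y,\cdot)$ that depends on $y$ only through quantities held fixed under perturbations of $\Pb$. Indeed, perturbing $\Pb$ to $\overline\Pb$ changes only $p_a$, so a first-order Taylor expansion of $u \mapsto H(y,u)$ around $p_a(y)$, integrated in $y$, yields the same two-term von Mises expansion as in the lemma, but with $h'$ replaced by the partial derivative $\partial_2 H$ and $h''$ by $\partial_{22} H$. Consequently the efficient influence function is $\phi_a\big(\partial_2 H(Y,p_a(Y))\big)$, and the remainder is again quadratic in the nuisance errors, so pathwise differentiability (and hence that this is the \emph{efficient} influence function, via Lemma \ref{lem:exptoeif}) follows as before. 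This requires only that $f$ be twice continuously differentiable in its first argument, so that $H(y,\cdot)$ inherits the smoothness assumed in Lemma \ref{lem:densfunc}.

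It then remains to compute the partial derivative and specialize. Differentiating $H(y,u)=f(u,g(y))\,g(y)$ in its second argument gives $\partial_2 H(y,u) = g(y)\, f_1'\big(u,g(y)\big)$, whence
\begin{equation*}
\partial_2 H\big(Y,p_a(Y)\big) = g(Y)\, f_1'\big(p_a(Y),g(Y)\big),
\end{equation*}
which is precisely the argument of $\phi_a$ in the stated influence function. For the $L_2^2$ case, $f(p,q)=(p-q)^2/q$ gives $f_1'(p,q)=2(p-q)/q$, so the factor $g(Y)$ cancels and $g(Y)\, f_1'\big(p_a(Y),g(Y)\big) = 2\big(p_a(Y)-g(Y)\big)$, recovering $2\,\phi_a\big(p_a(Y)-g(Y)\big)$ by linearity of the map $T\mapsto\phi_a(T)$.

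The only genuinely substantive step is the first one: confirming that the fixed $y$-dependence introduced by $g$ does not disturb the structure of the von Mises expansion. I expect this to be routine rather than a real obstacle, since $g$ is held constant under the distributional perturbation and the expansion is linear in the integrand; the remaining differentiation and the $L_2^2$ specialization are mechanical.
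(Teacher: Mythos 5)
Your proposal is correct and follows essentially the same route as the paper, which handles exactly this ``fixed known density'' case inside the proof of Theorem \ref{thm:eifdenseff}: there the EIF contribution when $p_0$ is known is derived from Lemma \ref{lem:densfunc} and Lemma \ref{lem:exptoeif} with $h_1'(y)=p_0(y)f_1'(p_1(y),p_0(y))$, which is your $\partial_2 H$ with $g$ in place of $p_0$. Your explicit remark that the $y$-dependence of the integrand through the fixed $g$ does not disturb the expansion is the same observation the paper makes in the proof of Theorem \ref{lem:eifmb} (``$g(y;\beta)$ is a known constant not depending on $\Pb$''), and the $L_2^2$ specialization is computed correctly.
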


\medskip

\section{Estimation and Inference}
\label{sec:estimation}

In this section we present doubly robust-style estimators of the proposed density functions and density effects, based on the functional expansions from Lemma \ref{lem:densfunc} and the efficient influence function results in Theorems \ref{lem:eifmb}--\ref{thm:eifdenseff}. We study their rates of convergence, and show they can be $n^{-1/2}$ consistent and 
asymptotically efficient under weak nonparametric conditions. \\

\subsection{Density Functions}
\label{sec:densfnest}

Here let $\hat \pi_a(x)$ and
$\hat\eta_a(y \mid x)$ denote initial estimators of the propensity score and 
conditional density functions $\pi_a(x) = \Pb(A=a \mid X=x)$ and 
$\eta_a(y \mid x) = \frac{\partial}{\partial y} \Pb(Y \leq y \mid X=x, A=a)$, for example 
based on generic regression estimators and their numerical derivatives (or for the latter one can use a regression of a kernel transformed version of the outcome). 
Also let $\hat p_a(y) = \Pn\{ \widehat\eta_a(y \mid X)\}$ denote the plug-in estimator 
of the counterfactual density under $A=a$, 
where
$\Pn\{ h(Z)\} = n^{-1}\sum_i h(Z_i)$,
and let 
\begin{equation}\label{eq:mbplugin}
\widehat{m}(\beta) \equiv \int \frac{\partial g(y;\beta)}{\partial \beta} 
\left\{ f\Big(\widehat{p}_a(y) , g(y;\beta) \Big) + g(y;\beta) f_2'\Big(\widehat{p}_a(y) , g(y;\beta) \Big) \right\} dy . 
\end{equation}
denote the plug-in estimator of the moment condition $m(\beta)$, and similarly for $\psi_f$. \\

\begin{remark} \label{rem:initdens}
Although we suggest basing \eqref{eq:mbplugin} on the plug-in estimator $\widehat{p}_a(y) = \Pn\{ \widehat\eta_a(y \mid X)\}$ of the counterfactual density, one could just as well use other estimators (e.g., inverse-probability-weighted, or doubly robust, as in \citet{kim202xcausal}). 
 Nonetheless, all results in this paper will only depend on high-level second-order rate conditions for estimating $p_a(y)$, which would be satisfied for the simple plug-in estimator as long as similar conditions hold for the underlying density estimator $\widehat\eta_a(y \mid x)$.  We prove this in Appendix \ref{app:initdens}, showing that the mean squared error of $\widehat{p}_a(y)$ is upper bounded by an integrated version of that of $\widehat\eta_a(y \mid x)$.  \\
\end{remark}

To ease notation we let $\widehat\phi_a(T) = \phi_a(T;\widehat\Pb)$ denote the estimated
version of the efficient influence function   given in \eqref{eq:eifmap}.
Then our proposed  projection estimators  are
given by approximate solutions in $\beta$ (up to $o_\Pb(1/\sqrt{n})$ error) to 
\begin{equation}
\widehat{m}(\beta) + \Pn \left\{ \widehat\phi_a\Big(\widehat\gamma_f(Y;\beta)\Big) \right\} = o_\Pb(1/\sqrt{n}) \label{eq:onestep}
\end{equation}
In other words the estimators are  one-step bias-corrected estimators (of the moment condition and the parameter itself, respectively), which take the plug-in estimator 
and add  an estimate of the bias by averaging an estimate of the influence function. \\

\begin{remark} \label{rem:splitting}
For simplicity, in the following results we assume the various nuisance estimates in $\widehat\Pb$ are constructed
from a single separate independent sample, of the same size $n$ as the estimation sample on which $\Pn$ operates. 
Alternatively, if the same observations
are used both for estimating nuisance functions and averaging estimates of the influence function, one generally needs to rely on empirical
process conditions to avoid overfitting.
In practice, with iid data, one can always obtain separate independent samples by randomly
splitting the data in half (or in folds); further, to regain full sample size efficiency one can
always swap the samples, repeat the procedure, and average the results, popularly called cross-fitting and used for example by 
\citet{bickel1988estimating, chernozhukov2018double, robins2008higher, zheng2010asymptotic}. 
 In this paper, to simplify notation we always
analyze a single split procedure, with the understanding that extending to an analysis of an
average across independent splits is straightforward. \\
\end{remark}

Our first propositions give the form of the plug-in and bias-corrected projection estimators when using a linear series with $L_2^2$ distance, and an exponential family model with KL divergence, which take a particularly simple form. 

\bigskip

\begin{proposition} \label{prop:estexpl2}
If $D_f=L_2^2$, the support of $Y$ is $[0,1]$, and $g(y;\beta) = 1 +
\beta^\T b(y)$ is the truncated series in Example \ref{ex:series},
with $b(\cdot)$ an orthogonal series with $\int b_j(y) \ dy = 0$ and
$\int b_j(y) b_k(y) \ dy = \one(j=k)$, then the plug-in estimator of
$\beta$ is
\begin{equation*}
\widehat\beta = \Pn \{ \widehat\mu_a(X;b) \} , 
\end{equation*}
where $\widehat\mu_a(x;b)$ is an estimate of $\mu_a(x;b)=\E\{b(Y) \mid
X=x, A=a)$. In contrast, the proposed one-step estimator in
\eqref{eq:onestep} is given by 
\begin{equation}
\widehat\beta = \Pn \left[ \frac{\one(A=a)}
{\widehat\pi_a(X)} \Big\{ b(Y) - \widehat\mu_a(X;b) \Big\} + 
\widehat\mu_a(X;b) \right] .
\end{equation}
\end{proposition}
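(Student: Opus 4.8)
The plan is to solve the two defining equations explicitly, exploiting the fact that for $D_f=L_2^2$ with an orthonormal series the relevant integrands are linear in $\beta$, so that both estimators reduce to closed-form averages.

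For the plug-in estimator I would start from the plug-in moment $\widehat m(\beta)$ in \eqref{eq:mbplugin} and specialize its integrand via Corollary \ref{cor:mb}, which for $D_f=L_2^2$ equals $2\{g(y;\beta) - \widehat p_a(y)\}$. Since $g(y;\beta) = 1 + \beta^\T b(y)$ has $\partial g(y;\beta)/\partial\beta = b(y)$, this gives $\widehat m(\beta) = 2\int b(y)\{1 + \beta^\T b(y) - \widehat p_a(y)\}\,dy$. The orthonormality conditions $\int b_j(y)\,dy = 0$ and $\int b_j(y)b_k(y)\,dy = \one(j=k)$ collapse the first two terms to $\beta$, leaving $\widehat m(\beta) = 2\{\beta - \int b(y)\widehat p_a(y)\,dy\}$. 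I would then interchange the integral with the empirical average in $\widehat p_a(y) = \Pn\{\widehat\eta_a(y\mid X)\}$ to identify $\int b(y)\widehat p_a(y)\,dy = \Pn\{\widehat\mu_a(X;b)\}$, where $\widehat\mu_a(x;b) = \int b(y)\widehat\eta_a(y\mid x)\,dy$ estimates $\mu_a(x;b) = \E\{b(Y)\mid X=x, A=a\}$. Solving $\widehat m(\beta)=0$ then yields the stated plug-in form.

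For the one-step estimator the crucial simplification is Corollary \ref{cor:eif_l2kl}, which gives $\gamma_f(y;\beta) = -2\,\partial g(y;\beta)/\partial\beta = -2b(y)$, independent of both $\beta$ and $p_a$. Hence the correction term in \eqref{eq:onestep} is $\Pn\{\widehat\phi_a(-2b(Y))\}$, and I would expand $\widehat\phi_a(-2b(Y)) = \phi_a(-2b(Y);\widehat\Pb)$ using the map \eqref{eq:eifmap}, plugging $\widehat\pi_a$ for $\pi_a$ and noting that the estimated conditional mean of $-2b(Y)$ given $X, A=a$ is $-2\widehat\mu_a(X;b)$. The key step is the cancellation of the centering term: because $\widehat\Pb$ uses the empirical distribution over covariates, the estimated marginal mean $\E_{\widehat\Pb}\{\E_{\widehat\Pb}(-2b(Y)\mid X, A=a)\}$ equals $-2\Pn\{\widehat\mu_a(X;b)\}$, which exactly offsets the empirical average of the $-2\widehat\mu_a(X;b)$ term. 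This leaves $\Pn\{\widehat\phi_a(-2b(Y))\} = -2\Pn[\frac{\one(A=a)}{\widehat\pi_a(X)}\{b(Y) - \widehat\mu_a(X;b)\}]$. Adding this to $\widehat m(\beta) = 2\{\beta - \Pn\{\widehat\mu_a(X;b)\}\}$, setting the sum to zero, and dividing by $2$ produces the claimed doubly robust form.

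The main obstacle is bookkeeping rather than genuine mathematical difficulty: one must justify the interchange of integration and summation (Fubini, using finiteness of the basis integrals), and, more importantly, track the centering term in \eqref{eq:eifmap} carefully and verify its cancellation, since a careless treatment would leave a spurious constant. It is also worth emphasizing that because $\gamma_f$ does not depend on $\beta$, equation \eqref{eq:onestep} is exactly linear in $\beta$; its solution is therefore closed-form, so the $o_\Pb(1/\sqrt n)$ tolerance can be taken to be exactly zero and no iterative estimating-equation argument is required.
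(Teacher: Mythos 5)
Your derivation is correct and is exactly the computation the paper intends (it treats this proposition as immediate from Corollaries \ref{cor:mb}, \ref{cor:l2mbeta}, and \ref{cor:eif_l2kl} plus orthonormality of $b$, and gives no separate proof in the appendix): specializing $\widehat m(\beta)$ to $2\{\beta - \Pn\widehat\mu_a(X;b)\}$, noting $\gamma_f = -2b(y)$ is free of $\beta$ and $p_a$, and verifying that the centering constant in $\phi_a$ cancels the averaged $-2\widehat\mu_a(X;b)$ term. Your added observations — that \eqref{eq:onestep} is exactly linear in $\beta$ so the tolerance can be taken to be zero, and that the centering cancellation must be tracked explicitly — are accurate and consistent with the paper.
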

 
\bigskip
 
\begin{proposition} \label{prop:estexpkl}
If $D_f=\text{KL}$ and $g(y;\beta)= \exp\{ \beta^\T b(y) - C(\beta)\}$ is the exponential family in Example \ref{ex:expfam}, then the
plug-in estimator solving $\widehat{m}(\widehat\beta)=0$ is the
solution in $\beta$ to
\begin{equation*}
\int \Big[ b(y) - \Pn \{ \widehat\mu_a(X;b) \} \Big] \exp\Big\{ \beta^\T b(y)   \Big\}  \ dy = 0
\end{equation*}
where $\widehat\mu_a(x;b)$ is an estimate of $\mu_a(x;b)=\E\{b(Y) \mid
X=x, A=a)$. In contrast, the proposed one-step estimator in
\eqref{eq:onestep} is given by the solution in $\beta$ to
\begin{equation} \label{est:estexpkl}
\int \left( b(y) - \Pn \left[ \frac{\one(A=a)}
{\widehat\pi_a(X)} \Big\{ b(Y) - \widehat\mu_a(X;b) \Big\} + 
\widehat\mu_a(X;b) \right] \right) \exp\Big\{ \beta^\T b(y) \Big\}  \ dy = 0 . 
\end{equation}
 \end{proposition}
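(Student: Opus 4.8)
The plan is to reduce both displays to the stated integral equations by inserting the exponential-family structure into the plug-in moment $\widehat m(\beta)$ of \eqref{eq:mbplugin} and into the influence-function correction of \eqref{eq:onestep}, and then clearing the normalizing constant $C(\beta)$. First I would record the two derivative identities. Since $\log g(y;\beta) = \beta^\T b(y) - C(\beta)$, we have $\partial_\beta \log g(y;\beta) = b(y) - C'(\beta)$ with $C'(\beta) = \partial_\beta C(\beta) = \int b(y) g(y;\beta)\,dy$ (the mean of $b$ under $g$), and hence $\partial_\beta g(y;\beta) = g(y;\beta)\{b(y) - C'(\beta)\}$.

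For the plug-in estimator, I would invoke Corollary \ref{cor:mb}, which shows the bracketed integrand of \eqref{eq:mbplugin} equals $1 - \widehat p_a(y)/g(y;\beta)$ in the KL case. Substituting this together with the derivative identity into \eqref{eq:mbplugin} gives $\widehat m(\beta) = \int \{b(y) - C'(\beta)\}\{g(y;\beta) - \widehat p_a(y)\}\,dy$. Because both $g(\cdot;\beta)$ and $\widehat p_a(y) = \Pn\{\widehat\eta_a(y\mid X)\}$ integrate to one, the constant $C'(\beta)$ contribution is $-C'(\beta)\int\{g-\widehat p_a\}\,dy = 0$, leaving $\widehat m(\beta) = \int b(y) g(y;\beta)\,dy - \Pn\{\widehat\mu_a(X;b)\}$, where $\widehat\mu_a(x;b) = \int b(y)\widehat\eta_a(y\mid x)\,dy$ and I have used $\int b(y)\widehat p_a(y)\,dy = \Pn\{\widehat\mu_a(X;b)\}$. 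Setting $\widehat m(\beta)=0$ and multiplying through by $e^{C(\beta)} = \int \exp\{\beta^\T b(y)\}\,dy$ yields the stated plug-in equation.

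For the one-step estimator, I would use Corollary \ref{cor:eif_l2kl}, which gives $\gamma_f(y;\beta) = -\partial_\beta \log g(y;\beta) = -\{b(y) - C'(\beta)\}$ in the KL/exponential-family case; note this is free of $p_a$, so $\widehat\gamma_f = \gamma_f$ requires no density estimate. Since the map $T \mapsto \phi_a(T)$ in \eqref{eq:eifmap} is linear and annihilates constants, $\widehat\phi_a(\widehat\gamma_f(Y;\beta)) = -\widehat\phi_a(b(Y))$; averaging, the plug-in copy $\Pn\{\widehat\mu_a(X;b)\}$ and the centering copy $-\Pn\{\widehat\mu_a(X;b)\}$ in $\phi_a$ cancel, so $\Pn\{\widehat\phi_a(b(Y))\} = \Pn\big[\frac{\one(A=a)}{\widehat\pi_a(X)}\{b(Y)-\widehat\mu_a(X;b)\}\big]$. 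Adding this correction to $\widehat m(\beta)$ in \eqref{eq:onestep} recombines the terms into the doubly robust estimator $\widehat\psi_b = \Pn\big[\frac{\one(A=a)}{\widehat\pi_a(X)}\{b(Y)-\widehat\mu_a(X;b)\} + \widehat\mu_a(X;b)\big]$, giving $\int b(y) g(y;\beta)\,dy - \widehat\psi_b = 0$; clearing $e^{C(\beta)}$ as before produces \eqref{est:estexpkl}.

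The manipulations are routine; the one genuine care-point, and the main obstacle, is the normalization bookkeeping — verifying that the $C'(\beta)$ contributions vanish (which relies on $g(\cdot;\beta)$ and $\widehat p_a$ each integrating to one, so this is where any requirement that $\widehat\eta_a$ be a proper density enters) and that the centering term in $\phi_a$ exactly cancels the interior plug-in average so that only the weighted-residual piece survives. Once these cancellations are established, factoring out $e^{C(\beta)}$ is immediate and delivers both claimed equations.
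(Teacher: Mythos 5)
Your derivation is correct and follows essentially the route the paper intends: specialize the plug-in moment via Corollary \ref{cor:klmbeta} (equivalently, Corollary \ref{cor:mb} with $f(p,q)+qf_2'(p,q)=1-p/q$), use $\partial_\beta g = g\{b - C'(\beta)\}$ together with the fact that $\phi_a$ is linear and kills constants so the $C'(\beta)$ and centering terms drop, and clear $e^{C(\beta)}$. The normalization care-point you flag (that $\widehat\eta_a(\cdot\mid x)$ must integrate to one for the $C'(\beta)$ contribution to vanish) is the right one to check and is handled correctly.
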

 
 \bigskip
 
 Propositions  \ref{prop:estexpl2}-\ref{prop:estexpkl} shows that the plug-in and bias-corrected estimators for $L_2^2$ and KL projections  solve simple estimating equations, which only require one to first estimate the components $\E\{ \mu_a(X;b) \}$; importantly, straightforward doubly robust estimators as in \eqref{est:estexpkl} are available, and do not depend on the estimating equation parameter $\beta$. This is not necessarily true for other model/distance combinations; in general $\widehat\gamma_f$ would have to be estimated at each $\beta$ in order to solve \eqref{eq:onestep}, which could be quite computationally intensive.  \\
 
Next we give the main result of this section, which shows the rate of convergence for the proposed estimator. Importantly the rate involves products of nuisance estimation errors, allowing for $n^{-1/2}$ consistency and asymptotic normality in nonparametric models, and even when the nuisance estimators are generic and flexibly fit. \\

\begin{theorem}\label{thm:densfnrate}
Let  $\eta = (\pi_a,\eta_a)$, and $\varphi(Z;\beta,\eta) = m(\beta;\eta) + \phi_a(\gamma_f(Y;\beta),\eta)$. Assume:
\begin{enumerate}
\item The functions $\gamma_f$ and $1/\widehat\pi_a$ are bounded above by some constant, and $\gamma_f$ is differentiable in $p_a(y)$, with derivative bounded uniformly above by $\delta$. 
\item The function class $\{ \varphi(z;\beta,\eta) : \beta \in \R^p\}$ is Donsker in $\beta$ for any fixed $\eta$.
\item The estimators are consistent in the sense that $\widehat\beta-\beta_0 = o_\Pb(1)$ and $\| \widehat\eta - \eta_0 \| = o_\Pb(1)$.
\item The map $\beta \mapsto \Pb\{ \varphi(Z;\beta,\eta)\}$ is differentiable at $\beta_0$ uniformly in $\eta$, with nonsingular derivative matrix $\frac{\partial}{\partial\beta} \Pb\{\varphi(Z;\beta,\eta)\} |_{\beta=\beta_0} = V(\beta_0,\eta)$, where $V(\beta_0,\widehat\eta) \inprob V(\beta_0,\eta_0)$.
\end{enumerate}
Then 
\begin{align*}
\widehat\beta - \beta_0 &= -V(\beta_0,\eta_0)^{-1} (\Pn-\Pb)  \left\{ \phi_a\Big(\gamma_f(Y;\beta_0)\Big) \right\} \\
& \hspace{.5in} + O_\Pb\left(\| \widehat\pi_a - \pi_a \| \| \widehat\eta_a - \eta_a \| + \delta \| \widehat{p}_a - p_a \|^2  + o_\Pb\left( \frac{1}{\sqrt{n}} \right) \right).
\end{align*}
\end{theorem}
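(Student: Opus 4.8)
The plan is to treat $\widehat\beta$ as a $Z$-estimator and combine a linearization in $\beta$ with a von Mises / one-step debiasing analysis of the estimating function at the true parameter, using Lemma~\ref{lem:densfunc} to certify that the nuisance-induced bias is second order. Write $\widehat{\Psi}_n(\beta) = \widehat m(\beta) + \Pn\{\widehat\phi_a(\widehat\gamma_f(Y;\beta))\}$ for the function that $\widehat\beta$ sets to $o_\Pb(n^{-1/2})$, and $\Psi(\beta) = \Pb\{\varphi(Z;\beta,\eta_0)\}$ for its population analogue. Because $\phi_a(T;\Pb)$ is mean-zero under $\Pb$ and $m(\beta_0)=0$ by Proposition~\ref{prop:mb}, we have $\Psi(\beta_0)=0$, so $\beta_0$ is the population root and the problem is to expand the deviation of $\widehat\beta$ from it.

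First I would carry out the linearization in $\beta$. Using consistency of $\widehat\beta$ and $\widehat\eta$ (condition~3), the Donsker property in $\beta$ (condition~2) to make the empirical process $(\Pn-\Pb)\{\varphi(\cdot;\beta,\widehat\eta)\}$ stochastically equicontinuous over a shrinking neighborhood of $\beta_0$, and the differentiability with nonsingular, consistently estimated derivative $V(\beta_0,\eta)$ (condition~4), the standard $Z$-estimator master argument gives
\[
\widehat\beta - \beta_0 = -\,V(\beta_0,\eta_0)^{-1}\,\widehat{\Psi}_n(\beta_0) + o_\Pb\big(\|\widehat\beta-\beta_0\|\big) + o_\Pb(n^{-1/2}),
\]
so after absorbing the self-referential term it remains only to expand $\widehat{\Psi}_n(\beta_0)$, the estimating function evaluated at the \emph{truth}.

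Next I would recognize $\widehat{\Psi}_n(\beta_0) = m(\beta_0;\widehat p_a) + \Pn\{\widehat\phi_a(\widehat\gamma_f(Y;\beta_0))\}$ as a one-step debiased estimator of the vector functional $m(\beta_0)$, whose efficient influence function is $\phi_a(\gamma_f(Y;\beta_0))$ by Theorem~\ref{lem:eifmb}. Splitting $\Pn=(\Pn-\Pb)+\Pb$ isolates a deterministic conditional bias $m(\beta_0;\widehat p_a)+\Pb\{\widehat\phi_a(\widehat\gamma_f)\}$, which Lemma~\ref{lem:densfunc} (applied componentwise, identifying the integrand of $m$ with $\int h(p_a)\,dy$ so that $h'=\gamma_f$) equates to the second-order remainder $R_2(\widehat\Pb,\Pb)$, and an empirical-process term $(\Pn-\Pb)\{\widehat\phi_a(\widehat\gamma_f)\}$. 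I would decompose the latter as $(\Pn-\Pb)\{\phi_a(\gamma_f(Y;\beta_0))\} + (\Pn-\Pb)\{\widehat\phi_a(\widehat\gamma_f)-\phi_a(\gamma_f)\}$; under the single-split construction of Remark~\ref{rem:splitting}, the drift term is $o_\Pb(n^{-1/2})$ because consistency of the nuisances (condition~3), together with the Lipschitz control of $\gamma_f$ in $p_a$, forces its $L_2(\Pb)$ norm to zero, and a sample-split function with vanishing $L_2(\Pb)$ norm yields a centered average of that order.

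The main obstacle is controlling $R_2$, which is where the doubly robust product structure must emerge. From the explicit form in Lemma~\ref{lem:densfunc}, $R_2$ splits into a cross term $\int\!\int \gamma_f(\,\cdot\,;\widehat p_a)\{\pi_a/\widehat\pi_a-1\}\{\eta_a-\widehat\eta_a\}\,dy\,d\Pb(x)$ and a curvature term $\tfrac12\int (\partial\gamma_f/\partial p_a)(p^*_a)\{\widehat p_a-p_a\}^2\,dy$. For the cross term I would use boundedness of $\gamma_f$ and of $1/\widehat\pi_a$ (condition~1) to write $|\pi_a/\widehat\pi_a-1|\lesssim|\widehat\pi_a-\pi_a|$, and then Cauchy--Schwarz delivers $O_\Pb(\|\widehat\pi_a-\pi_a\|\,\|\widehat\eta_a-\eta_a\|)$; for the curvature term I would bound $\partial\gamma_f/\partial p_a$ by $\delta$ (condition~1) to get $O_\Pb(\delta\|\widehat p_a-p_a\|^2)$. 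A subtle point to flag is that the correction uses the plug-in $\widehat\gamma_f=\gamma_f(\cdot;\beta_0,\widehat p_a)$ rather than $\gamma_f(\cdot;\beta_0,p_a)$; the resulting discrepancy between the linear term above and the true-$p_a$ influence function is again of the same Lipschitz order $\delta$ and is absorbed into the $\delta\|\widehat p_a-p_a\|^2$ remainder. Collecting the leading term with these bounds and substituting into the Step~1 linearization reproduces the claimed expansion.
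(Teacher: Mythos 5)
Your proposal is correct and follows essentially the same route as the paper: the paper packages your Step~1 linearization as a standalone $Z$-estimator master lemma (Lemma~\ref{lem:esteq}, with the same Donsker/consistency/sample-splitting decomposition of the empirical estimating equation), and then identifies the nuisance bias $R_n=\Pb\{\varphi(Z;\beta_0,\widehat\eta)-\varphi(Z;\beta_0,\eta_0)\}$ with the von Mises remainder $R_2(\widehat\Pb,\Pb)$ from Lemma~\ref{lem:densfunc}, bounding it exactly as you do via Cauchy--Schwarz and the $\delta$ curvature bound. The only cosmetic difference is organizational, and your flagged subtlety about evaluating $\gamma_f$ at $\widehat p_a$ is handled in the paper by stating the von Mises expansion with the influence function already evaluated at $\overline\Pb$, so the identification $m(\beta_0;\widehat p_a)+\Pb\{\widehat\phi_a(\widehat\gamma_f)\}=R_2$ is exact.
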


\bigskip

\begin{remark}
In a slight abuse of notation, Theorem \ref{thm:densfnrate} holds when we define $\| \widehat\eta_a - \eta_a \|^2 = \| \zeta_a \|^2 \equiv  \int \zeta_a(x)^2 \ d\Pb(x)$ for integrated error $\zeta_a(x) = \int | \widehat\eta_a(y \mid x) - \eta_a(y \mid x)| \ dy$. This implies it also holds if we define $\| \widehat\eta_a - \eta_a \|^2 = \int \{\widehat\eta_a(y \mid x)  - \eta_a(y \mid x) \}^2 \ dy \ d\Pb(x)$, or 
$\| \widehat\eta_a - \eta_a \|^2 = \int \{\widehat\eta_a(y \mid x)  - \eta_a(y \mid x) \}^2 \ d\Pb(y,x)$ if $\eta_a(y \mid x)$ is bounded from below. \\
\end{remark}

Importantly, Theorem \ref{thm:densfnrate} shows that $\widehat\beta$
 attains substantially faster rates than its nuisance estimators $\widehat\eta$,
and can be asymptotically efficient under weak nonparametric conditions, 
for example attaining the minimax lower bound in Corollary \ref{cor:minimax}. 
First we give some description of the assumed conditions. 
The first condition ensures the influence function is not too complex 
as a function of $\beta$ (though allowing arbitrary complexity in $\eta$). 
The second condition merely requires consistency of $(\widehat\beta,\widehat\eta)$ at any rate. 
The third condition requires some smoothness in $\beta$, so as to allow a delta method argument. 
These conditions ensure $\widehat\beta$ has a rate of 
convergence that is second-order in the nuisance estimation error, thus attaining faster rates
than the nuisance estimators. Thus, for example, under standard $n^{-1/4}$-type rate conditions on $\widehat\eta$, 
the estimator $\widehat\beta$ is $n^{-1/2}$-consistent, asymptotically normal, and efficient. 
Importantly, these rates can be attained under smoothness, sparsity, or other structural conditions (e.g.,
additive modeling or bounded variation assumptions, etc.). For instance, if it is assumed that
all $d$-dimensional nuisance functions lie in a Holder class with smoothness index $s$ (i.e., partial
derivatives up to order $s$ exist and are Lipschitz) then the assumption of Theorem \ref{thm:densfnrate} would be
satisfied when $s > d/2$, i.e., the smoothness index is at least half the dimension. Alternatively,
if the functions are $s$-sparse then one would need $s = o(\sqrt{n})$ 
 up to log factors, as in \citet{farrell2015robust}. In these cases, asymptotically valid 95\% confidence intervals can be 
 constructed via the simple Wald form, $\widehat\beta \pm 1.96 \sqrt{\text{diag}[ \widehat\cov\{ \widehat\phi_a(\widehat\gamma_f(Y;\widehat\beta))\}/n]}$.  \\
 
 \begin{remark}
 In some prominent cases (for example, $L_2^2$ and KL projections, as shown in Corollary \ref{cor:eif_l2kl}), the function $\gamma_f$ does not depend on the counterfactual density $p_a(y)$ at all, so its derivative is exactly zero and $\delta=0$. 
In this case the second term in the second-order remainder in Theorem \ref{thm:densfnrate} drops out, making the proposed approach doubly robust in the usual sense, requiring no rate conditions on the initial pilot estimate of the counterfactual density. \\
 \end{remark}

\subsection{Density Effects}
\label{sec:denseffest}

Here we present doubly robust-style estimators of the density effects described in Section \ref{sec:denseff}, and study their rate of convergence.  
As before we first construct initial estimators $\hat \pi_a(x)$, 
$\hat\eta_a(y \mid x)$, and $\widehat{p}_a(y)=\Pn\{ \widehat\eta_a(y \mid X)\}$ of the propensity score and  
conditional and counterfactual densities. Estimated versions of $\phi_a(T)$ and $\lambda_a$
defined in Theorem \ref{thm:eifdenseff} follow accordingly. \\

Then the density effect estimators we  propose are defined as
\begin{equation}
\widehat\psi_f = \int f\Big( \widehat{p}_1(y) , \widehat{p}_0(y) \Big) \widehat{p}_0(y) \ dy + \Pn \left\{ \widehat\phi_1 \Big( \widehat\lambda_{1}(Y) \Big)  + \widehat\phi_0\Big( \widehat\lambda_0(Y) \Big)  \right\} ,  \label{eq:onestep_psi}
\end{equation}
which can again be viewed as one-step bias-corrected estimators, with plug-in bias estimated via an average of the estimated influence function. In practice, rather than estimating the conditional density $\eta_a$ and integrating over its $y$ argument, one could instead regress for example $\widehat\lambda_a$ on $X$ for the integral terms in the estimated influence function. \\

\begin{proposition} \label{prop:l2denseff}
If $D_f=L_2^2$ then the proposed density effect estimator can be written as
\begin{align*}
\widehat\psi_f &=  2 \ \Pn \bigg(  \frac{2A-1}{\widehat\pi_A(X)} \left[ \Big\{ \widehat{p}_1(Y) - \widehat{p}_0(Y) \Big\} -  \int \Big\{ \widehat{p}_1(y) - \widehat{p}_0(y) \Big\} \widehat\eta_A(y \mid X) \ dy  \right] \\
& \hspace{.4in} + \int \Big\{ \widehat{p}_1(y) - \widehat{p}_0(y) \Big\} \Big\{ \widehat\eta_1(y \mid X) - \widehat\eta_0(y \mid X)  \Big\} \ dy \bigg) -   \int \Big\{ \widehat{p}_1(y) - \widehat{p}_0(y) \Big\}^2 \ dy .
\end{align*}
\end{proposition}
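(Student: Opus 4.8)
The plan is to specialize the general one-step estimator \eqref{eq:onestep_psi} to $D_f = L_2^2$ and then collect terms. First I would evaluate the plug-in piece: with $f(p,q) = (p-q)^2/q$ we have $\int f(\widehat p_1, \widehat p_0)\widehat p_0 \, dy = \int\{\widehat p_1(y) - \widehat p_0(y)\}^2\,dy$. Next I would record the transformations entering the efficient influence function. By Corollary \ref{thm:eifdenseff_kl} (equivalently, by differentiating $f$ as in Theorem \ref{thm:eifdenseff}), these are $\lambda_1(y) = 2\{p_1(y) - p_0(y)\}$ and $\lambda_0(y) = -\lambda_1(y)$, so their plug-in versions satisfy $\widehat\lambda_1 = 2(\widehat p_1 - \widehat p_0) = -\widehat\lambda_0$. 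The antisymmetry $\widehat\lambda_0 = -\widehat\lambda_1$ is precisely what allows the two treatment arms to be combined through the single factor $(2A-1)/\widehat\pi_A(X)$.

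I would then expand $\Pn\{\widehat\phi_1(\widehat\lambda_1) + \widehat\phi_0(\widehat\lambda_0)\}$ using the definition of $\phi_a$ in \eqref{eq:eifmap}, which splits each summand into an inverse-probability-weighted residual, a conditional-mean ``plug-in'' term, and a centering constant. The conditional expectations here are $\widehat\E(\widehat\lambda_a \mid X, A=a) = \int\widehat\lambda_a(y)\widehat\eta_a(y\mid X)\,dy$, which equals $\pm 2\int\{\widehat p_1 - \widehat p_0\}\widehat\eta_a\,dy$. Substituting $\widehat\lambda_0 = -\widehat\lambda_1$ and using the indicators $\one(A=1),\one(A=0)$ to encode the sign via $(2A-1)$, the residual terms collapse to $2\Pn[\frac{2A-1}{\widehat\pi_A(X)}\{(\widehat p_1 - \widehat p_0)(Y) - \int(\widehat p_1 - \widehat p_0)\widehat\eta_A\,dy\}]$, which is the first term inside the outer $2\Pn(\cdots)$.

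The conditional-mean and centering terms are where care is needed, and the key device throughout is the identity $\Pn\{\widehat\eta_a(y \mid X)\} = \widehat p_a(y)$, which converts covariate averages of the conditional density estimate into the marginal estimate. The conditional-mean plug-in terms sum to $2\int\{\widehat p_1 - \widehat p_0\}(\widehat\eta_1 - \widehat\eta_0)\,dy$ inside $\Pn$, giving the second term in the outer bracket. The centering constant is $\E_{\widehat\Pb}\{\widehat\E(\widehat\lambda_1\mid X,1) + \widehat\E(\widehat\lambda_0 \mid X, 0)\}$; applying the same identity to integrate over $X$ turns this into $2\int\{\widehat p_1 - \widehat p_0\}(\widehat p_1 - \widehat p_0)\,dy = 2\int\{\widehat p_1 - \widehat p_0\}^2\,dy$. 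Combining this centering constant with the plug-in piece $\int\{\widehat p_1 - \widehat p_0\}^2\,dy$ produces the trailing term $-\int\{\widehat p_1 - \widehat p_0\}^2\,dy$, completing the stated expression.

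I expect the only real difficulty to be bookkeeping: keeping track of the signs introduced by $\lambda_0 = -\lambda_1$ and by the $(2A-1)$ encoding, and correctly reconciling the centering constant against the plug-in term, so that the $+\int\{\widehat p_1 - \widehat p_0\}^2$ from the plug-in and the $-2\int\{\widehat p_1 - \widehat p_0\}^2$ from the centering net to the displayed $-\int\{\widehat p_1 - \widehat p_0\}^2$. There is no analytic obstacle once $\Pn\{\widehat\eta_a(y\mid X)\} = \widehat p_a(y)$ is invoked to pass between conditional and marginal density estimates.
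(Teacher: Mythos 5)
Your proposal is correct and is exactly the computation the paper intends: specialize the one-step estimator \eqref{eq:onestep_psi} to $f(p,q)=(p-q)^2/q$, use $\lambda_1 = 2(p_1-p_0) = -\lambda_0$ from Theorem \ref{thm:eifdenseff} (as recorded in Corollary \ref{thm:eifdenseff_kl}), expand $\phi_a$ from \eqref{eq:eifmap}, and invoke $\Pn\{\widehat\eta_a(y\mid X)\}=\widehat p_a(y)$ so that the centering constant $-2\int(\widehat p_1-\widehat p_0)^2$ nets against the plug-in $+\int(\widehat p_1-\widehat p_0)^2$ to give the trailing $-\int(\widehat p_1-\widehat p_0)^2$. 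All signs and the $(2A-1)/\widehat\pi_A(X)$ encoding check out.
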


\medskip

The estimator in Proposition \ref{prop:l2denseff} can be viewed as taking twice the doubly robust estimator of the mean of $(\widehat{p}_1(Y^1) - \widehat{p}_0(Y^1)) - (\widehat{p}_1(Y^0) - \widehat{p}_0(Y^0))$, which is $\int (\widehat{p}_1-\widehat{p}_0)(p_1-p_0)$, and subtracting a plug-in estimate of the $L_2^2$ distance. This is analogous to the standard one-step estimator of the expected (observational) density $\int p(x)^2 \ dx$ \citep{bickel1988estimating}, which takes twice an estimate of the mean of $\widehat{p}(X)$, i.e., $\int \widehat{p} p$, and subtracts the plug-in estimate $\int \widehat{p}^2$. For the expected density, the bias is just the integrated squared difference between $\widehat{p}$ and $p$;  in contrast, in our setting,  we show next that there  is an additional doubly robust error term, due to the confounding adjustment required for estimating counterfactual densities.  \\

\begin{theorem}\label{thm:denseffrate}
Assume $\lambda_a$ and $1/\widehat\pi_a$ are bounded above by some constant for $a=0,1$, and $\lambda_a$ is differentiable in $p_a(y)$, with derivative bounded uniformly above by $\delta_a$. 
Then 
\begin{align*}
\widehat\psi_f - \psi_f &=  (\Pn-\Pb) \left\{ \phi_1 \Big( \lambda_{1}(Y) \Big)  + \phi_0\Big( \lambda_0(Y) \Big)  \right\} \\ 
& \hspace{.5in} 
+ O_\Pb\left( \sum_{a=0}^1 \Big( \| \widehat\pi_a - \pi_a \| \| \widehat\eta_a - \eta_a \| + \delta_a \| \widehat{p}_a - p_a \|^2  \Big) + o_\Pb\left( \frac{1}{\sqrt{n}} \right) \right).
\end{align*}
\end{theorem}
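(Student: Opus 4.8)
The plan is to run the standard one-step / von Mises analysis, exactly paralleling the argument behind Lemma \ref{lem:densfunc} and Theorem \ref{thm:densfnrate}, but now for the bivariate integral functional $\psi_f = \int H(p_1,p_0)\,dy$ with $H(u,v) = f(u,v)\,v$, whose density-derivatives $\partial_u H = p_0 f_1'$ and $\partial_v H = f + p_0 f_2'$ are precisely $\lambda_1$ and $\lambda_0$ from Theorem \ref{thm:eifdenseff}. Writing $\phi = \phi_1(\lambda_1(Y)) + \phi_0(\lambda_0(Y))$ for the efficient influence function at the truth, $\widehat\phi = \widehat\phi_1(\widehat\lambda_1(Y)) + \widehat\phi_0(\widehat\lambda_0(Y))$ for its estimated version, and $\psi(\widehat\Pb) = \int f(\widehat p_1,\widehat p_0)\widehat p_0\,dy$ for the plug-in, the estimator in \eqref{eq:onestep_psi} is $\widehat\psi_f = \psi(\widehat\Pb) + \Pn\{\widehat\phi\}$. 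First I would add and subtract $\Pb\{\widehat\phi\}$ and $(\Pn-\Pb)\{\phi\}$ to obtain the exact decomposition
\begin{equation*}
\widehat\psi_f - \psi_f = (\Pn - \Pb)\{\phi\} + (\Pn-\Pb)\{\widehat\phi - \phi\} + \Big[\psi(\widehat\Pb) - \psi(\Pb) + \Pb\{\widehat\phi\}\Big],
\end{equation*}
where $\Pb\{\widehat\phi\}$ denotes the conditional mean given the auxiliary nuisance sample. The first term is the claimed leading empirical process term, so it remains only to show the last two terms are negligible or of second order.

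Second, I would dispatch the drift term $(\Pn-\Pb)\{\widehat\phi - \phi\}$ using the sample-splitting convention of Remark \ref{rem:splitting}: conditionally on the independent nuisance sample, $\widehat\phi - \phi$ is fixed, so this term has conditional mean zero and conditional variance at most $n^{-1}\|\widehat\phi - \phi\|^2$. The assumed boundedness of $\lambda_a$ and $1/\widehat\pi_a$, together with consistency of the nuisance estimators and the Lipschitz control of $\lambda_a$ in its density arguments (constant $\delta_a$), give $\|\widehat\phi - \phi\| = o_\Pb(1)$, so this term is $o_\Pb(n^{-1/2})$ by Chebyshev.

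Third, and this is the crux, I would control the conditional bias $\psi(\widehat\Pb) - \psi(\Pb) + \Pb\{\widehat\phi\}$, which is the second-order remainder of a von Mises expansion of $\int H(p_1,p_0)\,dy$. Mirroring the proof of Lemma \ref{lem:densfunc}, I would Taylor expand $H(\widehat p_1,\widehat p_0) - H(p_1,p_0)$ to second order. The first-order pieces $\int \lambda_a(\widehat p_a - p_a)\,dy$ combine with the corresponding first-order parts of $\Pb\{\widehat\phi\}$; the linear contributions cancel (and residual empirical-process gaps of the form $(\Pn-\Pb)\{\widehat\mu_a^{\lambda_a} - \mu_a^{\lambda_a}\}$ are $o_\Pb(n^{-1/2})$ by sample splitting), leaving exactly the first remainder term of Lemma \ref{lem:densfunc} with $h' \mapsto \lambda_a$, namely $\int\!\!\int \lambda_a\{\pi_a/\widehat\pi_a - 1\}(\eta_a - \widehat\eta_a)\,dy\,d\Pb(x)$, which Cauchy-Schwarz bounds by $\|\widehat\pi_a - \pi_a\|\,\|\widehat\eta_a - \eta_a\|$. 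The genuinely second-order Taylor terms are governed by the curvature of $H$: the diagonal pieces are controlled by $\partial_{p_a}\lambda_a$, bounded by $\delta_a$, yielding the $\delta_a\|\widehat p_a - p_a\|^2$ contributions, while the cross-density pieces in $(\widehat p_1-p_1)(\widehat p_0-p_0)$ are folded into these by AM-GM.

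The main obstacle will be the bookkeeping that $\widehat\phi$ uses the estimated transformation $\widehat\lambda_a$ rather than $\lambda_a$, and that $\lambda_1$ depends on both densities. Substituting $\widehat\lambda_a = \lambda_a + (\widehat\lambda_a - \lambda_a)$ produces an extra term $\Pb\{\widehat\phi_a(\widehat\lambda_a - \lambda_a)\}$ whose leading part is again a first-order integral $\int(\widehat\lambda_a - \lambda_a)(\widehat p_a - p_a)\,dy$; since $\widehat\lambda_a - \lambda_a$ is itself Lipschitz-controlled by $\|\widehat p_a - p_a\|$ and $\|\widehat p_{1-a} - p_{1-a}\|$ with constant of order $\delta_a$, this is absorbed into the $\delta_a\|\widehat p_a - p_a\|^2$ remainder (with cross terms again handled by AM-GM), and the piece multiplying $\pi_a/\widehat\pi_a - 1$ is third order and hence negligible. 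Summing the cross-term and curvature bounds over $a \in \{0,1\}$ delivers precisely the stated $O_\Pb$ remainder, completing the proof.
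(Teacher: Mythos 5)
Your proposal is correct and is essentially the proof the paper intends: the paper gives no separate argument for this theorem beyond noting it ``mimics that of Theorem \ref{thm:densfnrate},'' and your three-term decomposition (leading empirical-process term, drift term killed by sample splitting, and the von Mises second-order remainder $\psi(\widehat\Pb)-\psi(\Pb)+\Pb\{\widehat\phi\}$ bounded via the bivariate analogue of Lemma \ref{lem:densfunc} plus Cauchy--Schwarz, Taylor expansion, and AM--GM for the cross-density terms) is exactly the specialization of that argument to a parameter that is linear in itself, so no delta-method step is needed. Your bookkeeping of $\widehat\lambda_a$ versus $\lambda_a$ is also consistent with how the remainder in Lemma \ref{lem:densfunc} already carries $h'$ evaluated at the estimated density.
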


\bigskip

Theorem \ref{thm:denseffrate} (whose proof mimics that of Theorem \ref{thm:densfnrate}) shows that $\widehat\psi_f$ can
 attain faster rates than its nuisance estimators,
and can be asymptotically efficient under weak nonparametric conditions. 
The conditions and the form of the convergence rate  are similar to those of Theorem \ref{thm:densfnrate}, so we refer to our discussion there for more details. However we do comment on a few differences. First, for the density functions targeted in Theorem \ref{thm:densfnrate}, the moment condition $m(\beta)$, and resulting influence functions and estimators, can have a complicated dependence on $\beta$; in contrast, this is not an issue for the density effect $\psi_f$ since the influence function is linear in the parameter. Thus extra smoothness conditions on the influence function  used in Theorem \ref{thm:densfnrate} are not required in Theorem \ref{thm:denseffrate}.  Second, although in Theorem \ref{thm:densfnrate} the derivative bound $\delta$ can be exactly zero in some prominent cases, in general in Theorem \ref{thm:denseffrate} this will not be the case (e.g., for $L_2^2$ distance the derivative of $\lambda_a$ has absolute value equal to one). Therefore, for efficient estimation of density effects, we in general need an initial density estimator converging at $n^{-1/4}$ rate. However recall that, as described in Remark \ref{rem:initdens}, there exist nonparametric counterfactual density estimators with error upper bounded by $\| \widehat\pi_a - \pi_a \|$ or $\| \widehat\eta_a - \eta_a \|$, so that   $\| \widehat{p}_a - p_a \|^2$ would be of smaller or similar order compared to the product error preceding it. \\

There is a third distinction in density effect estimation. 
Under usual $n^{-1/4}$ rate conditions on the nuisance estimators,  Theorem \ref{thm:denseffrate} suggests  95\% confidence intervals of the form
\begin{equation}
\widehat\psi_f \pm 1.96 \sqrt{ \widehat\cov\left\{ \widehat\phi_1 \Big( \widehat\lambda_{1}(Y) \Big)  + \widehat\phi_0\Big( \widehat\lambda_0(Y) \Big)  \right\}/n } \label{eq:denseffci}
\end{equation}
These intervals are asymptotically valid as usual when
$p_1 \neq p_0$, but not when $p_1 = p_0$, since then the influence function
of $\psi_f$ reduces to zero, as mentioned in
Section \ref{sec:denseffeif}. This invalidates inference because the first sample
average in Theorem \ref{thm:denseffrate} is no longer
dominant, as with degenerate U-statistics
or other estimators whose higher-order terms dominate their von Mises
expansions (cf.\ Sections 12.3 and 20.1.1 of
\citet{van2000asymptotic}). However, the presence of
nuisance functions complicates things substantially, as noted in other similarly complex functional estimation
problems \citep{luedtke2019omnibus, williamson2020unified}, but we are
not aware of a general solution. Thus we only recommend using the interval
\eqref{eq:denseffci} in non-null settings when $p_1 \neq p_0$. 
A simple albeit ad-hoc fix is to use the interval
$\widehat\psi \pm z_{\alpha/2} (s\vee 1/\sqrt{n})$
where $s= \sqrt{\widehat\cov\{ \widehat\phi_1 ( \widehat\lambda_{1}(Y) )  + \widehat\phi_0( \widehat\lambda_0(Y) )  \}/n }$.
This is valid but conservative near the null.

 \subsection{Model Selection \& Aggregation}
\label{sec:modselest}

Here we briefly describe how the methods of the previous subsections can be used for the purposes of model selection and aggregation, in the same spirit as \citet{van2003cross, tsybakov2003optimal}, and others. We leave technical details to future work. \\

First we consider the linear aggregation goal as defined in \eqref{eq:aggregation}, where $B=\R^K$. In this setup the methods from Section \ref{sec:densfnest} can be straightforwardly adapted, by adding an extra step of sample splitting. We focus on $L_2^2$ projections since $f$-divergences may not be well-defined for general linear combinations of candidate estimators. Our proposed approach is as follows:
\begin{enumerate}
\setlength\itemsep{0em}
\item[\emph{Step 1.}] Randomly split the sample into a training set $D_n^0$ and test set $D_n^1$. 
\item[\emph{Step 2.}] On the training set $D_n^0$, estimate $K$ different models (e.g., $K$ different numbers of basis functions, or $K$ different combinations of linear, exponential family, Gaussian mixture models, etc.), using the estimator in \eqref{eq:onestep} to compute $\widehat{g}_k(y) = g(y;\widehat\beta_k)$, $k=1,...,K$. 
\item[\emph{Step 3.}] On the test set $D_n^1$, estimate the ($L_2^2$) projection onto an orthonormal basis of the linear span of $(\widehat{g}_1,...,\widehat{g}_K)$, again using the estimator in \eqref{eq:onestep}, e.g., with the series model in Example \ref{ex:series} with $q(y)=0$, to compute an aggregated estimator $\widehat{g}(y) = \sum_k \widehat\theta_k \widehat{g}_k(y)$. 
\item[\emph{Step 4.}] Reverse the roles of $D_n^0$ and $D_n^1$ and average the two resulting aggregates. \\
\end{enumerate}

Note that inside Steps 2-3,  another layer of sample splitting is required to avoid empirical process conditions in estimating the nuisance functions, as discussed in Remark \ref{rem:splitting}. We also note that the cross-fitting in Step 4 could be considered optional if the corresponding efficiency loss was considered negligible, or alternatively one could instead implement Steps 1--4 with $M$ different folds, at each step using $M-1$ for training and the other fold for the test set. We conjecture that the above approach can attain the optimal $K/n$ rates for linear density aggregation in the observational case \citep{rigollet2007linear}, under standard $n^{-1/4}$-type conditions on the nuisance estimators (or weaker, depending on how $K$ scales with $n$). \\

For model selection and convex aggregation, we propose a similiar procedure, except where in Step 3 variants of the density effect estimators from Section \ref{sec:denseffest} are used to estimate the distance between $p_a$ and each of the $k$ candidates estimated from the training split (after projecting each onto the space of valid densities). One can then pick the minimum distance candidate or an appropriately weighted combination, e.g., by finding the convex weights that minimize the estimated distance in the test split. For example, our proposed estimator of the $L_2^2$ error of a candidate $g_k$ based on Proposition \ref{prop:eifmodsel} is given by
$$
\widehat\Delta_f(g_k) = \int \Big( \widehat{p}_a(y) - g_k(y) \Big)^2 \ dy + 2 \Pn \left\{ \widehat\phi_a \Big( \widehat{p}_a(Y) - g_k(Y) \Big)   \right\} . $$
For the purposes of model selection, one can instead use the simpler pseudo-$L_2^2$ risk
\begin{align} \label{eq:pseudol2}
\widehat\Delta^*_f(g_k) &= -2 \ \Pn \bigg[  \frac{\one(A=a)}{\widehat\pi_a(X)} \left\{  g_k(Y)  -  \int  g_k(y)  \widehat\eta_a(y \mid X) \ dy  \right\} \\ 
& \hspace{.75in} + \int g_k(y)   \widehat\eta_a(y \mid X)  \ dy \bigg] + \int   g_k(y)^2 \ dy ,  \nonumber
\end{align}
based on the fact that the $L_2^2$ distance $\int (p_a - g_k)^2$ equals $\int g_k^2 - 2 \int g_k p_a$ plus a term $\int p_a^2$ that does not depend on $g_k$.  
This is the estimator we use in the data analysis in the next section. \\

\section{Illustration}
\label{sec:illustration}

Here we apply our proposed methods to analyze the effect of combined antiretroviral therapy for treating HIV. All code is given in Appendix \ref{sec:code}, and the methods are implemented in the  \emph{npcausal} R package on GitHub (https://github.com/ehkennedy/npcausal). \\

The data we use come from the ACTG 175 randomized trial 
\citep{hammer1996trial}, and are available in the
\verb|speff2trial| R package. The treatment is whether patients
received combination therapy ($A=1$) versus zidovudine alone ($A=0$),
and the outcome $Y$ is CD4 count at 96 weeks post-baseline. Baseline
covariates $X$ include age, weight, Karnofsky score, indicators for
race, gender, hemophilia, homosexual activity, drug use, whether
symptomatic, and previous zidovudine and antiretroviral use. There are
a total of $n=2319$ patients in the trial, 797 of which do not have
outcome data (we use $R=1$ to denote an observed outcome). \\

Since we are interested in
the density of outcomes had all versus none been treated \textit{in
  the entire population} (i.e., had all outcomes been measured), we
can view the product indicator $\one(A=a,R=1)$ as a joint ``treatment''
variable \citep{van2003unified}. In other words our goal is to estimate counterfactual densities under $A=1$ \emph{and} $R=1$, versus  $A=0$ and $R=1$. Our methods therefore rely on no
unmeasured confounding of $A$ (which holds by design due to the experimental design) and missingness
at random of $Y$ (i.e., $R \ind Y \mid X,A$), which is untestable regardless of whether treatment is randomized. For more details on the trial
and data, we refer to \citet{hammer1996trial} and \citet{wang2018quantile}. \\

Throughout our analysis, we used 5-fold cross-fitting, with all nuisance functions estimated by random forests (via the R package \verb|ranger|  \citep{wright2015ranger}). This includes conditional densities $\eta_a$, which we estimated by regressing a Gaussian kernel weighted outcome on covariates and treatment, on a grid of $y$ values, with bandwidth chosen by Silverman's rule. Alternative approaches could also be used \citep{hansen2004nonparametric, diaz2011super,izbicki2017converting}, potentially at the expense of some extra computational burden.   \\

First we used the density effect methods from Section \ref{sec:denseffest} to check for evidence of an effect of combination therapy on the density of CD4 count.  Specifically, we used the cross-fit version of the estimator in Proposition \ref{prop:l2denseff} to estimate the $L_2^2$ distance between $p_1$ and $p_0$, with asymptotic variance estimated as usual, via the empirical variance of the estimated influence function. To ease interpretability we rescaled $Y$ to be on the unit interval. The estimated $L_2^2$ distance was 0.279 with a 95\% confidence interval of $[0.142, 0.415]$, indicating a statistically significant effect of combination therapy on CD4 count. \\

To more precisely understand how combination therapy impacted the CD4 distribution, we estimated the counterfactual densities using the methods of Sections \ref{sec:densfnest} and \ref{sec:modselest}. Specifically, we used $L_2^2$ projections onto the linear series in Example \ref{ex:series} with the cosine basis \eqref{eq:cosbasis}. We considered a range of models for both densities, including up to 15 basis terms  (more than 15 terms did not improve fit). Figure \ref{fig:tuning} shows estimates of  model fit via the pseudo-$L_2^2$ risk  \eqref{eq:pseudol2}, along with confidence intervals, indicating that four basis terms does best for both counterfactual densities. Figure \ref{fig:results} shows the estimated counterfactual CD4 densities using four basis terms, along with pointwise CIs. Since the densities differ more substantially in the lowest CD4 range (e.g., 0-200), this suggests combination therapy may have increased CD4 count most for the high-risk patients with the lowest counts under control (zidovudine).

\begin{figure}[h!]
\begin{center}
{\includegraphics[width=.95\textwidth]{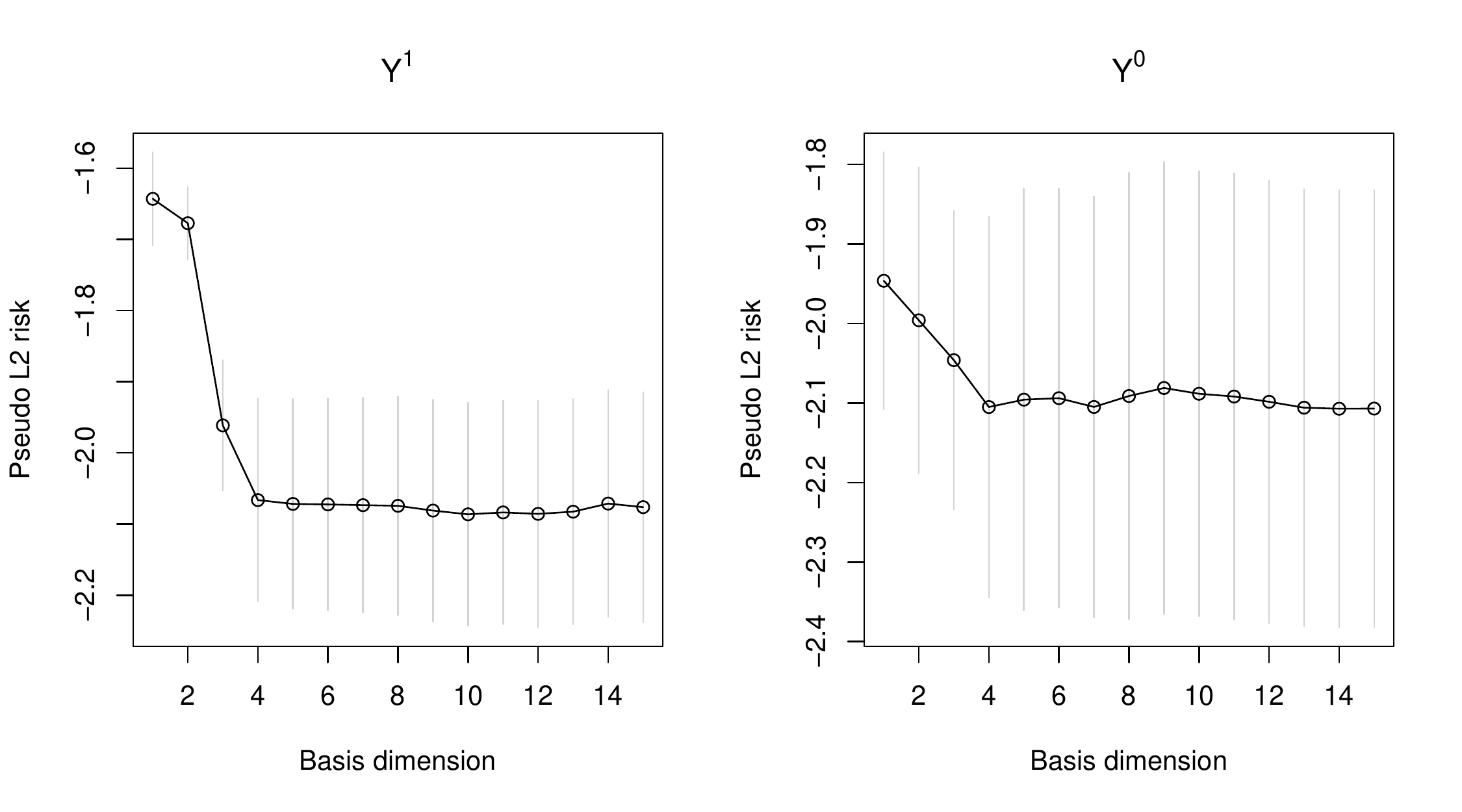}}
\caption{\em Estimates of pseudo-$L_2^2$ risk for models of increasing dimension (using $L_2$ projections onto linear models with a cosine basis), with gray bars denoting confidence intervals.} \label{fig:tuning}
\end{center}
\end{figure}

\begin{figure}[h!]
\begin{center}
{\includegraphics[width=.65\textwidth]{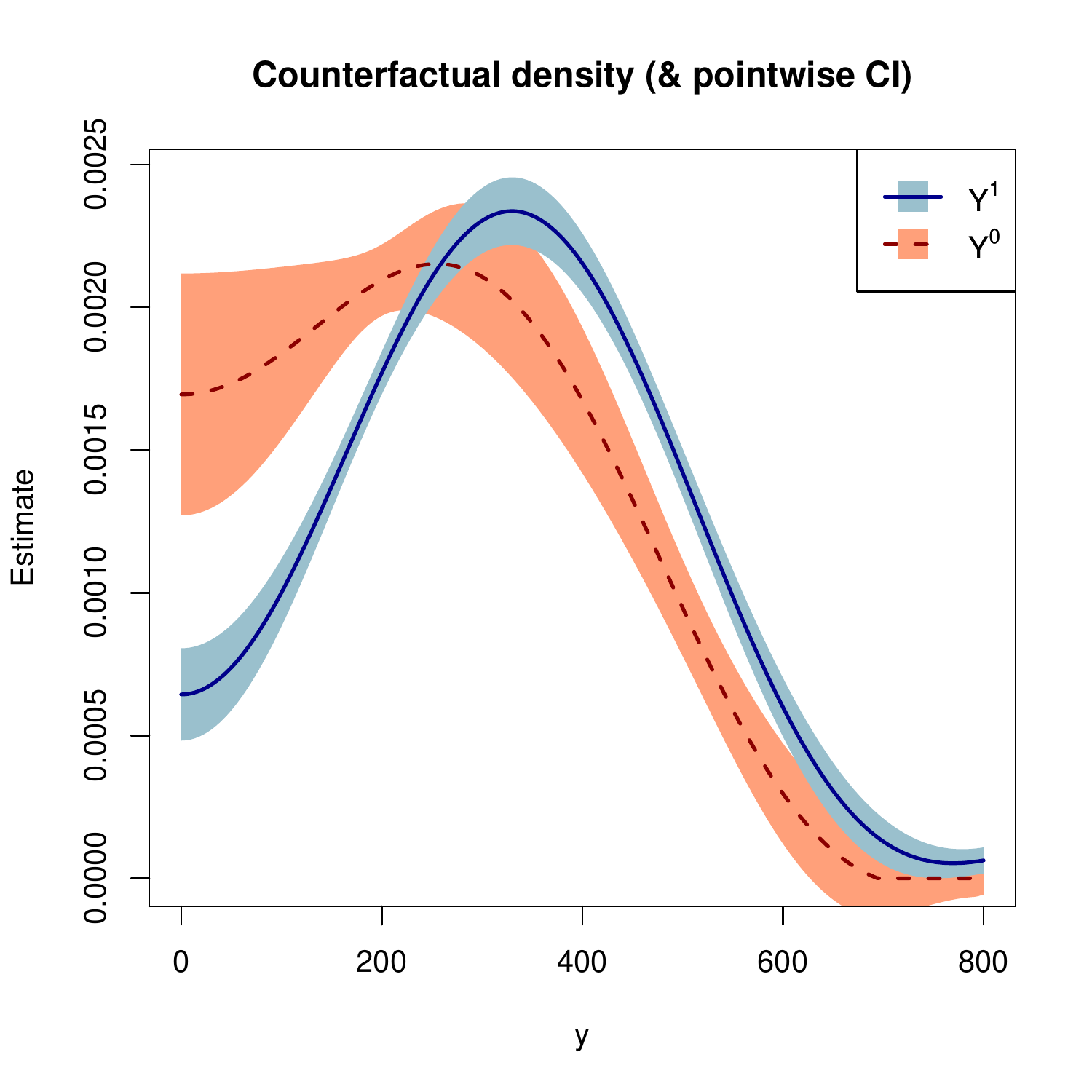}}
\caption{\em Estimated counterfactual CD4 densities for combination therapy versus zidovudine.} \label{fig:results}
\end{center}
\end{figure}

\section{Discussion}

In this paper we proposed methods for estimating counterfactual densities and corresponding distances and other functionals. We gave nonparametric efficiency bounds and flexible optimal estimators for a wide class of models and projection distances, and for new effects that quantify treatment impacts on the density scale.  We also gave methods for data-driven model selection and aggregation in this context, and illustrated the ideas in an application studying effects of antiretroviral therapy on CD4 count. \\ 

There are many interesting avenues for future work. In upcoming companion papers, we consider the nonparametric version of the problem (where the target is the density $p_a$ itself and not a projection) as well as non-discrete treatments (where $A$ is for example a continuous dose). Much more work is needed on the computational side since, outside of $L_2^2$ projections on linear models and KL projections on exponential families, our methods require solving somewhat complicated estimating equations.  Other extensions could involve time-varying treatments, instrumental variables, conditional effects, density-optimal treatment regimes, mediation, sensitivity analysis,  and more. It is also of interest to apply the methods more broadly, to see if they bring any new insights about treatment mechanisms or ways to adapt treatment policies. \\

\section*{Acknowledgements}

Edward Kennedy gratefully acknowledges support from NSF Grant DMS1810979, and  Sivaraman Balakrishnan and Larry Wasserman from NSF Grant DMS1713003. \\

\appendix
 
\section{Appendix: R Code}
\label{sec:code}

\begin{verbatim}
set.seed(100)

# install npcausal package
install.packages("devtools"); library(devtools)
install_github("ehkennedy/npcausal"); library(npcausal)

# load data
library(speff2trial); data(ACTG175); dat <- ACTG175[,c(2:17,19,21,23)]
x <- dat[,!(colnames(dat) %in% c("treat","cd496"))]

# create treatment*missing indicator
a1 <- dat$treat*(!is.na(dat$cd496)); a0 <- (1-dat$treat)*(!is.na(dat$cd496))
a <- a1; a[a0==0 & a1==0] <- -1; y <- dat$cd496; y[is.na(dat$cd496)] <- 0

# estimate pseudo-l2 risk for k=1:15
cv.cdensity(y,a,x, kmax=15, gridlen=50,nsplits=5)

# estimate densities at k=4
res <- cdensity(y,a,x, kmax=4, kforplot=c(4,4), gridlen=50,nsplits=5,ylim=c(0,800))
\end{verbatim}

\pagebreak

\section*{References}
\vspace{-1cm}
\bibliographystyle{abbrvnat}
\bibliography{bibliography}

\pagebreak

\setlength{\parindent}{0cm}

\section{Appendix: Proofs}
\label{sec:proofs}

\subsection{Proof of Corollaries \ref{cor:mb}--\ref{cor:eif_l2kl} and \ref{thm:eifdenseff_kl}}
\label{sec:fexamples}

These corollaries all follow from the distance-specific form of $f$. For reference we list the relevant quantities here. \\

For $L_2^2$ distance we have
\begin{align*}
f(p,q) &= \frac{(p-q)^2}{q} 	& f_1'(p,q) &= 2 \left( \frac{p}{q} - 1 \right) \\
f_2'(p,q) &= 1 - \left( \frac{p}{q} \right)^2	& f_{21}''(p,q) &= - \frac{2p}{q^2} .
\end{align*}

\medskip

For KL divergence we have
\begin{align*}
f(p,q) &= \left(\frac{p}{q} \right) \log \left(\frac{p}{q} \right)				&  f_1'(p,q)  &= \frac{1}{q} \left\{ \log \left(\frac{p}{q} \right) + 1 \right\} \\
f_2'(p,q) &=  -\frac{p}{q^2} \left\{ \log \left(\frac{p}{q} \right) + 1 \right\} 	& f_{21}''(p,q) &= -\frac{1}{q^2} \left\{ \log \left(\frac{p}{q} \right) + 2 \right\} .
\end{align*}

\medskip

For $\chi^2$ divergence we have
\begin{align*}
f(p,q) &= \left(\frac{p}{q} -1 \right)^2 	& f_1'(p,q)   &= \frac{2(p-q)}{q^2} \\
f_2'(p,q) &= -\frac{2p}{q^3}(p-q)		& f_{21}''(p,q) &= \frac{2(q-2p)}{q^3}  .
\end{align*}

\medskip

For Hellinger divergence we have
\begin{align*}
f(p,q) &= \left(\sqrt{\frac{p}{q}}-1\right)^2 	& f_1'(p,q) &= \frac{1}{\sqrt{q}} \left( \frac{1}{\sqrt{q}} - \frac{1}{\sqrt{p}} \right) \\
f_2'(p,q)  &= \frac{ \sqrt{p}}{q^2} \left( \sqrt{q} - \sqrt{p} \right)	& f_{21}''(p,q) &= \frac{\sqrt{q/p} - 2}{2q^2} .
\end{align*}

\medskip

For TV$^*$ divergence we have
\begin{align*}
f(p,q) &= \frac{1}{2q}  \nu_t(p-q) & 
f_1'(p,q)   &=  \frac{1}{2q} \nu_t'(p-q) \\
f_2'(p,q)  &=  \frac{-1}{2q} \left\{ \frac{\nu_t(p-q)}{q} + \nu_t'(p-q) \right\} & 
f_{21}''(p,q) &=  \frac{-1}{2q} \left\{ \frac{\nu_t'(p-q)}{q} + \nu_t''(p-q) \right\}  .
\end{align*}

\bigskip

\subsection{Proof of Lemma \ref{lem:densfunc}}

Here we let $\psi = \psi(\Pb) =\int h(p_a(y)) \ dy$, for
some twice continuously differentiable function $h$. We will show that $\psi$ satisfies the von Mises expansion 
given in Lemma \ref{lem:densfunc}. \\

Let $\overline{p}_a(y) = \int \overline\eta_a(y \mid x) \ d\overline\Pb(x)$ denote the marginal counterfactual density under $\overline\Pb$. Note for the posited influence function given by
\begin{align*}
\varphi(z;\Pb) &= \frac{\one(A=a)}{\pi_a(X)} \left\{ h'(p_a(Y)) - \int h'(p_a(y)) \eta_a(y \mid X) \ dy  \right\} \\
& \hspace{.4in} + \int h'(p_a(y)) \eta_a(y \mid X) \ dy - \int h'(p_a(y)) \eta_a(y \mid x) \ dy \ d\Pb(x) ,
\end{align*}
we have, by iterated expectation, that it has mean under $\Pb$ equal to 
\begin{align*}
 \int \varphi(z;\overline\Pb) \ d\Pb &= \int \bigg[  \frac{\one(A=a)}{\overline\pi_a(X)} \left\{ h'(\overline{p}_a(Y)) - \int h'(\overline{p}_a(y)) \overline\eta_a(y \mid X) \ dy  \right\} \\
& \hspace{.4in} + \int h'(\overline{p}_a(y)) \overline\eta_a(y \mid X) \ dy - \int \int h'(\overline{p}_a(y)) \overline\eta_a(y \mid x) \ dy \ d\overline\Pb(x) \bigg] \ d\Pb \\
&= \int   \frac{\pi_a(x)}{\overline\pi_a(x)}  \int \Big\{ h'(\overline{p}_a(y)) \eta_a(y \mid x)  -  h'(\overline{p}_a(y)) \overline\eta_a(y \mid x) \Big\} \ dy  \ d\Pb(x)  \\
& \hspace{.4in} + \int \int h'(\overline{p}_a(y)) \overline\eta_a(y \mid x) \ dy \ \Big\{ d\Pb(x) - d\overline\Pb(x) \Big\}    \\
&= \int \int h'(\overline{p}_a(y))  \left\{ \frac{\pi_a(x)}{\overline\pi_a(x)} - 1 \right\} \Big\{  \eta_a(y \mid x)  -  \overline\eta_a(y \mid x) \Big\}   \ dy \ d\Pb(x) \\
& \hspace{.4in} + \int h'(\overline{p}_a(y)) \int \Big\{ \eta_a(y \mid x) \ d\Pb(x) - \overline\eta_a(y \mid x) \ d\overline\Pb(x)  \Big\}  \ dy \\
&= \int \int h'(\overline{p}_a(y))  \left\{ \frac{\pi_a(x)}{\overline\pi_a(x)} - 1 \right\} \Big\{  \eta_a(y \mid x)  -  \overline\eta_a(y \mid x) \Big\}   \ dy \ d\Pb(x) \\
& \hspace{.4in} + \int h'(\overline{p}_a(y)) \Big\{ p_a(y)  - \overline{p}_a(y) \Big\}  \ dy . 
\end{align*}
Therefore the second-order remainder term in the von Mises expansion is
\begin{align*}
R_2(\overline\Pb,\Pb) &\equiv \psi(\overline\Pb) - \psi(\Pb) - \int \varphi(z;\overline\Pb) \ d(\overline\Pb-\Pb) = \psi(\overline\Pb) - \psi(\Pb) + \int \varphi(z;\overline\Pb) \ d\Pb \\
&=  \int \int h'(\overline{p}_a(y))  \left\{ \frac{\pi_a(x)}{\overline\pi_a(x)} - 1 \right\} \Big\{  \eta_a(y \mid x)  -  \overline\eta_a(y \mid x) \Big\}   \ dy \ d\Pb(x) \\
& \hspace{.4in} + \int h'(\overline{p}_a(y)) \Big\{ p_a(y)  - \overline{p}_a(y) \Big\}  \ dy +  \int \Big\{ h(\overline{p}_a(y)) -  h(p_a(y))  \Big\} \ dy  \\
&= \int \int h'(\overline{p}_a(y)) \left\{ \frac{\pi_a(x) }{\overline\pi_a(x)} - 1 \right\}  \Big\{ \eta_a(y \mid x) - \overline\eta_a(y \mid x) \Big\} \ dy \ d\Pb(x) \\
& \hspace{.4in} + \frac{1}{2} \int h''(p^*_a(y)) \Big\{ \overline{p}_a(y) - p_a(y) \Big\}^2 \ dy ,
\end{align*}
where the last line follows by a Taylor expansion with remainder of the mean-value form, with  $p^*_a(y)$ lying between $p_a(y)$ and $\overline{p}_a(y)$.

\subsection{Proof of Theorem \ref{lem:eifmb}}

First, for any fixed $\beta$, we have that each element of the $p$-vector
\begin{align*}
m(\beta) = \int \frac{\partial g(y;\beta)}{\partial \beta} 
\left\{ f\Big( p_a(y), g(y;\beta) \Big) + g(y;\beta) f_2'\Big( p_a(y) , g(y;\beta) \Big)   \right\} dy , 
\end{align*}
can be viewed as a density functional $\int h(p_a(y)) \ dy$ for a specific function $h$. In particular, let $g_j'(y;\beta)$ denote the $j^{th}$ element of $\frac{\partial g(y;\beta)}{\partial \beta}$ so that $h(p_a(y))= \{h_1(p_a(y)), ..., h_d(p_a(y))\}^\T$ for 
\begin{equation}
h_j(p_a(y)) =  g_j'(y; \beta) \left\{ f\Big( p_a(y), g(y;\beta) \Big) + g(y;\beta) f_2'\Big( p_a(y) , g(y;\beta) \Big)   \right\} , \label{eq:hdefthm1}
\end{equation}
noting that, for a given $\beta$ value, $g(y;\beta)$ is a known constant not depending on $\Pb$. 

Now we apply  Lemma \ref{lem:densfunc} to each component of $m$. First note that 
\begin{align*}
h_j'(p_a(y)) &=   g_j'(y; \beta) \left\{ f'_1\Big( p_a(y), g(y;\beta) \Big) + g(y;\beta) f''_{21}\Big( p_a(y) , g(y;\beta) \Big)   \right\} , 
\end{align*}
by the chain rule, 
so that
\begin{align*}
\gamma_f(y;\beta) &= \frac{\partial g(y;\beta)}{\partial \beta} \left\{ f'_1\Big( p_a(y), g(y;\beta) \Big) + g(y;\beta) f''_{21}\Big( p_a(y) , g(y;\beta) \Big)   \right\} \\
&= \Big\{ h_1'(p_a(y)), ..., h_p'(p_a(y)) \Big\}^\T .
\end{align*}

Therefore  Lemma \ref{lem:densfunc} implies that $\overline{m}(\beta) = \int h(\overline{p}_a(y)) \ dy$ satisfies the von Mises expansion 
\begin{align}
\overline{m}(\beta) -m(\beta) =\int \varphi_m(z;\overline\Pb) \ d(\overline\Pb-\Pb) + R_2(\overline\Pb,\Pb) ,
\end{align}
where  
\begin{align*}
\varphi_m(Z,\beta;\Pb) &= \frac{\one(A=a)}{\pi_a(X)} \left\{ \gamma_f(Y;\beta) - \int \gamma_f(y;\beta) \eta_a(y \mid X) \ dy \right\} \\
&\hspace{.5in} + \int \gamma_f(y;\beta) \eta_a(y \mid X) \ dy - \int \int \gamma_f(y;\beta) \eta_a(y \mid x) \ dy \ d\Pb(x) , 
\end{align*}
and where the $j^{th}$ component of $R_2(\overline\Pb,\Pb)$ is given by 
\begin{align}
R_{2,j}(\overline\Pb,\Pb) &= \int \int h_j'(\overline{p}_a(y)) \left\{ \frac{\pi_a(x) }{\overline\pi_a(x)} - 1 \right\}  \Big\{ \eta_a(y \mid x) - \overline\eta_a(y \mid x) \Big\} \ dy \ d\Pb(x) \nonumber \\
& \hspace{.4in} + \frac{1}{2} \int h_j''(p^*_a(y)) \Big\{ \overline{p}_a(y) - p_a(y) \Big\}^2 \ dy . \label{eq:mbrem}
\end{align}

Now we give a lemma showing why finding a von Mises expansion like the above, with second-order remainder, is equivalent to finding the efficient influence function in a nonparametric model. This will prove $\varphi_m$ is the efficient influence function for $m(\beta)$, and will also be useful for later results.

\begin{lemma} \label{lem:exptoeif}
Let $\psi: \mathcal{P} \rightarrow \R$ denote some real-valued functional on a nonparametric model, so the set of distributions $\mathcal{P}$ does not constrain the tangent space. Assume the functional satisfies
$$ \psi(\overline\Pb) - \psi(\Pb) = \int \varphi(z;\overline\Pb) \ (d\overline\Pb-d\Pb) + R_2(\overline\Pb,\Pb)$$
for some mean-zero and finite variance function $\varphi(z;\Pb)$. Then $\varphi$ is the efficient influence influence function if $ \frac{d}{d\epsilon} R_2(\Pb,\Pb_\epsilon) |_{\epsilon=0} =0$ for any smooth parametric submodel.
\end{lemma}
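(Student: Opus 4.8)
The plan is to verify the defining property of the efficient influence function directly from the hypothesized von Mises expansion, exploiting the fact that in a nonparametric model the tangent space is the whole of $L_2^0(\Pb)$, so that \emph{any} gradient of $\psi$ is automatically the canonical (i.e., efficient) one. Thus the entire task reduces to showing that $\varphi$ is a valid gradient, after which the nonparametric structure upgrades this to efficiency for free.

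First I would fix an arbitrary regular one-dimensional parametric submodel $\{\Pb_\epsilon\}$ passing through $\Pb$ at $\epsilon=0$, with score $s(z) = \frac{d}{d\epsilon}\log p_\epsilon(z)\big|_{\epsilon=0}$, where $p_\epsilon$ is the density of $\Pb_\epsilon$ with respect to a common dominating measure. Recall that $\varphi$ is the efficient influence function for $\psi$ precisely when it is mean-zero, finite-variance, and satisfies the pathwise differentiability identity $\frac{d}{d\epsilon}\psi(\Pb_\epsilon)\big|_{\epsilon=0} = \E\{\varphi(Z;\Pb)\,s(Z)\}$ for every such submodel; the first two properties are assumed in the statement, so only this identity remains to be established.

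Second I would specialize the assumed expansion by substituting $\overline\Pb = \Pb$ (so that $\varphi$ is evaluated at the fixed distribution) and replacing the base point by $\Pb_\epsilon$, which yields
$$ \psi(\Pb) - \psi(\Pb_\epsilon) = \int \varphi(z;\Pb)\,(d\Pb - d\Pb_\epsilon) + R_2(\Pb,\Pb_\epsilon). $$
Using that $\varphi$ is mean-zero under $\Pb$, so $\int \varphi(z;\Pb)\,d\Pb = 0$, rearranging gives $\psi(\Pb_\epsilon) - \psi(\Pb) = \int \varphi(z;\Pb)\,d\Pb_\epsilon - R_2(\Pb,\Pb_\epsilon)$. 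Differentiating in $\epsilon$ at $0$, the remainder term drops out by the assumed condition $\frac{d}{d\epsilon}R_2(\Pb,\Pb_\epsilon)\big|_{\epsilon=0} = 0$, while differentiating under the integral in the leading term and using $\frac{d}{d\epsilon}p_\epsilon\big|_{\epsilon=0} = s\,p_0$ produces exactly $\E\{\varphi(Z;\Pb)\,s(Z)\}$. This is the required identity, so $\varphi$ is a gradient of $\psi$; and since the nonparametric tangent space is all of $L_2^0(\Pb)$, its orthogonal complement is trivial, the gradient is therefore unique, and $\varphi$ is the efficient influence function.

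The main obstacle will be the analytic regularity justifying the interchange of differentiation and integration, i.e., $\frac{d}{d\epsilon}\int \varphi(z;\Pb)\,d\Pb_\epsilon\big|_{\epsilon=0} = \int \varphi(z;\Pb)\,s(z)\,d\Pb(z)$; this is handled by restricting to quadratic-mean-differentiable submodels with suitably integrable scores, the standard setting in which pathwise derivatives are defined, and poses no essential difficulty. A secondary point to track carefully is the bookkeeping of the two arguments of $R_2$: the hypothesis is stated for $R_2(\Pb,\Pb_\epsilon)$, which matches the substitution $\overline\Pb = \Pb$ above, and it is worth emphasizing that the conclusion would fail were the $\epsilon$-derivative of the remainder nonzero, since then a spurious first-order term would contaminate the derivative of $\psi(\Pb_\epsilon)$ and $\varphi$ would no longer be a gradient.
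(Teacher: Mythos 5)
Your proposal is correct and follows essentially the same route as the paper's proof: substitute $\overline\Pb=\Pb$ and base point $\Pb_\epsilon$ into the assumed expansion, use the mean-zero property to reduce to $\int\varphi(z;\Pb)\,d\Pb_\epsilon$, differentiate at $\epsilon=0$ so the remainder vanishes by hypothesis, and invoke the fullness of the nonparametric tangent space to conclude that this gradient is the unique (hence efficient) influence function. The paper handles the interchange of derivative and integral via dominated convergence and the identity $\frac{d}{d\epsilon}\log d\Pb_\epsilon = \frac{d}{d\epsilon}d\Pb_\epsilon/d\Pb_\epsilon$, which matches the regularity point you flag at the end.
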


\begin{proof}
Recall from  \citet{bickel1993efficient} and \citet{van2002semiparametric} that the efficient influence function is the mean-zero function whose variance equals the nonparametric efficiency bound, and  is given by the unique function $\phi$ that is a valid submodel score (or limit of such scores) satisfying pathwise differentiability, i.e.,
\begin{equation} \label{eq:pathwise}
 \frac{d}{d\epsilon} \psi(\Pb_\epsilon) \Bigm|_{\epsilon=0} = \int  \phi(z;\Pb) \left( \frac{d}{d\epsilon} \log d\Pb_\epsilon \right) \Bigm|_{\epsilon=0} d\Pb(z)
 \end{equation}
for $\Pb_\epsilon$ any smooth parametric submodel. (e.g., differentiable in quadratic mean) In a nonparametric model only one such function $\phi$ satisfies the above. We will show that the above is satisfied by the function $\varphi$ in the statement of the lemma. 

First note that the assumed expansion implies
$$ \psi(\Pb) - \psi(\Pb_\epsilon) = - \int \varphi(z;\Pb) \ d\Pb_\epsilon + R_2(\Pb,\Pb_\epsilon) $$
for any submodel $\Pb_\epsilon$. Differentiating with respect to $\epsilon$ gives
\begin{align*}
 \frac{d}{d\epsilon} \psi(\Pb_\epsilon) &=  \frac{d}{d\epsilon} \int \varphi(z;\Pb) \ d\Pb_\epsilon + \frac{d}{d\epsilon} R_2(\Pb,\Pb_\epsilon) \\
 &= \int \varphi(z;\Pb) \left( \frac{d}{d\epsilon}  \log d\Pb_\epsilon \right) \ d\Pb_\epsilon + \frac{d}{d\epsilon} R_2(\Pb,\Pb_\epsilon)  , 
\end{align*}
where the second line follows from the dominated convergence theorem and uses the fact that $\frac{d}{d\epsilon}  \log d\Pb_\epsilon=\frac{d}{d\epsilon} d\Pb_\epsilon / d\Pb_\epsilon$. Therefore evaluating at $\epsilon=0$ we have
$$  \frac{d}{d\epsilon} \psi(\Pb_\epsilon) \Bigm|_{\epsilon=0} = \int \varphi(z;\Pb) \left( \frac{d}{d\epsilon}  \log d\Pb_\epsilon \right) \Bigm|_{\epsilon=0} \ d\Pb + \frac{d}{d\epsilon} R_2(\Pb,\Pb_\epsilon)\Bigm|_{\epsilon=0}  , $$
which yields the desired pathwise differentiability by the fact that $ \frac{d}{d\epsilon} R_2(\Pb,\Pb_\epsilon) |_{\epsilon=0} =0$.
\end{proof}

Now we can immediately apply Lemma \ref{lem:exptoeif}, noting that 
$$ \frac{d}{d\epsilon} R_2(\Pb, \Pb_\epsilon) \Bigm|_{\epsilon=0} = 0 $$
by virtue of the fact that the remainder $R_{2,j}(\Pb,\Pb_\epsilon)$ in \eqref{eq:mbrem} consists of only second-order products of errors between $\Pb$ and $\Pb_\epsilon$. This follows since applying the product rule yields a sum of two terms, each of which is a product of a derivative term (which may not be zero at $\epsilon=0$) and an error term involving differences of components of $\Pb_\epsilon$ and $\Pb$ (which will be zero at $\epsilon=0$). Therefore $\varphi_m$ is the efficient influence function for the parameter $m(\beta)$. The efficient influence functions for $\beta_0$ and $g(y;\beta_0)$ follow similarly, via the chain rule. 

\subsection{Proof of Theorem \ref{thm:eifdenseff}}

From Lemmas \ref{lem:densfunc} and \ref{lem:exptoeif}, the efficient influence function of $\psi_f=\int f\left( {p_1(y)} , {p_0(y)} \right) p_0(y) \ dy$ if $p_0(y)$ were known would be 
\begin{align*}
\varphi_1(z;\Pb) &= \frac{\one(A=1)}{\pi(1 \mid X)} \left\{ h_1'(Y) - \int h_1'(y) \eta_1(y \mid X) \ dy  \right\} \\
& \hspace{.4in} + \int h_1'(y) \eta_1(y \mid X) \ dy - \int h_1'(y) \eta_1(y \mid x) \ dy \ d\Pb(x)
\end{align*}
where
$$ h_1'(y) = h_1'(y;p_0,p_1) = p_0(y) f'_1(p_1(y),p_0(y)) . $$
Similarly, if $p_1(y)$ were known, the efficient influence function of $\psi_f$ would be
\begin{align*}
\varphi_0(z;\Pb) &= \frac{\one(A=0)}{\pi(0 \mid X)} \left\{ h_0'(y) - \int h_0'(y) \eta_0(y \mid X) \ dy  \right\} \\
& \hspace{.4in} + \int h_0'(y) \eta_0(y \mid X) \ dy - \int h_0'(y) \eta_0(y \mid x) \ dy \ d\Pb(x) ,
\end{align*}
where
$$ h_0'(y) = h_0'(y;p_0,p_1) = f(p_1(y),p_0(y)) + p_0(y) f'_2(p_1(y),p_0(y))   . $$
The result then follows from the fact that the influence function when $p_1$ and $p_0$ are  both unknown is the sum of the two influence functions when $p_1$ and $p_0$ are known, separately. 

\subsection{Proof of Claim in Remark \ref{rem:initdens}}
\label{app:initdens}

Here we show why rates for estimating ${p}_a(y)$ with the plug-in estimator $\widehat{p}_a(y) = \Pn\{ \widehat\eta_a(y \mid X)\}$ will not be slower than those for estimating $\widehat\eta_a(y \mid x)$, by bounding the mean squared error of the former in terms of the latter. To this end we denote the pointwise bias and variance of $\widehat\eta_a$ as $\E \{\widehat\eta_a(y \mid x) \}- \eta_a(y \mid x)= b(y \mid x)$ and $\var\{ \widehat\eta_a(y \mid x)\} = v(y \mid x)$, respectively. First note for the bias that
\begin{align*}
\E\{ \widehat{p}_a(y)\} - p_a(y) &=  \E\Big[ \Pn\{ \widehat\eta_a(y \mid X)\} \Big] - \int \eta_a(y \mid x) \ d\Pb(x) \\
&= \E \int \widehat\eta_a(y \mid x) \ d\Pb(x) - \int \eta_a(y \mid x) \ d\Pb(x) =  \int b(y \mid x) \ d\Pb(x) ,
\end{align*}
where in the second line we used iterated expectation, conditioning on the training sample $D^n$ used to construct $\widehat\eta_a$. For the variance we similarly have 
\begin{align*}
\var\{ \widehat{p}_a(y) \} &= \var\left( \E\Big[ \Pn\{ \widehat\eta_a(y \mid X)\} \mid D^n \Big] \right) + \E\left( \var\Big[ \Pn\{ \widehat\eta_a(y \mid X)\} \mid D^n \Big] \right)  \\
&= \var \int \widehat\eta_a(y \mid x) \ d\Pb(x) +  \frac{1}{n} \E \left[ \var \Big\{ \widehat\eta_a(y \mid X) \mid D^n \Big\} \right] . 
\end{align*}
For the first term above, by Cauchy-Schwarz we have
\begin{align*}
\var \int \widehat\eta_a(y \mid x) \ d\Pb(x) &= \E\left\{ \int \Big[ \widehat\eta_a(y \mid x) - \E\{\widehat\eta_a(y \mid x) \} \Big] \ d\Pb(x) \right\}^2 \\
&\leq \E \int \Big[ \widehat\eta_a(y \mid x) - \E\{\widehat\eta_a(y \mid x) \} \Big]^2 \ d\Pb(x) = \int v(y \mid x) \ d\Pb(x)
\end{align*}
And for  the second term note that
\begin{align*}
\E \left[ \var \Big\{ \widehat\eta_a(y \mid X) \mid D^n \Big\} \right] &\leq \E   \int  \widehat\eta_a(y \mid x)^2 \ d\Pb(x)  = \int \left( v(y \mid x) + \Big[ \E \{ \widehat\eta_a(y \mid x) \} \Big]^2 \right) \ d\Pb(x) \\
&= \int \left( v(y \mid x)  + \left[ \E\Big\{ \widehat\eta_a(y \mid x) - \eta_a(y \mid x) + \eta_a(y \mid x) \Big\} \right]^2 \right) \ d\Pb(x) \\
&\leq \int  \Big\{ v(y \mid x) + 2b(y \mid x)^2 + 2\eta_a(y \mid x)^2 \Big\} \ d\Pb(x) . 
\end{align*}
Therefore as long as $\int \eta_a(y \mid x)^2 d\Pb(x) \leq C$, we have
\begin{align*}
 \E\left[\Big\{ \widehat{p}_a(y) - p_a(y) \Big\}^2\right] \leq \left(1 + \frac{2}{n} \right) \int \E\left[ \Big\{ \widehat\eta_a(y \mid x) - \eta_a(y \mid x) \Big\}^2 \right] d\Pb(x) + \frac{2C}{n} 
\end{align*}

\subsection{Proof of Theorem \ref{thm:densfnrate}}

First we present a master lemma giving the rate of convergence of the solution to a sample-split estimating equation. The logic parallels that of Theorem 5.31 of \citet{van2000asymptotic}.

\begin{lemma} \label{lem:esteq}
Let $\varphi(z;\theta,\eta)$ denote a vector estimating function for target parameter $\theta \in \R^p$ and nuisance functions $\eta \in H$ for some function space $H$.  Suppose the true values $(\theta_0,\eta_0)$ satisfy $\Pb\{ \varphi(Z;\theta_0,\eta_0)\}=0$, and define the estimator $\widehat\theta$ as an approximate solution to the estimating equation satisfying
$$ \Pn \{\varphi(Z;\widehat\theta,\widehat\eta) \}= o_\Pb(1/\sqrt{n}) $$
where $\widehat\eta$ is estimated on a separate independent sample. Assume:
\begin{enumerate}
\item The function class $\{ \varphi(z;\theta,\eta) : \theta \in \R^p\}$ is Donsker in $\theta$ for any fixed $\eta$.
\item The estimators are consistent, i.e., $\widehat\theta-\theta_0 = o_\Pb(1)$ and $\| \widehat\eta - \widehat\eta_0 \| = o_\Pb(1)$.
\item The map $\theta \mapsto \Pb\{ \varphi(Z;\theta,\eta)\}$ is differentiable at $\theta_0$ uniformly in $\eta$, with nonsingular derivative matrix $\frac{\partial}{\partial\theta} \Pb\{\varphi(Z;\theta,\eta)\} |_{\theta=\theta_0} = V(\theta_0,\eta)$, where $V(\theta_0,\widehat\eta) \inprob V(\theta_0,\eta_0)$.
\end{enumerate}
Then 
$$ \widehat\theta - \theta_0 = -V(\theta_0,\eta_0)^{-1} (\Pn-\Pb)  \{ \varphi(Z;\theta_0,\eta_0) \} + O_\Pb\Big(R_n\Big) + o_\Pb\left( \frac{1}{\sqrt{n}} \right) $$
for $R_n = \Pb\{ \varphi(Z;\theta_0,\widehat\eta) - \varphi(Z;\theta_0,\eta_0)\} $.
\end{lemma}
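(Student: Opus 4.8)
The plan is to run the standard $Z$-estimator argument of Theorem 5.31 of \citet{van2000asymptotic}, modified to let the nuisance $\widehat\eta$ enter through the sample-splitting construction. Starting from the defining relation $\Pn\{\varphi(Z;\widehat\theta,\widehat\eta)\} = o_\Pb(1/\sqrt n)$ and adding and subtracting the population expectation, I would split the left-hand side into a centered empirical-process piece and a population drift piece,
\[o_\Pb(1/\sqrt n) = (\Pn - \Pb)\{\varphi(Z;\widehat\theta,\widehat\eta)\} + \Pb\{\varphi(Z;\widehat\theta,\widehat\eta)\}.\]
The strategy is to show the empirical-process piece is asymptotically the same as its value at the truth, to linearize the drift piece in $\theta$, and then to solve for $\widehat\theta - \theta_0$ by inverting the (limiting) derivative matrix.

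The key step is to argue $(\Pn - \Pb)\{\varphi(Z;\widehat\theta,\widehat\eta)\} = (\Pn - \Pb)\{\varphi(Z;\theta_0,\eta_0)\} + o_\Pb(1/\sqrt n)$, which I would do by inserting $\varphi(Z;\theta_0,\widehat\eta)$ and handling the two increments separately. For the $\theta$-increment $(\Pn - \Pb)\{\varphi(Z;\widehat\theta,\widehat\eta) - \varphi(Z;\theta_0,\widehat\eta)\}$ I would condition on the independent nuisance sample so that $\widehat\eta$ is fixed; then the Donsker hypothesis (condition 1) together with $\widehat\theta \inprob \theta_0$ (condition 2) and $L_2(\Pb)$-continuity of $\varphi$ in $\theta$ yields negligibility by asymptotic equicontinuity. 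For the $\eta$-increment $(\Pn - \Pb)\{\varphi(Z;\theta_0,\widehat\eta) - \varphi(Z;\theta_0,\eta_0)\}$ the sample splitting does the work: conditionally on the nuisance sample this is a centered average of a \emph{fixed} integrand whose conditional variance is at most $n^{-1}\Pb\{(\varphi(Z;\theta_0,\widehat\eta) - \varphi(Z;\theta_0,\eta_0))^2\}$, which vanishes by $\|\widehat\eta - \eta_0\| = o_\Pb(1)$ and continuity, so a conditional Chebyshev bound gives $o_\Pb(1/\sqrt n)$ with no Donsker requirement on the nuisance.

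For the drift piece, since $\Pb\{\varphi(Z;\theta_0,\eta_0)\} = 0$, the differentiability in condition 3 gives
\[\Pb\{\varphi(Z;\widehat\theta,\widehat\eta)\} = R_n + V(\theta_0,\widehat\eta)(\widehat\theta - \theta_0) + o_\Pb(\|\widehat\theta - \theta_0\|),\]
where $R_n = \Pb\{\varphi(Z;\theta_0,\widehat\eta)\}$ is exactly the claimed remainder by the zero-mean property at the truth, and where I would replace $V(\theta_0,\widehat\eta)$ by $V(\theta_0,\eta_0)$ using $V(\theta_0,\widehat\eta) \inprob V(\theta_0,\eta_0)$ up to an $o_\Pb(\|\widehat\theta - \theta_0\|)$ error. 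Substituting both expansions back, collecting the self-referential remainder onto the left, and recalling the leading term is $O_\Pb(1/\sqrt n)$ by the central limit theorem yields
\[(V(\theta_0,\eta_0) + o_\Pb(1))(\widehat\theta - \theta_0) = -(\Pn - \Pb)\{\varphi(Z;\theta_0,\eta_0)\} - R_n + o_\Pb(1/\sqrt n).\]
Since $V(\theta_0,\eta_0)$ is nonsingular, the matrix on the left is invertible with probability tending to one with inverse converging to $V(\theta_0,\eta_0)^{-1}$; inverting gives the stated expansion, with the $-V(\theta_0,\eta_0)^{-1}R_n$ contribution (and the $o_\Pb(\|R_n\|)$ perturbation from the inverse) absorbed into the $O_\Pb(R_n)$ term.

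The main obstacle is the empirical-process step, and specifically the clean separation of the $\theta$-fluctuation, which relies on the Donsker hypothesis and equicontinuity, from the $\eta$-fluctuation, which is controlled purely by sample splitting and an $L_2$ bound; keeping these two mechanisms from interfering is what makes the argument go through under the weak conditions stated. A secondary technical point is absorbing the self-referential $o_\Pb(\|\widehat\theta - \theta_0\|)$ remainder, which I would handle by the standard two-pass device of first deducing the crude rate $\widehat\theta - \theta_0 = O_\Pb(1/\sqrt n + \|R_n\|)$ and then feeding it back, equivalently by moving the $o_\Pb(1)$ factor onto the left-hand derivative matrix as displayed above.
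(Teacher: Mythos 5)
Your proposal is correct and follows essentially the same route as the paper's proof: the identical add-and-subtract decomposition through $\varphi(Z;\theta_0,\widehat\eta)$, with the $\theta$-increment controlled by Donsker plus consistency (the paper cites Lemma 19.24 of van der Vaart where you invoke asymptotic equicontinuity directly), the $\eta$-increment killed by sample splitting and a conditional Chebyshev bound (the paper cites Lemma 2 of Kennedy (2020), which is exactly that argument), and the same two-pass device to absorb the self-referential $o_\Pb(\|\widehat\theta-\theta_0\|)$ remainder. No substantive differences.
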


\begin{proof}
First note that, since $(\widehat\theta,\widehat\eta)$ and $(\theta,\eta)$ are approximate and exact solutions of the empirical and population moment conditions, respectively, we have
\begin{align}
o_\Pb(1/\sqrt{n}) &= \Pn \{ \varphi(Z;\widehat\theta,\widehat\eta) \} - \Pb\{ \varphi(Z;\theta_0,\eta_0)\} \nonumber \\
&= (\Pn-\Pb) \{ \varphi(Z;\theta_0,\eta_0) \}   + (\Pn-\Pb) \{ \varphi(Z;\widehat\theta,\widehat\eta) - \varphi(Z;\theta_0,\widehat\eta) \} \label{eq:eet1} \\
& \hspace{.5in} + (\Pn-\Pb) \{ \varphi(Z;\theta_0,\widehat\eta) - \varphi(Z;\theta_0,\eta_0) \}  \label{eq:eet2} \\
& \hspace{.5in} + \Pb\{ \varphi(Z;\widehat\theta,\widehat\eta) - \varphi(Z;\theta_0,\widehat\eta)\} + \Pb\{ \varphi(Z;\theta_0,\widehat\eta) - \varphi(Z;\theta_0,\eta_0)\}  \label{eq:eet3}
\end{align}
where the second equality follows by simply adding and subtracting terms. The first term in \eqref{eq:eet1}  is a simple sample average of a fixed function and so will be asymptotically Gaussian by the central limit theorem. The second term in \eqref{eq:eet1} and the term in \eqref{eq:eet2}  are empirical process terms. The first term in \eqref{eq:eet3} will be linearized in $(\widehat\theta - \theta_0)$, while the second term in \eqref{eq:eet3} captures the effect of the nuisance estimation error. We will tackle each of these in turn. 

Under the Donsker and consistency conditions for $\widehat\theta$ in Assumptions 1 and 2, the second term in \eqref{eq:eet1} is $o_\Pb(1/\sqrt{n})$ by Lemma 19.24 of \citet{van2000asymptotic}. Under the consistency of $\widehat\eta$ in Assumption 2 and the sample splitting, the term in \eqref{eq:eet2} is $o_\Pb(1/\sqrt{n})$ by Lemma 2 of \citet{kennedy2020sharp}. 

By the differentiability of the map $\theta \mapsto \Pb\{ \varphi(Z;\theta,\eta)\}$ in Assumption 3, the first term in \eqref{eq:eet3} can be expressed as
\begin{align*}
\Pb\{ \varphi(Z;\widehat\theta,\widehat\eta) - \varphi(Z;\theta_0,\widehat\eta)\} &= V(\theta_0,\widehat\eta) (\widehat\theta - \theta_0) + o_\Pb(\|\widehat\theta-\theta_0\|) \\
&=  V(\theta_0,\eta_0) (\widehat\theta - \theta_0) + o_\Pb(\|\widehat\theta-\theta_0\|)
\end{align*}
where the last line follows by the consistency of $V(\theta_0,\widehat\eta)$ in Assumption 3.

Therefore we have
\begin{align*}
o_\Pb(1/\sqrt{n}) &= (\Pn-\Pb) \{ \varphi(Z;\theta_0,\eta_0) \}  +  V(\theta_0,\eta_0) (\widehat\theta - \theta_0) + R_n + o_\Pb(\|\widehat\theta-\theta_0\|)
\end{align*}
where we let $R_n=\Pb\{ \varphi(Z;\theta_0,\widehat\eta) - \varphi(Z;\theta_0,\eta_0)\} $ denote the second term in \eqref{eq:eet3}, 
or equivalently
$$ \widehat\theta - \theta_0  = -V(\theta_0,\eta_0)^{-1} (\Pn-\Pb)  \{ \varphi(Z;\theta_0,\eta_0) \} + O_\Pb(R_n) + o_\Pb(\| \widehat\theta - \theta_0\| ) + o_\Pb(1/\sqrt{n}) 
$$
by the nonsingularity of the derivative matrix in Assumption 3. This implies
$$ \| \widehat\theta - \theta_0 \| (1 + o_\Pb(1)) = O_\Pb( 1/\sqrt{n} + R_n ) $$
so that $\| \widehat\theta - \theta_0 \| = O_\Pb(1/\sqrt{n} + R_n)$, which gives the result after noting that $o_\Pb(O_\Pb(1/\sqrt{n} + R_n))=o_\Pb(1/\sqrt{n} + R_n)$ and that  $O_\Pb(R_n) + o_\Pb(R_n) = O_\Pb(R_n)$.
\end{proof}

\bigskip

Now we can apply Lemma \ref{lem:esteq} to prove Theorem  \ref{thm:densfnrate}. First note that by definition the estimator satisfies
$$ \Pn\{  \phi(Z;\widehat\beta,\widehat\eta) \} = o_\Pb(1/\sqrt{n}) $$
for $\phi(Z;\beta,\eta) = m(\beta;\eta) + \varphi(Z;\beta,\eta)$, and the true values $(\beta_0,\eta_0)$ satisfy $\Pb\{ \phi(Z;\beta_0,\eta_0)\}=0$, again by definition.

Conditions 1--3 of Lemma \ref{lem:esteq} hold by Assumptions 1--4 of Theorem \ref{thm:densfnrate}, so the result follows by virtue of the fact that 
\begin{align*}
R_n &\equiv \Pb\{ \phi(Z;\beta_0,\widehat\eta) - \phi(Z;\beta_0,\eta_0)\} \\
&= m(\beta_0;\widehat\eta) + \Pb\{ \varphi_m(Z;\beta_0,\widehat\eta) \} \\
&= \Pb \int h'(\widehat{p}_a(y)) \left\{ \frac{\pi_a(X) }{\widehat\pi_a(X)} - 1 \right\}  \Big\{ \eta_a(y \mid X) - \widehat\eta_a(y \mid X) \Big\} \ dy \nonumber \\
& \hspace{.4in} + \frac{1}{2} \int h''(\widehat{p}^*_a(y)) \Big\{ \widehat{p}_a(y) - p_a(y) \Big\}^2 \ dy \\
&\lesssim \| \widehat\pi - \pi \| \| \widehat\eta_a - \eta_a \| + \delta \| \widehat{p}_a - p_a \|^2 
\end{align*}
where the second to last line follows from the result given in equation \eqref{eq:mbrem} of the proof of  Theorem \ref{lem:eifmb}, for the vectors $h$ and $h'$  as defined in \eqref{eq:hdefthm1}, and the last line follows by the Cauchy-Schwarz inequality and boundedness assumptions on $h'$, $h''$, and $1/\widehat\pi$.

\end{document}